\newtheorem{theorem}{Theorem}[section]
\newtheorem{corollary}[theorem]{Corollary}
\newtheorem{lemma}[theorem]{Lemma}
\newtheorem{proposition}[theorem]{Proposition}
\newtheorem{claim}[theorem]{Claim}
\newtheorem{definition}[theorem]{Definition}
\title{Impossibilities for Obviously Strategy-Proof Mechanisms}
\author{Shiri Ron\thanks{Weizmann Institute of Science. Email:  shiriron@weizmann.ac.il. Supported by an Azrieli Foundation fellowship, ISF grant 2185/19, and BSF-NSF grant (BSF number: 2021655, NSF number: 2127781).
Part of the work was done while the author was at residence in Simons Laufer Mathematical Sciences Institute, supported by the National Science Foundation (grant number DMS-1928930). 
}}
\begin{document}
\maketitle

\begin{abstract}
We explore the approximation power of deterministic obviously strategy-proof mechanisms in auctions, where the objective is welfare maximization. A trivial ascending auction on the grand bundle guarantees an approximation of $\min\{m,n\}$ for all valuation classes, where $m$ is the number of items and $n$ is the number of bidders. We focus on two classes of valuations considered “simple”: additive valuations and unit-demand valuations. For additive valuations, Bade and Gonczarowski [EC'17] have shown that exact welfare maximization is impossible. No impossibilities are known for unit-demand valuations.

We show that if bidders' valuations are additive or unit-demand, then no obviously strategy-proof mechanism gives an approximation better than $\min\{m,n\}$. Thus, the aforementioned trivial ascending auction on the grand bundle is the optimal obviously strategy-proof mechanism. These results illustrate a stark separation between the power of dominant-strategy and obviously strategy-proof mechanisms. The reason for it is that for both of these classes the dominant-strategy VCG mechanism does not only optimize the welfare exactly, but is also “easy” both from a computation and communication perspective.

In addition, we prove tight impossibilities for unknown single-minded bidders in a multi-unit auction and in a combinatorial auction. We show that in these environments as well, a trivial ascending auction on the grand bundle is optimal.
\end{abstract}
\thispagestyle{empty}
\clearpage
\setcounter{page}{1}

%\shirinote{Should there be a page break here?}
\section{Introduction}
Consider a second-price auction of a single good:
there is one item, and each bidder $i$ has a value of $x_i$ for the item. Our goal is to give the item to the bidder with the highest value.  We assume that bidders are strategic, meaning that they aim to maximize their {utility}, which is the value they derive from the item minus their payment. 
In a second-price auction, the highest bidder wins the item and pays the value reported by the second highest bidder \cite{Vic61}. By allocating the item to the highest bidder, we maximize the social welfare. 
The implementation where all bidders send their bids simultaneously,
known as the sealed-bid auction,
is dominant-strategy incentive compatible.  

Dominant strategy mechanisms are desirable for multiple reasons: 
they are considered straightforward for participants to understand and follow. 
Thus, the cognitive burden of participating in them is reduced. Another advantage of dominant-strategy mechanisms is their predictability. If
each participant chooses their dominant strategy, the mechanism's behavior becomes easier to anticipate, which is beneficial for decision-making and planning purposes.
Despite being a dominant-strategy mechanism, the sealed-bid second price auction is rarely used in practice 
 (see, e.g., \cite{MM87,ausubel04,ausobel2005}). 
 
However,  the sealed-bid implementation is not the only way to realize a second-price auction. 
Another realization, which is far more prevalent in real-life scenarios \cite{ausubel04,li} is an ascending auction: 
the price gradually increases and players report at each round $r$ whether they are interested in the item given the current price $p_r$. The ascending auction ends when there is one bidder left. This bidder wins the item and pays the price presented in the last round. This implementation is also dominant strategy incentive compatible, similarly to the sealed-bid implementation.  

Even though both auctions satisfy the guarantee of dominant strategy incentive compatibility,
it is generally believed
 that players strategize less in an ascending auction compared to a sealed-bid auction 
(see, e.g., \cite{li,kagel-harstad-levin,ausubel04}). 
\cite{li} explains it as follows: although the sealed-bid implementation of a second-price auction satisfies that reporting the true value of the item is a dominant strategy for all players, this fact is neither obvious nor easy to explain.   
If the participants fail to comprehend that they should follow their dominant strategies, 
the mechanism  loses its desirable qualities. In contrast, the ascending auction is self-explanatory: bidders
readily see that strategizing in it is not beneficial for them.

To shed light on this phenomenon, \cite{li}
has introduced the concept of \emph{obviously strategy-proof mechanisms}, which distinguishes sealed-bid and ascending auctions.  Roughly speaking, obviously strategy-proof mechanisms have dominant strategies that satisfy a special property: it is evident, even for cognitively limited agents, that these strategies are dominant.  
What makes the effectiveness of the dominant strategy in these mechanisms so readily apparent 
is that the worst-case scenario when following this strategy is
always at least as beneficial as the best-case scenario when using any other strategy. 

Due to the appealing nature of 
of obviously strategy-proof mechanisms, they have been studied
 in various contexts, e.g.
 matching  \cite{AG18,T19,PS22,T21},
scheduling \cite{KV19,FMPV19,FV21,FMPV23}
and allocation problems \cite{BG17,PT19,KV19,FMPV19,dKKV20,FV21,FPV21,PS22,FMPV23,FV23}. 
The goal of this paper is to understand the approximation power of obviously strategy-proof mechanisms in auctions.

The case of an auction with a single item is fully understood: 
an ascending implementation of the second-price auction achieves the optimal welfare and is obviously strategy-proof. 
But what happens if there is more than one item in the auction? Is there an obviously strategy-proof mechanism that 
maximizes or  approximates the optimal social welfare?
In other words, what is the power of obviously strategy-proof mechanisms in combinatorial auctions? This is the main question that we explore in this work.  

A combinatorial auction consists of a set of players that we denote with $N$, where $|N|=n$, and a set of  items that we denote with $M$ $(|M|=m)$. 
Each player $i$ has a valuation function $v_i:2^M\to \mathbb{R}^+$ that specifies her value for every subset of items. 
We assume that all valuation functions are \emph{normalized} $(v_i(\emptyset)=0)$ and \emph{monotone} (meaning that for every two bundles of items $T_1,T_2\subseteq M$, if $T_2$ contains  $T_1$, then $v_i(T_2)\ge v_i(T_1)$). The valuation function of each bidder is private and belongs to a domain of valuations $V_i$. We sometimes impose additional restrictions on $V_i$. Our goal is to allocate the items to the bidders in a way that maximizes the social welfare, which is the sum of the values of the bidders given their bundles. We assume that the bidders aim to maximize their \emph{utility}, which is the value from the bundle that they get minus their payment.  

Existing literature about obviously strategy-proof mechanisms in combinatorial auctions focused on the case of single-minded bidders. A valuation is
\emph{single-minded} if there is a single bundle $S^\ast\subseteq M$ and a scalar $\alpha>0$ such that:
$$
\forall X\subseteq M,\quad v(X)=\begin{cases}
\alpha, \quad S^\ast \subseteq X, \\
0, \quad \text{otherwise.}
\end{cases}
$$
A bidder is \emph{single-minded} if all the valuations in her domains are single-minded. 
If all the valuations in the domain of bidder $i$ are parameterized with the same bundle, then we say that she is \emph{known single-minded}. If this is not the case, she is \emph{unknown single-minded}. 

\cite{dKKV20} present a deterministic obviously strategy-proof mechanism that gives an approximation of $\mathcal O(\sqrt{m})$ for known single-minded bidders, and another mechanism with an approximation of $k$ for unknown single-minded bidders, if the largest set desired is of size $k$.

For the simple classes of additive bidders and unit-demand bidders, not much is known. 
\cite{BG17} have shown that even for   two items and two additive bidders,
 there is 
no obviously strategy-proof mechanism that maximizes the social welfare.\footnote{This impossibility holds only for mechanisms that satisfy the standard assumption that bidders that gain no items pay nothing.} 
No impossibilities are known for unit-demand bidders. For both of these classes, achieving a $\min\{m,n\}$ approximation is trivial, using  an
\emph{ascending auction on the grand bundle}: taking all the items in $M$ and running an ascending auction on all of them together.

\subsection*{Our Results I: Impossibilities for Combinatorial Auctions}
We show that for several classes of valuations, including simple valuation classes such as additive and unit demand valuations, no deterministic obviously strategy-proof mechanism has an approximation guarantee that is strictly better than $\min\{m,n\}$. Since this is the approximation ratio obtained by the ascending auction on the grand bundle, all of our impossibility results are tight. 
In addition, all the impossibilities hold for mechanisms with  unbounded computation and communication. 

Our impossibilities hold for mechanisms that satisfy individual rationality and no-negative-transfers, meaning that the bidders' utility is non-negative and that bidders do not get paid by the mechanism. 
We begin our explorations by considering unit-demand bidders:
\begin{theorem}\label{thm-lb-unit}
An auction with $m\ge 2$ items and $n\ge 2$ unit-demand bidders 
has  no
 obviously strategy-proof mechanism 
 that satisfies
 individual rationality and no-negative-transfers
 and
 gives an approximation strictly  better than  $\min\{m,n\}$ to the social welfare.
\end{theorem}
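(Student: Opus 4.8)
The plan is to exhibit, for any putative obviously strategy-proof (OSP) mechanism $\mathcal M$ on $m=n=2$ (the general case reducing to this by adding dummy items/bidders with zero value), a pair of valuation profiles on which $\mathcal M$ must make an allocation whose welfare is at most half of the optimum. The engine is the standard round-by-round analysis of OSP mechanisms via their extensive-form game tree: at every node at which a bidder $i$ is called to act, OSP requires that for the action $a$ dictated to $i$ by her true valuation $v_i$, the \emph{worst} outcome consistent with $a$ is at least as good (in utility) as the \emph{best} outcome consistent with any alternative action $a'$. I would first set up this \enquote{obvious dominance} inequality carefully, together with the no-negative-transfers and individual-rationality constraints, which pin down that a bidder receiving the empty bundle pays $0$ and that every bidder's utility is nonnegative.

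The core of the argument is to track the first time the mechanism \enquote{separates} two carefully chosen valuations of some bidder. Concretely, I would consider unit-demand valuations where bidder $1$ has a high value $H$ on item $a$ and a value $\varepsilon$ (or $0$) on item $b$, versus a valuation where these are swapped, and symmetrically for bidder $2$; the welfare-optimal allocation gives each bidder her favorite item for total welfare $\approx 2H$, whereas any allocation giving one item to a bidder who does not want it has welfare $\approx H$. At the first node where the mechanism's action must differ between bidder $i$'s two candidate valuations, obvious dominance forces a comparison between a guaranteed-payoff branch and a best-case branch; by choosing $H$ large relative to $\varepsilon$ and using that the mechanism cannot yet have committed item-by-item allocations in a way that distinguishes the two types (since their behavior was identical up to this node), I would derive that on one of the two profiles the bidder ends up either empty-handed or holding the wrong item. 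Running this for both bidders and combining the resulting case distinctions yields a profile on which total welfare is at most $H + O(\varepsilon)$, i.e.\ worse than any $\min\{m,n\}-\delta = 2-\delta$ approximation once $\varepsilon \to 0$.

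Two technical points need care. First, the reduction from general $m,n$ to the $2\times 2$ case: I would embed the hard instance by giving all but two bidders the identically-zero valuation in every type and all but two items zero value for everyone, and argue that an OSP mechanism with approximation better than $\min\{m,n\}$ restricted to this sub-instance still yields approximation better than $2$ there, contradicting the base case; one must check that OSP is preserved under such restrictions, which is routine since pruning never-taken branches of the game tree preserves obvious dominance. Second, and this is where I expect the main obstacle, is handling the full combinatorial freedom of the mechanism at the separating node: the mechanism may query bidders in a complicated adaptive order, may bundle items, and may use payments to equalize utilities across branches, so the \enquote{best-case vs.\ worst-case} comparison must be made robust to the mechanism hiding a bad allocation inside a branch whose price is low. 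The resolution is to quantify over \emph{all} valuation completions reachable in each branch and to exploit that the two candidate types of bidder $i$ agree on every quantity the mechanism could have learned so far; choosing the numerical gap between $H$ and $\varepsilon$ after the branch structure is fixed (rather than before) is what makes the inequalities close. Once that separating-node lemma is in hand, the welfare bound follows by a short case analysis, and the theorem is immediate by sending $\varepsilon \to 0$.
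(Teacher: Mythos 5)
There are two genuine gaps. First, your reduction to the $2\times 2$ case does not prove the stated theorem. The theorem rules out approximation strictly better than $\min\{m,n\}$, which for $m,n>2$ is a much weaker requirement than a $2$-approximation; a mechanism that is, say, a $4.9$-approximation when $\min\{m,n\}=5$ restricts on your embedded sub-instance only to a $4.9$-approximation, so the $2\times2$ impossibility (``no better than $2$'') gives no contradiction. The paper avoids this by arguing directly in the general market: it uses a profile in which all $n$ bidders want distinct items, so that even a better-than-$\min\{m,n\}$ guarantee forces at least two bidders to win valuable items, and it scales the ``big'' values by $k^4$ with $k=\max\{m,n\}$ so that this weak guarantee still pins down who must win $e_1$ or $e_2$ in the other profiles.

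Second, the family of hard valuations you propose is too small to yield any impossibility. If each bidder's type space is just the two ``swapped'' unit-demand valuations $(H,\varepsilon)$ and $(\varepsilon,H)$, then the serial dictatorship in which bidder $1$ names her favorite item and takes it at price $0$, and bidder $2$ takes the remaining item at price $0$, is obviously strategy-proof, individually rational, has no negative transfers, and is \emph{exactly} welfare-maximizing on every profile of that domain (on a conflict profile the optimum is $H+\varepsilon$, which this mechanism attains). So no separating-node analysis on those profiles can produce a contradiction, and the step you flag as ``the main obstacle'' is not merely technical: it cannot be closed with these types. The paper's proof needs a richer set per bidder --- a low-value type $v_1^{e_1,one}$, two high single-item types $v_1^{e_1,big},v_1^{e_2,big}$, and crucially a type $v_1^{both}$ with two comparable high values --- together with payment bounds extracted from dominant-strategy incentive compatibility using auxiliary intermediate-value types, and a two-case analysis on whether bidder $1$ wins $e_2$ in the profile $(v_1^{both},v_2^{e_1,big},\ldots)$; it is exactly this interplay of a ``both items valuable'' type with forced losses and bounded payments that creates the clash with obvious dominance at the first divergence vertex.
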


This is the first impossibility for obviously strategy-proof mechanisms for unit-demand bidders. An immediate corollary of Theorem \ref{thm-lb-unit} is that the impossibility extends to every class of valuations that contains unit-demand valuations. In particular, the impossibility holds for all the classes presented in the hierarchy of complement-free valuations of \cite{LLN01-journal} and for arbitrary monotone valuations. 
\begin{corollary}
    An auction with $m\ge 2$ items and $n\ge 2$ bidders with  \emph{gross substitute valuations} or \emph{submodular valuations} or \emph{fractionally subadditive valuations} or \emph{subadditive valuations} or \emph{arbitrary monotone valuations}
    has no
     obviously strategy-proof mechanism 
 that satisfies
 individual rationality and no-negative-transfers
and gives an approximation strictly  better than  $\min\{m,n\}$ to the social welfare.\end{corollary}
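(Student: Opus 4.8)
\emph{Proof proposal.} The plan is to obtain the corollary as an immediate consequence of Theorem~\ref{thm-lb-unit} via a domain-restriction argument, exploiting the fact that each of the five classes named in the statement contains every unit-demand valuation. First I would recall the relevant inclusions: a unit-demand valuation is a gross substitutes valuation, and in the complement-free hierarchy of \cite{LLN01-journal} one has gross substitutes $\subseteq$ submodular $\subseteq$ fractionally subadditive $\subseteq$ subadditive, while all of these (under the paper's standing normalization and monotonicity assumptions) are in particular monotone valuations. Hence the class of unit-demand valuations is a subclass of each of gross substitutes, submodular, fractionally subadditive, subadditive, and arbitrary monotone valuations.

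Next, suppose toward a contradiction that for one such class $\mathcal{C}$ and for some $m \ge 2$, $n \ge 2$ there were an obviously strategy-proof mechanism $\mathcal{M}$ with domains $V_i \subseteq \mathcal{C}$ that is individually rational, has no negative transfers, and approximates the optimal welfare strictly better than $\min\{m,n\}$. I would then view $\mathcal{M}$ as an extensive-form game form together with its obviously dominant strategies, and restrict attention to the sub-domains $V_i'$ consisting of the unit-demand valuations contained in $V_i$ (taking, for the sake of the reduction, $V_i$ to be the full unit-demand class, or whatever unit-demand sub-domain the proof of Theorem~\ref{thm-lb-unit} actually needs). The central thing to verify is that the restricted mechanism is still obviously strategy-proof: at every information set of an agent, the infimum of that agent's utility over continuations consistent with the prescribed strategy can only \emph{increase} when the other agents' type space shrinks, whereas the supremum over continuations of an arbitrary deviation can only \emph{decrease}; therefore the inequality that defines obvious dominance is preserved under passing to a subdomain. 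Individual rationality, no-negative-transfers, and the approximation ratio are inherited for free, being properties quantified over type profiles, of which we are merely retaining a subset.

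Putting the two steps together, the restricted mechanism would be an obviously strategy-proof, individually rational, no-negative-transfers mechanism for $m \ge 2$ items and $n \ge 2$ unit-demand bidders with approximation strictly better than $\min\{m,n\}$, contradicting Theorem~\ref{thm-lb-unit}; and since the ascending auction on the grand bundle achieves $\min\{m,n\}$ for all the listed classes, the bound is tight. I expect the only real subtlety to be the bookkeeping in the middle step --- stating precisely in what sense an obviously strategy-proof mechanism ``restricts'' to a smaller domain and confirming that obvious dominance, being a worst-case-versus-best-case condition, is monotone under shrinking the opponents' type space. Everything else is routine.
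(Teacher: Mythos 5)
Your proposal is correct and matches the paper's intent: the paper treats the corollary as immediate from Theorem \ref{thm-lb-unit} precisely because unit-demand valuations are contained in each listed class, so any mechanism for the larger domain restricts to one for the unit-demand subdomain. Your care about whether obvious dominance survives domain restriction is fine but even simpler under the paper's definitions, since obvious dominance is quantified over opponents' \emph{behaviors} (which are determined by the game tree, not the type space), so shrinking the valuation domains leaves the defining inequalities untouched.
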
 

Our second main result concerns the case of additive bidders. For this class, \cite{BG17} prove that exact welfare maximization is impossible (under the assumption that losers pay zero). We show that:  
\begin{theorem}\label{thm-lb-add}
An auction with $m\ge 2$ items and $n\ge 2$ additive bidders 
has  no
 obviously strategy-proof mechanism 
 that satisfies
 individual rationality and no-negative-transfers
 and
 gives an approximation strictly  better than  $\min\{m,n\}$ to the social welfare. 
\end{theorem}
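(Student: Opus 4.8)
The plan is to mirror the proof of Theorem~\ref{thm-lb-unit}: exploit the extensive‑form description of an obviously strategy‑proof mechanism, locate its first informative query, and pit the obvious‑dominance comparison at that query against the approximation requirement, now feeding in a family of \emph{additive} profiles in place of the unit‑demand ones. Write $k=\min\{m,n\}$. First I would reduce to $m=n=k$: restrict every bidder's domain to the additive valuations supported on a fixed size‑$k$ set of items and, if $n>m$, fix the remaining $n-k$ bidders at the all‑zero valuation (which is additive). Obvious strategy‑proofness, individual rationality and no‑negative‑transfers are all inherited on this sub‑domain, and the optimal welfare of any restricted profile is attained using only the $k$ chosen items and the $k$ distinguished bidders, so it suffices to rule out a ratio below $k$ for $k$ additive bidders and $k$ items.

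Next I would normalize the game tree of a hypothetical such mechanism $\mathcal{M}$. Every query above the first branching node of $\mathcal{M}$ offers a single used action, so it reveals nothing; hence at the first node $h^\star$ where the active bidder — relabel her bidder~$1$ — is offered two used actions, the compatible type set of every bidder is still the full additive domain, and the subgame rooted at $h^\star$ is again an obviously strategy‑proof, individually rational, no‑negative‑transfers mechanism on the unrestricted profile space. If $\mathcal{M}$ has no branching node it outputs a fixed allocation $(S_1,\dots,S_k)$ with zero payments; pick an item $j^\star$ and a bidder $i^\star$ with $j^\star\notin S_{i^\star}$ — possible since $k\ge 2$ — and run $\mathcal{M}$ on the profile where $v_{i^\star}=H\cdot e_{j^\star}$ and all other valuations are $0$: the welfare is $0$ while the optimum is $H$, already contradicting a finite ratio. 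So we may assume the root of $\mathcal{M}$ queries bidder~$1$ with two used actions.

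The heart of the argument is at this root query, and the relevant profiles are the \emph{concentrated} additive profiles: for a permutation $\sigma$ of $\{1,\dots,k\}$ let bidder~$i$ have valuation $H\cdot e_{\sigma(i)}$; then the optimum is $kH$, attained \emph{only} by the allocation $\sigma(i)\mapsto i$, so a mechanism with ratio below $k$ must output exactly that allocation. For the obvious‑dominance step I would use two ingredients. First, because every opponent of bidder~$1$ can be made arbitrarily strong — valuing every item at some $H'\gg H$ — after any root action that does not literally clinch an item for bidder~$1$ there is a continuation leaving her with the empty bundle, and then individual rationality together with no‑negative‑transfers forces her worst‑case utility after that action to be $0$. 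Second, the approximation requirement, applied on the profile where $v_1=H\cdot e_j$ and every other valuation is $0$, forces the continuation actually followed by the type $H\cdot e_j$ to deliver item~$j$ to bidder~$1$ at a negligible price. Playing the first ingredient against the second shows that the types $H\cdot e_1,\dots,H\cdot e_k$ cannot all share a root action, and, more sharply, that a root action shared by $H\cdot e_p$ and $H\cdot e_q$ must lead to a subtree capable of giving bidder~$1$ item~$p$ in some branches and item~$q$ in others; running this simultaneously for all $k$ bidders, and combining it with the uniqueness of the optimal allocation on the concentrated profiles, is meant to yield the contradiction — the root partition funnels some bidder into a subtree that, for some consistent permutation $\sigma$, cannot hand her the item $\sigma(i)$ that the approximation guarantee demands.

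The step I expect to be the main obstacle is the bookkeeping buried in that last paragraph. Obvious dominance compares bidder~$1$'s \emph{worst} outcome to her \emph{best} outcome over the \emph{opponents'} types and continuations, so I must verify that the strong‑opponent profiles used to pin the worst case to $0$ and the weak‑opponent profiles used to invoke the approximation bound are both genuinely reachable after the relevant root actions, and I must rule out the residual case in which bidder~$1$'s concentrated types never separate anywhere in the tree (the subtree analysis then has to be pushed deeper, or closed by a symmetrization between bidders). Establishing that each subtree after a root action is simultaneously rich enough to starve bidder~$1$ and to realize the unique near‑optimal allocation demanded by the concentrated profiles is where the real work lies; the additive structure is what keeps this under control, since with weak opponents an optimal allocation always hands bidder~$1$ essentially everything she values and with disjoint concentrated profiles the optimum is unique.
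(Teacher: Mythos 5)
Your high-level plan (locate a first informative/divergence vertex, then play the obvious-dominance worst-vs-best comparison against the approximation guarantee using strong and weak opponents) is the same skeleton as the paper's proof, but two of your concrete steps do not hold as stated. First, on a concentrated permutation profile (bidder $i$ values only item $\sigma(i)$, all at the common value $H$) the optimum is $kH$, so beating ratio $k$ only forces welfare strictly above $H$, i.e.\ at least \emph{two} bidders matched --- it does not force the unique perfect matching. Your closing argument explicitly invokes ``the uniqueness of the optimal allocation on the concentrated profiles,'' so the final contradiction does not go through as described. This is why the paper works with valuations at separated scales ($1$, $2k^2$, $3k^4$): the forced allocations there are forced by large gaps, not by exact uniqueness, and only ever for one or two distinguished items.

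Second, the ``strong-opponent'' ingredient is unsubstantiated. Giving every opponent value $H'\gg H$ on every item does \emph{not} force the mechanism to starve bidder $1$ (handing her one item still loses only an $O(1/m)$ fraction of the optimum), so the existence of a continuation with the empty bundle after an arbitrary root action is not established; moreover, IR and no-negative-transfers in this paper are properties of the \emph{realized} outcomes, so they constrain only leaves that are on-path for some type profile --- you cannot invoke them for arbitrary off-path continuations, which is why the paper's comparisons are all routed through Lemma \ref{lemma-bad-leaf-good-leaf} using concrete profiles whose strategy paths pass through the vertex. Even when bidder $1$ does win her item, you must rule out the possibility that she pays a lot for it; the paper handles exactly this via a dominant-strategy payment lower bound of $2k^3+k^2$ obtained from an auxiliary intermediate-value type (Lemma \ref{lemma-small-pay-add}, parts 2--3), and your sketch has no analogue. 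Relatedly, with a single scale $H$, IR only bounds the winner's payment by $H$, so ``delivers item $j$ at a negligible price'' does not follow; the cross-type comparisons need types at different scales so that a leaf that is on-path for a small type is strictly better, under the big type, than the big type's own on-path outcome. Finally, the ``residual case'' you flag (bidder $1$'s types never separating) is genuinely open in your write-up; the paper closes it by building restricted families for \emph{two} bidders, proving divergence over the joint family (Lemma \ref{lemma-not-same-leaf-add}), and noting the first divergence vertex must belong to one of them, after which the pooling argument applies symmetrically.
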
 

We proceed by showing that we can circumvent the impossibility in Theorem \ref{thm-lb-add} for a restricted class of additive valuations: 
\begin{theorem}\label{thm-ub}
Let $x_h>x_l$ be two positive scalars.  Then, for every auction with additive bidders whose values for each item belong in the set $\{0,x_{l},x_{h}\}$, there exists an obviously strategy-proof mechanism that maximizes the social welfare.     
\end{theorem}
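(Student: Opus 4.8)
The plan is to exhibit an explicit deterministic two‑round posted‑price mechanism and then check the four requirements (obvious strategy‑proofness, welfare optimality, individual rationality, no‑negative‑transfers). Maintain the set $A$ of still‑unallocated items, initialized to $A=M$. In Round~1 iterate over the bidders $i=1,\dots,n$ in order; for bidder $i$, go through the items $j\in A$ one by one and ask the binary question ``is item $j$ worth $x_h$ to you?'', and whenever the answer is ``yes'' allocate $j$ to $i$ at price $x_l$ and delete $j$ from $A$. In Round~2 iterate over the bidders again, this time asking ``is item $j$ worth at least $x_l$ to you?'' and, on a ``yes'', allocating $j$ at price $x_l$. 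Each bidder pays $x_l$ per item she receives and nothing otherwise, so a bidder who receives no item pays $0$. The recommended strategy answers every query truthfully.

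First I would dispose of the easy properties. Under truthful play, after Round~1 every item valued $x_h$ by some bidder has been given to such a bidder (the lowest‑indexed one), and after Round~2 every remaining item valued $x_l$ by some bidder has been given to such a bidder, while an item nobody values positively stays unallocated; hence every item goes to a highest‑value bidder and the welfare is maximized. Since every payment equals $x_l$ against a value of $x_l$ or $x_h$, and losers pay $0$, individual rationality and no‑negative‑transfers are immediate.

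The substantive step is obvious strategy‑proofness, and the two design choices that make it work are that (a) within each round the iteration is over bidders, not over items, and (b) both rounds charge the same price $x_l$. Because of (a), once bidder $i$'s Round‑1 turn starts no other bidder acts until that turn ends, and no other bidder ever takes an allocated item away from $i$; so at a Round‑2 information set of $i$ her final outcome is a deterministic function of the fixed history and her own remaining choices, and the recommended action is trivially obviously dominant there. At a Round‑1 information set the only component of $i$'s outcome not pinned down by the fixed history and her own choices is what she obtains in Round~2, and under truthful continuation that component is always $0$ (she would claim only $x_l$‑items, each priced $x_l$), regardless of how later bidders behave; hence the worst case of truthful play from a Round‑1 node equals $i$'s deterministic Round‑1 payoff. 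I would then rule out the two kinds of deviation at such a node. If $i$ declines an item $j$ with $v_i(\{j\})=x_h$ in Round~1, then in the best case $j$ is still available at her Round‑2 turn and she reclaims it for $x_l$ --- and because the two prices coincide this recovers exactly the $x_h-x_l$ she forwent, so the deviation's best case only ties the truthful worst case. If instead $i$ claims an item $j$ with $v_i(\{j\})<x_h$ in Round~1, she pays $x_l\ge v_i(\{j\})$ for it, which weakly (and, on a $0$‑item, strictly) lowers her payoff. Hence truthful play is obviously dominant.

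I expect the main obstacle to be pinning down the price exactly: if a Round‑1 clinch cost more than $x_l$, an $x_h$‑bidder would strictly prefer to wait and take the item for $x_l$ in Round~2, breaking obvious dominance; if it cost less than $x_l$, an $x_l$‑bidder would be tempted to grab items already in Round~1, destroying welfare optimality. The value $x_l$ is precisely the knife‑edge at which the ``deviate now, recover later'' comparison becomes an equality, and making that comparison rigorous --- together with checking that nothing is gained by deviating within Round~2 or by exploiting the interplay between the two rounds --- is where the care lies. One point to flag, exactly as for ascending auctions, is that truthful play is only \emph{weakly} obviously dominant (an $x_l$‑bidder is indifferent as to when she claims her items), so the statement is to be read as being about the mechanism together with its recommended obviously‑dominant strategy profile.
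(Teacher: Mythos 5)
Your proposal is correct and follows essentially the same route as the paper: the same two-round serial posted-price mechanism with all prices set to $x_l$, with obvious dominance resting on the same observation that a bidder's utility is exactly $(x_h-x_l)$ per high-valued item she clinches, while round-2 acquisitions contribute zero, so truthful play's worst case already matches any deviation's best case. Your write-up merely spells out the deviation comparisons (and the weak-dominance/tie caveat) in more detail than the paper does.
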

The obviously strategy-proof mechanism of Theorem \ref{thm-ub} 
is a posted-price mechanism, where  
the price of every item is set to be $x_l$. The mechanism has two rounds: in the first round, each player takes only items for which she has a value of $x_h$, and in the second one she takes the remaining items for which she has a value of $x_l$. 
Informally, what makes this mechanism 
obviously strategy-proof is the fact that bidders increase their profit only by taking items for which their value is $x_h$. 

We now switch gears by proving impossibilities for unknown single-minded bidders:  
\begin{theorem}\label{cor-ca-sm}
A combinatorial auction  with $m\ge 2$  items and $n\ge 2$ unknown single-minded bidders    has no
     obviously strategy-proof mechanism 
 that satisfies
 individual rationality and no-negative-transfers
and gives an approximation strictly  better than  $\min\{m,n\}$ to the social welfare.
\end{theorem}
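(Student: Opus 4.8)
The plan is to leverage the impossibility for unit-demand bidders (Theorem~\ref{thm-lb-unit}) by embedding an essentially unit-demand-like instance inside the unknown-single-minded domain. The key observation is that an unknown single-minded bidder whose admissible desired bundles are all the singletons $\{j\}$ for $j \in M$ behaves very much like a unit-demand bidder: she derives positive value only from obtaining some item she ``declares'' she wants, and the mechanism must elicit which single item that is. So I would first fix a specific family of unknown-single-minded valuation profiles that mirrors the hard instances used in the proof of Theorem~\ref{thm-lb-unit}, and then argue that any obviously strategy-proof, individually rational, no-negative-transfers mechanism on this restricted family would, by a reduction, yield an obviously strategy-proof mechanism on a corresponding unit-demand instance with the same approximation guarantee, contradicting Theorem~\ref{thm-lb-unit}.

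Concretely, the steps I would carry out are: (1) Describe the restricted domain: each of the $n$ bidders has a domain consisting of valuations parameterized by $(S^\ast, \alpha)$ where $S^\ast$ ranges over a suitable set of bundles (at minimum, singletons) and $\alpha$ over a suitable set of scalars; verify this is a legitimate unknown-single-minded domain with $m\ge2$, $n\ge2$. (2) Observe that welfare maximization on this domain is combinatorially the same problem as welfare maximization for unit-demand bidders with values restricted to the corresponding matrices (a single-minded bidder wanting $\{j\}$ with value $\alpha$ is, for allocation purposes, a unit-demand bidder valuing item $j$ at $\alpha$ and everything else at $0$). (3) Given a candidate OSP mechanism $\mathcal{M}$ for the single-minded instance, construct the induced mechanism for the unit-demand instance: when a unit-demand bidder would take item $j$ at value $\alpha$, have her play in $\mathcal{M}$ as the single-minded bidder with parameters $(\{j\},\alpha)$. (4) Check that the extensive-form game tree, the allocation, and the payments all carry over, and crucially that the OSP property is preserved — the worst-case/best-case comparison that certifies obvious dominance transfers verbatim because the utilities agree. (5) Conclude the approximation ratio is preserved, so Theorem~\ref{thm-lb-unit} forces the ratio to be no better than $\min\{m,n\}$; tightness follows from the grand-bundle ascending auction as before.

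The main obstacle I anticipate is step (4): making sure the reduction respects the obviously-strategy-proof structure, not merely strategy-proofness. An OSP mechanism comes with an extensive-form implementation and a designated strategy for each type; I must check that the map from unit-demand types to single-minded types is such that (i) at every information set where the original mechanism queries a bidder, the set of available actions and the continuation are faithfully reproduced, and (ii) the ``obvious dominance'' inequality — worst case under the honest strategy at least the best case under any deviation — is inherited. A subtlety is that the single-minded domain might be strictly richer (bundles that are not singletons), so I may instead need to argue directly, adapting the proof of Theorem~\ref{thm-lb-unit} rather than invoking it as a black box: identify the first point in the extensive-form game where some bidder is ``committed'' to a bundle/price, do a case analysis on whether an item has been promised at a low price, and exhibit two valuation profiles that are indistinguishable up to that point but on which any such commitment costs a factor of $\min\{m,n\}$ in welfare. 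If a clean black-box reduction to Theorem~\ref{thm-lb-unit} does not go through because of domain-richness issues, the fallback is this self-contained adversary argument, which is the more robust route and likely how the impossibility is actually established.
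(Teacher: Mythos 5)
There is a genuine gap: your primary plan---embedding the unit-demand impossibility via single-minded valuations whose desired bundles are singletons---cannot work. First, the hard instances in the proof of Theorem~\ref{thm-lb-unit} are not made of ``one valuable item'' types only: the pivotal valuation $v_1^{both}$ assigns two different positive values to $e_1$ and $e_2$, and no single-minded valuation (singleton-demand or otherwise) reproduces it, so the map ``unit-demand type $\mapsto$ single-minded type $(\{j\},\alpha)$'' is simply undefined on exactly the types the impossibility needs. Second, and more decisively, the restricted family you propose cannot carry any impossibility at all: for unknown single-minded bidders whose desired sets all have size $1$, the mechanism of \cite{dKKV20} is obviously strategy-proof and $k$-approximates welfare with $k=1$, i.e., it is (essentially) optimal on that subdomain. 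So an OSP mechanism for the full single-minded domain, restricted to singleton demands, yields no contradiction with anything---Theorem~\ref{thm-lb-unit} does not apply to the corresponding ``at most one valuable item'' unit-demand subdomain, precisely because its proof needs $v_1^{both}$.

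The paper's proof goes the opposite way on bundle size: the crucial valuation is $v_i^{all}$, a single-minded valuation demanding the \emph{grand bundle} $M$ at a huge value $k^4$, played against singleton-demand valuations at two scales ($v_i^{one}$ with value $1$ and $v_i^{ONE}$ with value $k^2+1$). One then takes the first vertex $u$ at which the designated strategies diverge on this small valuation family, notes the speaker there must be bidder $1$ or $2$, and uses Lemma~\ref{lemma-bad-leaf-good-leaf} (together with the individual-rationality/no-negative-transfer payment bounds of Lemma~\ref{lemma-small-pay}) to show the obviously dominant strategy must send the \emph{same} message at $u$ for $v_1^{one}$, $v_1^{ONE}$, and $v_1^{all}$---contradicting the divergence; this is exactly the Theorem~\ref{thm-lb-mua} argument transplanted to bundles. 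Your fallback (``argue directly, identify the first commitment point'') points in roughly this direction but is not developed: it supplies neither the large-bundle valuations that create the tension nor the pooling argument, and as stated (a case analysis on whether an item has been promised at a low price) it does not match the structure that makes the argument go through. Note also that the paper explicitly flags that its proof uses demanded sets of size $m$, which is why the $k$-dependence in \cite{dKKV20} is necessary---a signal that singleton-demand instances alone can never yield this theorem.
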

We remind that \cite{dKKV20} 
provide an obviously strategy-proof mechanism that $k$-approximates the welfare with unknown single-minded bidders, where $k$ is the size of the largest desirable set.  Indeed, in the proof of Theorem \ref{cor-ca-sm}, we use valuations whose demanded set is of size $m$, so our proof shows that the dependence of the approximation ratio  on the size of the largest desired set is necessary.  

Another implication of Theorem \ref{cor-ca-sm} is that obviously strategy-proof mechanisms are more powerful
for known single-minded bidders compared to unknown single-minded bidders: for known single-minded bidders, there is an 
$\mathcal{O}(\sqrt m)$ approximation to the welfare \cite{dKKV20}, whilst we show that for unknown single-minded bidders, obviously strategy-proof mechanisms cannot get an approximation better than $\min\{m,n\}$.  
\subsection*{Our Results II: Impossibilities for Multi-Unit Auctions}
We wrap up by providing an
equivalent impossibility for unknown single-minded bidders in the multi-unit auction.  
In the multi-unit auction setting, often called the knapsack auction, items are identical 
and the valuation function of each player $v_i:[m]\to \mathbb R^+$ maps quantities of items to values. 

For multi-unit auctions, i.e., auctions where all the items are identical, two special cases were previously considered. The first one is the \textquote{single-parameter} case, with known single-minded bidders. 
For this class of valuations, there exist obviously strategy-proof mechanisms that give ${\mathcal O}(\min\{\log m,\log n\})$ approximation to the welfare \cite{DGR14,CGS23}, and no mechanism  gives an approximation better than $\Omega(\sqrt{\log n})$ of the welfare \cite{FPV21}.  
For valuations that exhibit deceasing marginal values,\footnote{A valuation $v$ satisfies \emph{decreasing marginal values} if for every quantity $j\in \{0,\ldots,m-1\}$, $v(j+1)-v(j)\le v(j)-v(j-1).$} \cite{GMR17} have shown an obviously strategy-proof clock auction that gives an approximation of $\mathcal O(\log n)$ to the welfare and have claimed that no obviously strategy-proof mechanism gives an approximation better than $\sqrt{2}$ to the welfare.

We now prove an impossibility for unknown-single-minded bidders:
\begin{theorem}\label{thm-lb-mua}
 A multi-unit auction with $m\ge 2$  items and $n\ge 2$ unknown single-minded bidders 
   has no
     obviously strategy-proof mechanism 
 that satisfies
 individual rationality and no-negative-transfers
and gives an approximation strictly  better than  $\min\{m,n\}$ to the social welfare.  
\end{theorem}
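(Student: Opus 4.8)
The plan is to adapt the lower-bound argument used for unknown single-minded bidders in combinatorial auctions (Theorem~\ref{cor-ca-sm}) to the multi-unit setting. Since a multi-unit auction with $m$ identical units is a special case of a combinatorial auction only after we lose the ability to distinguish items, the instances used in the combinatorial proof — which presumably rely on bidders demanding specific \emph{sets} of size $m$ — must be replaced by instances in which each bidder demands a \emph{quantity}. Concretely, I would work with unknown single-minded bidders whose demanded quantity can be any value in $\{1,\dots,m\}$ (in particular $m$ itself), with a common scaling parameter $\alpha$. The key observation is that with identical units, a single-minded bidder demanding quantity $q$ is happy with \emph{any} $q$ units, so the mechanism has more flexibility than in the combinatorial case; the hard part will be showing this extra flexibility does not help.

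The core of the argument is the standard obstruction for obvious strategy-proofness: an obviously strategy-proof mechanism is, by \cite{li}'s characterization, implementable by an extensive-form game where at each of a bidder's \emph{earliest} decision nodes, the infimum over continuations of the chosen action's payoff must weakly exceed the supremum over continuations of any deviating action's payoff. I would consider the first bidder (in the game-tree order) to move, and two types for her: one that demands the grand bundle (quantity $m$) with value $\alpha_{\text{big}}$, and one that demands a single unit with value $\alpha_{\text{small}}$. At this bidder's first nontrivial decision node, the ``worst case'' of committing to behavior consistent with the high type must beat the ``best case'' of behavior consistent with the low type, and vice versa. By choosing $\alpha_{\text{big}},\alpha_{\text{small}}$ and the other bidders' valuations adversarially — e.g.\ letting the remaining bidders each want a single unit at a value slightly below $\alpha_{\text{small}}/m$ so that serving all of them is the welfare-optimal solution unless the grand-bundle bidder is present — I would derive that no matter what the mechanism commits this bidder to, some branch yields welfare off from optimal by a factor approaching $\min\{m,n\}$.

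The main steps, in order, are: (i) set up the family of hard instances with unknown single-minded bidders, parameterized so that the optimal welfare is either $\approx\alpha$ (grand-bundle bidder served) or $\approx\alpha$ spread over the $n$ small bidders, making the approximation gap $\min\{m,n\}$; (ii) invoke the extensive-form characterization of obvious strategy-proofness and locate the first decision point where the mechanism must separate the ``grand bundle'' type from a ``single unit'' type of the same bidder; (iii) apply the inf-over-continuations $\ge$ sup-over-continuations inequality in both directions to pin down the allocation/payment the mechanism is forced to offer along each branch; (iv) conclude that one of the two branches is forced into a suboptimal allocation, using individual rationality and no-negative-transfers to rule out payment-based escapes; and (v) verify the construction genuinely needs only multi-unit (quantity) valuations, so the argument does not secretly exploit item identities. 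I expect step~(iii) to be the technical crux: in the multi-unit world the mechanism can fulfil a quantity-$q$ demand with any $q$ of the $m$ units, so ruling out \emph{all} ways the mechanism might partially satisfy both a large and a small demand simultaneously — rather than just the ``clean'' all-or-nothing allocations available in the combinatorial reduction — is where the case analysis will be heaviest, and where the $m\ge 2$, $n\ge 2$ hypotheses get used.
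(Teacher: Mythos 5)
There is a genuine gap, and it is concentrated exactly where you flag the ``technical crux'': your family of hard instances is too weak to force a contradiction. In your construction only one bidder (the pivotal one) has two possible types --- grand bundle at $\alpha_{\text{big}}$ versus one unit at $\alpha_{\text{small}}$ --- while every other bidder has a single fixed low-value single-unit type. On that family a trivial mechanism is obviously strategy-proof, individually rational, charges nothing, and is \emph{exactly} welfare-optimal: let bidder $1$ pick either all $m$ units or one unit at price $0$, then serially hand the remaining units to the other bidders at price $0$. (For the big type, deviating to ``one unit'' yields utility $0\le \alpha_{\text{big}}$; for the small type, both branches yield utility exactly $\alpha_{\text{small}}$; welfare is optimal in both cases.) So no inf-over-continuations versus sup-over-continuations argument on those instances can succeed: with the other bidders' behaviors fixed on the family, the only on-path leaves through the separating vertex are the two leaves of bidder $1$'s two reports, and the resulting constraints collapse to ordinary dominant-strategy constraints, which are satisfiable. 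In particular there is no on-path continuation after the grand-bundle message that is \emph{bad} for the pivotal bidder, and no type of hers that strictly prefers the leaf on the other branch.

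The paper's proof of Theorem \ref{thm-lb-mua} supplies precisely the ingredients your outline is missing. First, it uses the all-small profile $(v_1^{one},\ldots,v_n^{one})$ and the approximation guarantee to find \emph{two} bidders who win items, and takes the \emph{first} vertex $u$ at which the strategies diverge over a restricted family in which \emph{both} of these bidders have three types $\{v^{one},v^{ONE},v^{all}\}$ (so the divergence vertex must belong to one of them). Second, the other winning bidder's grand-bundle type $v_2^{all}$ creates the needed bad on-path continuation: on $(v_1^{ONE},v_2^{all},v_3^{one},\ldots)$ the approximation forces bidder $1$ to get nothing and, by IR plus no-negative-transfers, to pay $0$. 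Third, the intermediate type $v_1^{ONE}$ (value $k^2+1$ for a single unit, $k=\max\{m,n\}$) is the lens for both comparisons: it gets utility at least $k^2$ at the all-small leaf (wins a unit, pays at most $1$) and utility at least $1$ at the $(v_1^{all},v_2^{one},\ldots)$ leaf, where the payment bound $k^2$ is itself derived by a dominant-strategy comparison with an auxiliary valuation of value $k^2$ (Lemma \ref{lemma-small-pay}). Lemma \ref{lemma-bad-leaf-good-leaf} then forces $v_1^{ONE}$ to send the same message at $u$ as $v_1^{one}$ (Claim \ref{claim-oneone-same}) and as $v_1^{all}$ (Claim \ref{claim-one-all-same}), so all three types agree at $u$, contradicting that $u$ is a divergence vertex. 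A two-type comparison between $v^{one}$ and $v^{all}$ alone does not work (the low type cannot be shown to profit from the all-items leaf, and the big type values the other branch at $0$), which is why the intermediate high-value single-unit type and the second bidder's competing grand-bundle type are essential; also note the paper proves the multi-unit case directly and derives the combinatorial-auction version from it, the reverse of the reduction you propose.
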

Similarly to combinatorial auctions, Theorem \ref{thm-lb-mua} implies that obviously strategy-proof mechanisms are more powerful for known single-minded than for unknown single-minded (because of the obviously strategy-proof mechanism of \cite{DGR14} for known single-minded bidders that gives $\Theta(\log m)$ approximation to the welfare).
Moreover, Theorem \ref{thm-lb-mua} implies an impossibility for the class of arbitrary monotone valuations:
\begin{corollary}\label{cor-mua-sm}
A multi-unit auction with $m\ge 2$  items and $n\ge 2$ bidders with arbitrary monotone valuations    has no
     obviously strategy-proof mechanism 
 that satisfies
 individual rationality and no-negative-transfers
and gives an approximation strictly  better than  $\min\{m,n\}$ to the social welfare.
\end{corollary}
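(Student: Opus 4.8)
The plan is to deduce Corollary~\ref{cor-mua-sm} from Theorem~\ref{thm-lb-mua} by a domain-inclusion argument, using only that OSP (together with individual rationality, no-negative-transfers, and the approximation guarantee) is preserved under restricting the type space. First I would observe that in the multi-unit setting every single-minded valuation is already a normalized monotone valuation: a single-minded valuation with desired quantity $k\ge 1$ and scalar $\alpha>0$ is the function $j\mapsto \alpha\cdot\mathbf{1}[j\ge k]$ on $[m]$, which vanishes at $j=0$ and is non-decreasing in $j$. Hence the domain of any unknown single-minded bidder is a subset of the domain of all normalized monotone valuations over $[m]$.

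Next, suppose toward a contradiction that $\mathcal{M}$ is an obviously strategy-proof mechanism for $n$ bidders with arbitrary monotone valuations that satisfies individual rationality and no-negative-transfers and gives an approximation strictly better than $\min\{m,n\}$. Consider the mechanism obtained from $\mathcal{M}$ by shrinking each bidder $i$'s type space to the unknown single-minded subdomain used in the proof of Theorem~\ref{thm-lb-mua} — the extensive-form game and the prescribed strategies are unchanged; only some types (hence some branches) are removed. I would then check that this restriction inherits every relevant property. Individual rationality and no-negative-transfers are conditions that must hold at each realized type profile, so they hold a fortiori on a sub-collection of profiles; likewise the approximation guarantee holds on every sub-collection of instances. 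Obvious strategy-proofness is preserved because removing types only shrinks, at every node owned by a bidder $i$, the set of continuation histories that are reachable: the worst-case utility of the action prescribed by $i$'s true type can only increase, while the best-case utility of any deviating action can only decrease, so each defining OSP inequality continues to hold. Therefore the restriction is an OSP, individually rational, no-negative-transfers mechanism for $n$ unknown single-minded bidders with approximation strictly better than $\min\{m,n\}$, contradicting Theorem~\ref{thm-lb-mua}.

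The only step that deserves to be written out carefully is this monotonicity of the OSP condition under domain restriction, namely that shrinking the set of types available to the other players (and to $i$ itself along future nodes) weakly raises every ``min over reachable continuations'' and weakly lowers every ``max over reachable continuations'' that appears in the obvious-dominance comparison, so that no OSP constraint can be newly violated; everything else is immediate. I would also note that the bound is tight, since the ascending auction on the grand bundle is itself an OSP mechanism for arbitrary monotone valuations attaining approximation $\min\{m,n\}$.
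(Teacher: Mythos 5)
Your proposal is correct and takes essentially the same route as the paper, which treats the corollary as an immediate consequence of Theorem~\ref{thm-lb-mua} by domain inclusion: single-minded valuations are normalized and monotone, and restricting each bidder's type space preserves individual rationality, no-negative-transfers, the approximation guarantee, and obvious strategy-proofness. The one step you flag as delicate is in fact automatic under the paper's formulation, since Definition~\ref{def-obvs-behavior} quantifies over opponents' \emph{behaviors} (arbitrary actions in the fixed game tree) rather than their types, so shrinking the valuation domains only removes obvious-dominance constraints and cannot create new violations.
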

The impossibilities in Theorem \ref{thm-lb-mua} and Corollary \ref{cor-mua-sm} are tight since the ascending auction on the grand bundle gives $\min\{m,n\}$ approximation to the welfare. 
\subsection*{Open Questions} First, we have yet to understand the power of obviously strategy-proof mechanisms in \textquote{single-parameter} domains.\footnote{For a discussion of the definition of single-parameter domains, see \cite{BDR23}.} 
In particular, we do not know yet whether it is possible to improve  the approximation guarantee of $O(\sqrt m)$ of the mechanism of \cite{dKKV20} for known single-minded bidders. 
In addition, for multi-unit auctions with decreasing marginal values, it is still uncertain whether obviously strategy-proof mechanisms can give a constant approximation of the welfare or not.

\subsection*{Structure of the Paper}
In Section \ref{sec-prem}, we give some necessary preliminaries. Subsequently, we 
prove impossibilities for unknown single-minded bidders both in a combinatorial auction and in a multi-unit auction
(Section \ref{sec-mua}). In Section \ref{sec-additive-unit}, we explore obviously strategy-proof mechanisms for additive valuations: we provide an impossibility and point out that for a sufficiently small set of valuations, there exists an obviously strategy-proof mechanism. The proof of the impossibility for unit-demand valuations, which is very similar to the proof for additive valuations, is  in Appendix \ref{appsec-unit}.

\section{Preliminaries and Basic Observations}\label{sec-prem}
We use communication protocols to represent mechanisms. A protocol is  visualized as a tree, specifying which player speaks  at each node and determining the subsequent node based on the message. 
 Each leaf in the protocol is associated with an allocation of the items to the bidders and with a payment for each player.  Throughout this paper, we discuss only deterministic protocols. 
Throughout the paper, we assume that all mechanisms satisfy \emph{perfect information}, i.e., that all messages sent are observed by all the agents,\footnote{In the communication complexity literature, this assumption is called the blackboard model.} and also that they are  \emph{sequential}, meaning that exactly one bidder sends a message at each node. We denote with $\mathcal N_i$ all the nodes in which player $i$ sends a message.\footnote{All the results in this paper hold regardless of these assumptions. 
We can assume perfect information without loss of generality because of the revelation principle of  
\cite{BG17} for obviously strategy-proof mechanisms. As for the assumption of sequential mechanisms, it is easy to verify that every social choice function implemented by an obviously strategy-proof mechanism that allows players to speak simultaneously can also be implemented by a sequential obviously strategy-proof mechanism.}
For ease of presentation, we define obviously strategy-proof and dominant strategy mechanisms only for perfect information and sequential mechanisms.  

We now delve deeper into the intricate details of mechanisms by  defining strategies and behaviors. 
Given a mechanism, a \emph{behavior} $B_i$ of player $i$ specifies a message for each node in $\mathcal N_i$. Denote with $\mathcal B_i$ the set of all possible behaviors of player $i$. Note that each behavior profile $B=(B_1,\ldots,B_n)$ corresponds to a path of all the visited nodes given this behavior profile. We denote with $Path(B)$ all the nodes in this path and with $Leaf(B)$ the leaf at the end of this path.  In addition, let us denote by $f_i(B)$ and $p_i(B)$ the bundle and the payment of player $i$ that are associated with $Leaf(B)$.  See Figure \ref{fig-prems} for an illustration.

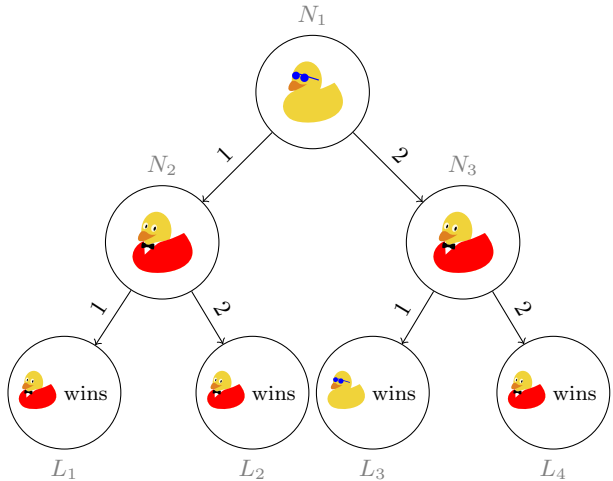
\begin{figure}[ht]
    \centering
    \captionsetup{parindent=\parindent}
    \caption[LoF entry]{\footnotesize 
    % \par\medskip
    An illustration of a tree protocol that realizes a second-price auction of a single item with two bidders: a duck with sunglasses and a duck in a jacket. The values for the item of both of them belong in the set $\{1,2\}$, and we assume that the jacket duck wins the item in case of a tie. The payments are omitted for simplicity. 
    % \par

 The behavior set  $\mathcal B_{s}$ of the sunglasses-wearing duck  specifies the possible messages she can send at node $N_1$, i.e.,  $\mathcal B_{s}=\{(N_1:"1"),(N_1:"2")\}$.  Analogously,   the set of behaviors $\mathcal B_{j}$ of the jacket-wearing duck
describes the message combinations at nodes $N_2$ and $N_3$, so  $\mathcal B_j=\{
(N_2:"1",N_3: "1"),(N_2:"1",N_3:"2"),(N_2:"2",N_3:"1"),(N_2:"2",N_3:"2")\}$. 
    For illustration, consider the behavior profile  $B=\{B_s=(N_1:"2"),B_j=(N_2:"2",N_3:"1")\}$. Note that $Path(B)=\{N_1,N_3,L_3\}$ and $Leaf(B)=L_3$.  
    }
     \label{fig-prems}
    				\begin{tikzpicture}
			\node[shape=circle,draw=black,minimum size=1.1cm,
   ,label=
   {[text=gray]above:\footnotesize $N_1$}]
    (u) at (2,2) {
        \begin{tikzpicture}[scale=0.40]
            \duck[sunglasses=blue];
        \end{tikzpicture}};
			\node[shape=circle,draw=black,minimum size=1.1cm, label=
   {[text=gray]above:\footnotesize $N_2$}](v) at (0,0){  \begin{tikzpicture}[scale=0.4]
            \duck[tshirt,jacket=red,bowtie]
        \end{tikzpicture}};
			%	\node[shape=circle,draw=black,minimum size=0.8cm] (l) at (-1,-1) {};
			%\node[shape=circle,draw=black,minimum size=1.1cm,label=below:{\small $(s_1\ge 1,p_1\le 1)$}] (l) at (-2,-2)  {     \begin{tikzpicture}[scale=0.4]   \duck[tshirt,jacket=red,bowtie]        \end{tikzpicture}};

 \node [shape=circle, draw=black, minimum size=1.5cm,label=
   {[text=gray]below:\footnotesize $L_1$}] (l1) at (-1.3,-2) {};

\duck[tshirt,jacket=red,bowtie,scale=0.25,xshift=-220,yshift=-255] (duck1) {};

 \node [right=0.1cm of l1,xshift=-1cm] (lwins) {\scriptsize wins};

 % \node [right=0.1cm of l1,xshift=-1cm,font=\scriptsize] (lwins) {wins};

   \node[shape=circle,draw=black,minimum size=1.1cm,label=
   {[text=gray]above:\footnotesize $N_3$}]
   % above: \footnotesize $N_3$] 
   (v') at (4,0){  \begin{tikzpicture}[scale=0.4]
            \duck[tshirt,jacket=red,bowtie]
        \end{tikzpicture}};
			%\node[shape=circle,draw=black,minimum size=0.8cm] (v') at (4,0) {}; 
			%	\node[shape=circle,draw=black,minimum size=0.8cm] (l') at (4,-1) {};
			\node[shape=circle,draw=black,minimum size=1.5cm,label=
   {[text=gray]below:\footnotesize $L_4$}] (l4) at (5.2,-2) {};
   \duck[tshirt,jacket=red,bowtie,scale=0.25,xshift=520,yshift=-255] (duck4) {};

    \node [right=0.1cm of l4,xshift=-1cm] (lwins4) {\scriptsize wins};
			\node[shape=circle,draw=black,minimum size=1.5cm,label=
   {[text=gray]below:\footnotesize $L_3$}] (l3) at (2.8,-2) {}; 

   \duck[sunglasses=blue,scale=0.25,xshift=247,yshift=-255] (duck3) {};

    \node [right=0.1cm of l3,xshift=-1cm] (lwins3) {\scriptsize wins};
			
			\node[shape=circle,draw=black,minimum size=1.5cm,label=
   {[text=gray]below:\footnotesize $L_2$}] (l2) at (1.2,-2) {}; 

   \duck[tshirt,jacket=red,bowtie,scale=0.25,xshift=65,yshift=-255] (duck2) {};

 \node [right=0.1cm of l2,xshift=-1cm] (lwins2) {\scriptsize wins};

			\draw [->] (u) edge  node[sloped, above] {\footnotesize $1$} (v);
			\draw [->] (u) edge  node[sloped, above] {\footnotesize $2$} (v');
			\draw [->] (v) edge  node[sloped, above] {\footnotesize $1$}  (l1);
			\draw [->] (v') edge[]  node[sloped, above] {\footnotesize $2$} (l4);
			\draw [->] (v) edge  node[sloped, above] {\footnotesize $2$} (l2);
			\draw [->] (v') edge  node[sloped, above] {\footnotesize $1$} (l3);

		\end{tikzpicture}
\end{figure}

Given a mechanism, a \emph{strategy} of player $i$ is a function $\mathcal S_i:V_i\to \mathcal{B}_i$  that takes as input a valuation of player $i$ and outputs a message for each node in $\mathcal N_i$. For example, the truthful strategy of the jacket-wearing duck in Figure \ref{fig-prems} maps the valuation that has a value $1$ for the item  to the behavior $(N_2:"1",N_3:"1")$ and the valuation that has a value of $2$ for the item to the behavior $(N_2:"2",N_3:"2")$.

We denote with $\mathcal T$ the set of all  possible allocations of the items  to the bidders.
A mechanism $\mathcal M$ together with strategies $(\mathcal S_1,\ldots,\mathcal S_n)$ realize a social choice function $f:V_1\times\cdots\times V_n\to \mathcal T$ together with payments $P_1,\ldots,P_n:V_1\times\cdots\times V_n\to \mathbb R^{n}$ if for every $(v_1,\ldots,v_n)\in V_1\times \cdots\times V_n$, $Leaf(\mathcal{S}_1(v_1),\ldots,\mathcal{S}_n(v_n))$ is labeled with the allocation that $f(v_1,\ldots,v_n)$ outputs and with the payment that $P_i(v_1,\ldots,v_n)$ specifies for every player $i$. 

Given a valuation profile $(v_1,\ldots,v_n)$, the \emph{social welfare} of an allocation $T=(T_1,\ldots,T_n)$ is $\sum_{i=1}^{n}v_i(T_i)$. We often abuse notation by writing  $v_i(T)$ instead of $v_i(T_i)$.  
% as a function of the entire allocation $T$ instead of a function only of $T_i$, the bundle of that player $i$ is allocated. 
A mechanism $\mathcal M$ together with strategies $(\mathcal S_1,\ldots,\mathcal S_n)$  $\alpha$-approximate the welfare if they realize a social choice function $f:V_1\times\cdots\times V_n\to\mathcal T$ such that for every valuation profile $(v_1,\ldots,v_n)\in V_1\times\cdots\times V_n$, the social choice function $f(v_1,\ldots,v_n)$ outputs an allocation $(T_1,\ldots,T_n)$ with social welfare which is at least $\frac{1}{\alpha}$ fraction of the welfare of the optimal allocation.

In the context of auctions, the following properties of mechanisms 
are desirable: individual rationality and no-negative-transfers. 
A mechanism $\mathcal M$ together with strategies $(\mathcal{S}_1,\ldots,\mathcal{S}_n)$ are \emph{individually rational} if they realizes a allocation rule $f$ together with payment schemes $P_1,\ldots,P_n$ such that  for every player $i$ and for every valuation profile $(v_1,\ldots,v_n)\in V_1\times\cdots\times V_n$:
$$
v_i(f(v_i,v_{-i})) - P_i(v_i,v_{-i}) \ge 0
$$
In other words, a player never \textquote{loses} by participating in the auction.
A mechanism $\mathcal M$ together with the strategy profile $(\mathcal{S}_1,\ldots,\mathcal{S}_n)$ satisfy  \emph{no-negative-transfers} if they realize payment schemes such that 
for every player $i$ and for every valuation profile $(v_1,\ldots,v_n)\in V_1\times\cdots\times V_n$, $P_i(v_i,v_{-i}) \ge 0$, 
indicating that no player is ever given money.

\subsubsection*{Dominant-Strategy Mechanisms} 
We now define dominant-strategy mechanisms. 
A strategy $\mathcal S_i^\ast$ is \emph{dominant} for player $i$ if for every strategy profile of the other players $\mathcal S_{-i}$, for every valuation profile $(v_1,\ldots,v_n)\in V_1\times\cdots\times V_n$ and for every alternative strategy $\widehat{\mathcal S_i}$: 
$$
v_i(f_i(\mathcal{S}_i^\ast(v_i),\mathcal S_{-i}(v_{-i}))) - p_i(\mathcal{S}_i^\ast(v_i),\mathcal S_{-i}(v_{-i})) \geq v_i(f_i(\widehat{\mathcal{S}_i}(v_i),\mathcal S_{-i}(v_{-i}))) - p_i(\widehat{\mathcal{S}_i}(v_i),\mathcal S_{-i}(v_{-i}))
$$
where we remind that  $f_i(\mathcal{S}_i^\ast(v_i),\mathcal{S}_{-i}(v_{-i}))$  and  $p_i(\mathcal{S}_i^\ast(v_i),\mathcal{S}_{-i}(v_{-i}))$ specify the allocation and payment of player $i$, respectively, when player $i$ follows the  actions specified by  $\mathcal{S}^\ast_i(v_i)$ and the other players follow  the actions specified in the vector  $\mathcal{S}_{-i}(v_{-i})=(\mathcal{S}_1(v_1),\ldots,\allowbreak \mathcal{S}_{i-1}(v_{i-1}), \mathcal S_{i+1}(v_{i+1}), \ldots,\allowbreak \mathcal{S}_n(v_n))$. The same holds for $f_i(\widehat{\mathcal{S}_i}(v_i),\mathcal S_{-i}(v_{-i}))$ and $p_i(\widehat{\mathcal{S}_i}(v_i),\mathcal S_{-i}(v_{-i}))$. 

Consider a mechanism $\mathcal M$ in which each player $i$ has a dominant strategy $\mathcal{S}_i^\ast$.
Let $(f,P_1,\ldots,P_n)$ be the social choice function and the payment scheme they realize. In this case, the mechanism \emph{implements $(f,P_1,\ldots,P_n)$ in dominant strategies}. 
In particular, the fact that a mechanism implements $(f,P_1,\ldots,P_n)$ in dominant strategies implies that for every player $i$, for every $v_i,v_i'\in V_i$ and for every $v_{-i}\in V_{-i}$:
\begin{equation*} \label{eq-truthfulness}
v_i(f(v_i,v_{-i}))-P_i(v_i,v_{-i})\ge v_i(f(v_i',v_{-i}))-P_i(v_i',v_{-i})     
\end{equation*}

\subsubsection*{Obviously Strategy-Proof Mechanisms} 
We now define obviously dominant strategies. For that, we need to 
 delve even further into the nuances of mechanisms and behaviors. 
We begin by defining two properties of behaviors. 

A vertex $u$ of a protocol is \emph{attainable} given a behavior $B_i$ if there exists a behavior profile of the other players, $B_{-i}\in \mathcal{B}_{-i}$, such that $u\in Path(B_i,B_{-i})$. For example, in the mechanism depicted in Figure \ref{fig-prems}, vertex $N_2$ is attainable given the behavior $(N_1:"1")$ of the sunglasses-wearing duck, whereas vertex $N_3$ is not attainable given this behavior. 

In addition, given the paths of two behavior profiles $B$ and $B'$, we denote with $Path(B)\cap Path(B')$ all the nodes that belong to both $Path(B)$ and to $Path(B')$. For example,  given the mechanism described in Figure \ref{fig-prems}, the behavior profiles $B=\{B_s=(N_1:"2"),B_j=(N_2:"2",N_3:"2")\}$ and $B'=\{B_s'=(N_1:"2"),B_j'=(N_2:"1",N_3:"1")\}$ satisfy that  $Path(B)\cap Path(B')=\{N_1,N_3\}$. 
We can now define the notion of an obviously dominant behavior:
\begin{definition}\label{def-obvs-behavior}
Consider a mechanism $\mathcal M$, together with a behavior $B_i$ and a valuation $v_i$ of some player $i$.
Fix a vertex $u\in \mathcal N_i$ that is attainable given the behavior $B_i$. 
Behavior $B_i$ is an \emph{obviously dominant behavior for player $i$ at vertex $u$ given the valuation $v_i$} if for every behavior profiles $B_{-i}\in \mathcal B_{-i}$ and  $(B_1',\ldots,B_n')\in \mathcal B_1 \times \cdots \times \mathcal{B}_n$ such that:
\begin{enumerate}
    \item $u\in Path(B_1,\ldots,B_n)\cap Path(B_1',\ldots,B_n')$ \emph{and} 
    \item $B_i$ and $B_i'$ dictate sending different messages at vertex $u$.
\end{enumerate}
The following inequality holds:
$$
v_i(f_i(B_i,B_{-i})) - p_i(B_i,B_{-i}) \geq v_i(f_i(B_i',B_{-i}')) - p_i(B_i',B_{-i}')
$$
\end{definition}
If $B_i$ is obviously dominant given that the valuation is $v_i$ for all (relevant) vertices simultaneously, then $B_i$ is an \emph{obviously dominant behavior}. Formally:
\begin{definition}
Fix a behavior $B_i$ together with the subset of vertices in $\mathcal N_i$ that are attainable for it,  which we denote with $U_{B_i}$. Fix a valuation $v_i$ of player $i$.
The behavior $B_i$ is an \emph{obviously dominant behavior for player $i$ given the valuation $v_i$} if
it is an obviously dominant behavior for player $i$ given the valuation $v_i$ for every vertex $u\in U_{B_i}$. 
\end{definition} 
\begin{definition}
A strategy $\mathcal{S}_i$ of player $i$ is an \emph{obviously dominant strategy} if for every $v_i$, the behavior $\mathcal S_i(v_i)$ is an obviously dominant behavior for player $i$ given the valuation $v_i$. 
\end{definition}

Consider a  mechanism $\mathcal M$ together with the obviously dominant strategies $(\mathcal S_1,\ldots, \mathcal S_n)$ and let
$(f,P_1,\ldots,P_n)$ be the social choice function and the payment schemes  they realize. In this case, the mechanism \emph{implements $(f,P_1,\ldots,P_n)$ in obviously dominant strategies}, or put differently: the mechanism is \emph{obviously strategy-proof}.
An allocation rule $f$ that has payment schemes $P_1,\ldots,P_n$ such that there exists  an obviously strategy-proof  mechanism that implements them is \emph{OSP-implementable}. 
In our proofs, we will extensively use the following observation:
\begin{lemma}\label{lemma-bad-leaf-good-leaf}
Fix an obviously strategy-proof mechanism $\mathcal M$ with strategies $(\mathcal S_1,\ldots, \mathcal S_n)$
that realize an allocation rule and payment schemes $(f,P_1,\ldots,P_n):V_1\times\cdots \times V_n\to \mathcal T\times \mathbb R^n$.
Fix a player $i$, a vertex $u\in \mathcal N_i$
and
two valuation profiles $(v_i,v_{-i}),(v_i',v_{-i}')$ such that the following conditions hold simultaneously:
\begin{enumerate}
    \item $u\in Path(\mathcal S_i(v_i),\mathcal S_{-i}(v_{-i}))\cap Path(\mathcal S_i(v_i'), \mathcal S_{-i}(v_{-i}'))$. 
    \item $v_i(f(v_i,v_{-i}))-P_i(v_i,v_{-i})< 
v_i(f(v_i',v_{-i}'))-P_i(v_i',v_{-i}')$.
\end{enumerate}
Then,  the strategy  $\mathcal S_i$ dictates the same message for the valuations  $v_i$ and $v_i'$ at vertex $u$.   
\end{lemma}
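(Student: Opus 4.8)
The plan is to derive this as a direct consequence of the definition of an obviously dominant behavior, reasoning by contradiction. Suppose that the strategy $\mathcal S_i$ dictates different messages for the valuations $v_i$ and $v_i'$ at vertex $u$. Our aim is to contradict the fact that $\mathcal S_i(v_i)$ is an obviously dominant behavior for player $i$ given the valuation $v_i$.

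First I would check that the hypotheses of Definition \ref{def-obvs-behavior} are met with the pieces at hand. Set $B_i := \mathcal S_i(v_i)$, $B_{-i} := \mathcal S_{-i}(v_{-i})$, and for the primed profile set $(B_1',\ldots,B_n') := (\mathcal S_1(v_1'),\ldots,\mathcal S_n(v_n'))$, so in particular $B_i' = \mathcal S_i(v_i')$ and $B_{-i}' = \mathcal S_{-i}(v_{-i}')$. Condition (1) of the lemma says precisely that $u \in Path(B_i,B_{-i}) \cap Path(B_1',\ldots,B_n')$, which is condition (1) of Definition \ref{def-obvs-behavior}; and the assumption we made for contradiction is exactly condition (2) of that definition, namely that $B_i$ and $B_i'$ dictate sending different messages at $u$. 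One small point to verify along the way: $u$ is attainable given $B_i$, since the profile $(B_i,B_{-i})$ witnesses $u \in Path(B_i, B_{-i})$, so $u$ indeed lies in the set $U_{B_i}$ of vertices attainable for $B_i$, and it is legitimate to invoke obvious dominance of $B_i$ at $u$.

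Then, since $\mathcal S_i$ is an obviously dominant strategy, the behavior $B_i = \mathcal S_i(v_i)$ is obviously dominant for player $i$ given $v_i$, hence in particular obviously dominant at the vertex $u \in U_{B_i}$. Applying the inequality in Definition \ref{def-obvs-behavior} to the two behavior profiles above yields
$$
v_i(f_i(B_i,B_{-i})) - p_i(B_i,B_{-i}) \;\geq\; v_i(f_i(B_i',B_{-i}')) - p_i(B_i',B_{-i}').
$$
Finally I would translate this back into the language of $(f, P_i)$: because the strategies $(\mathcal S_1,\ldots,\mathcal S_n)$ realize $(f,P_1,\ldots,P_n)$, we have $f_i(B_i, B_{-i}) = f(v_i, v_{-i})$ and $p_i(B_i, B_{-i}) = P_i(v_i, v_{-i})$, and likewise $f_i(B_i', B_{-i}') = f(v_i', v_{-i}')$ and $p_i(B_i', B_{-i}') = P_i(v_i', v_{-i}')$. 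Substituting, the displayed inequality becomes $v_i(f(v_i,v_{-i})) - P_i(v_i,v_{-i}) \geq v_i(f(v_i',v_{-i}')) - P_i(v_i',v_{-i}')$, which directly contradicts condition (2) of the lemma. Hence $\mathcal S_i$ must dictate the same message for $v_i$ and $v_i'$ at $u$.

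This proof is essentially bookkeeping: the only thing to be careful about is matching the quantifiers of Definition \ref{def-obvs-behavior} correctly — in particular noting that the definition allows the full primed profile $(B_1',\ldots,B_n')$ to be arbitrary (not tied to $B_{-i}$), which is what lets us use $\mathcal S_{-i}(v_{-i}')$ rather than $\mathcal S_{-i}(v_{-i})$ on the right-hand side — and recalling that "realizing" a social choice function is exactly the statement that lets us replace the $f_i/p_i$ evaluated on behavior profiles by $f/P_i$ evaluated on the corresponding valuation profiles. There is no genuine obstacle; the slight subtlety, if any, is just making sure $u$ is attainable for $B_i$ so that the notion of obvious dominance at $u$ applies.
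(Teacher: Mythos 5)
Your proof is correct and follows exactly the route the paper intends: the paper states that the lemma ``can be directly derived from the definition of obviously strategy-proof mechanisms'' (with Figure \ref{figure-pruning} as illustration), and your argument simply spells out that derivation, instantiating Definition \ref{def-obvs-behavior} with $B_i=\mathcal S_i(v_i)$, $B_{-i}=\mathcal S_{-i}(v_{-i})$ and the primed profile $(\mathcal S_1(v_1'),\ldots,\mathcal S_n(v_n'))$, checking attainability of $u$, and contradicting condition (2). No gaps; the bookkeeping (attainability, and that realization lets you replace $f_i,p_i$ on behavior profiles by $f,P_i$ on valuation profiles) is handled correctly.
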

The proof of Lemma \ref{lemma-bad-leaf-good-leaf} can be directly derived from the definition of obviously strategy-proof mechanisms. See Figure \ref{figure-pruning} for an illustration.
\begin{figure}[h]
    \centering
    \caption[LoF entry]{\footnotesize Below is an illustration of a vertex in a mechanism for which the conditions specified in Lemma \ref{lemma-bad-leaf-good-leaf} hold, but $\mathcal S_i(v_i)$ and $\mathcal S_i'(v_i)$ dictate different messages,  which we denote with $m$ and with $m'$ respectively. Denote $Leaf(\mathcal S_i(v_i'),\mathcal{S}_{-i}(v_{-i}'))$ with $\ell^G$ and  $Leaf(\mathcal S_i(v_i),\mathcal{S}_{-i}(v_{-i}))$ with $\ell^B$. By assumption, leaf $\ell^G$ is more profitable than leaf $\ell^B$ for bidder $i$ with valuation $v_i$. 
    As the figure illustrates, there is a contradiction, as the strategy $\mathcal S_i$ is not obviously dominant.
    Roughly speaking, it is because  it is not obvious for bidder $i$ given the valuation $v_i$ that she should send the message $m$ as  the strategy $\mathcal{S}_i$ dictates, rather than send message $m'$.}
    \label{figure-pruning}
    		\begin{tikzpicture}
			\node[shape=circle,draw=black,minimum size=0.8cm] (u) at (2,2) {$u$};
			\node[shape=circle,draw=black,minimum size=0.8cm] (v) at (0,0) {};
			%	\node[shape=circle,draw=black,minimum size=0.8cm] (l) at (-1,-1) {};
   \node [shape=circle, fill=magenta!30, draw=black, minimum size=0.8cm, label=below:{\small \begin{tabular}{c} $\mathcal S_i(v_i'),\mathcal S_{-i}(v_{-i}')$ \\$utility({v_i},\ell^{G})=HIGH$  \end{tabular}}] (l) at (-1,-1) {\small $\ell^G$};
			
% \node[shape=circle,fill=magenta!30,draw=black,minimum size=0.8cm,  label=below:{\small $utility=HIGH$ \\ $(v_i,v_{-i})$}] (l) at (-1,-1) {\small$\ell^G$};
			\node[shape=circle,draw=black,minimum size=0.8cm] (v') at (4,0) {}; 
			%	\node[shape=circle,draw=black,minimum size=0.8cm] (l') at (4,-1) {};
			\node[shape=circle,draw=black,,fill=gray!20,minimum size=0.8cm,label=below:
   {\small \begin{tabular}{c} $\mathcal S_i(v_i),\mathcal S_{-i}(v_{-i})$ \\$utility({v_i},\ell^{B})=LOW$  \end{tabular}}] (l') at (5,-1) {\small$\ell^B$};
			\node[shape=circle,draw=black,minimum size=0.8cm] (n') at (3,-1) {}; 
			\node[shape=circle,draw=black,minimum size=0.8cm] (n) at (1,-1) {};

			\draw [->] (u) edge  node[sloped, above] {\footnotesize$m'$} (v);
			\draw [->] (u) edge  node[sloped, above] {\footnotesize $m$} (v');
			\draw [->] (v) edge[dotted]  node[sloped, above] {}  (l);
			\draw [->] (v') edge[dotted]  node[sloped, above] {} (l');
			\draw [->] (v) edge[dotted]  node[sloped, above] {} (n);
			\draw [->] (v') edge[dotted]  node[sloped, above] {} (n');
		\end{tikzpicture}
\end{figure}
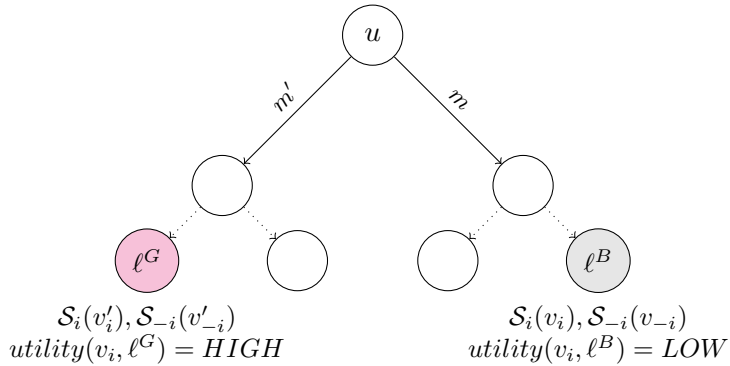
\subsubsection*{The Connection of Dominant-Strategy and Obviously Strategy-Proof Mechanisms}
It is easy to see that
every strategy that is
obviously dominant is also dominant, so every
obviously strategy-proof mechanism is a dominant-strategy mechanism. Interestingly, 
for mechanisms that are both sequential and  satisfy perfect information, the converse is also true:
\begin{proposition}\label{prop-obs}
Consider a  mechanism $\mathcal M$ together with the strategy profile
 $(\mathcal S_1,\ldots, \mathcal S_n)$ that implement an allocation rule and payments $(f,P_1,\ldots,P_n):V_1\times\cdots\times V_n\to \mathcal T \times \mathbb R^n$. Then, the mechanism $\mathcal M$ together with the strategies $(\mathcal S_1,\ldots, \mathcal S_n)$ implement $(f,P_1,\ldots,P_n)$ in dominant strategies if and only if they implement $(f,P_1,\ldots,P_n)$ in obviously dominant strategies.  
\end{proposition}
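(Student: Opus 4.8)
The forward direction (obviously dominant $\Rightarrow$ dominant) is noted in the excerpt to be easy, so the content is the converse: if $(\mathcal S_1,\ldots,\mathcal S_n)$ implements $(f,P_1,\ldots,P_n)$ in dominant strategies, then each $\mathcal S_i$ is in fact obviously dominant. Fix a player $i$, a valuation $v_i$, and a vertex $u\in\mathcal N_i$ attainable given the behavior $\mathcal S_i(v_i)$. Following Definition~\ref{def-obvs-behavior}, I must compare, for any two behavior profiles $B_{-i}$ and $(B_1',\ldots,B_n')$ with $u$ on both paths and with $\mathcal S_i(v_i)$ and $B_i'$ sending different messages at $u$, the utility $v_i(f_i(\mathcal S_i(v_i),B_{-i}))-p_i(\mathcal S_i(v_i),B_{-i})$ against $v_i(f_i(B_i',B_{-i}'))-p_i(B_i',B_{-i}')$, and show the former is at least the latter.

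The key idea is that, because the mechanism is sequential and has perfect information, once the path has reached the vertex $u\in\mathcal N_i$, everything that happens from $u$ onward is determined purely by the messages the players \emph{send} from $u$ on — and a behavior is exactly such a collection of messages. So I want to replace the arbitrary behavior profiles $B_{-i}$ and $(B_{-i}',B_i')$ by \emph{strategic} profiles that reproduce the same continuation from $u$. Concretely: since $u$ is on $Path(\mathcal S_i(v_i),B_{-i})$, I can choose a valuation profile $v_{-i}$ for the other players and valuations $v_{-i}'$, $v_i'$ realizing strategies whose induced behaviors agree with $B_{-i}$, $B_{-i}'$, $B_i'$ on the portion of the tree at-or-below $u$ (the behaviors off that subtree are irrelevant to the leaves reached). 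The one point needing care: I also need $u$ itself to lie on $Path(\mathcal S_i(v_i),\mathcal S_{-i}(v_{-i}))$ and on $Path(\mathcal S_i(v_i'),\mathcal S_{-i}(v_{-i}'))$; this is arranged by picking the other players' behaviors on the path \emph{above} $u$ to match those of $B$ (resp.\ $B'$) — which again is possible because $u$ is attainable. After this translation, the two leaves in question are $Leaf(\mathcal S_i(v_i),\mathcal S_{-i}(v_{-i}))$ and $Leaf(\mathcal S_i(v_i'),\mathcal S_{-i}(v_{-i}'))$, and $f$, $P$ at these leaves are exactly $f(v_i,v_{-i}),P_i(v_i,v_{-i})$ and $f(v_i',v_{-i}'),P_i(v_i',v_{-i}')$.

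Now I invoke dominant-strategy incentive compatibility — but carefully, because the comparison mixes $v_{-i}$ and $v_{-i}'$. Here is where the condition that $\mathcal S_i(v_i)$ and $B_i'=\mathcal S_i(v_i')$ send \emph{different} messages at $u$ does the work: since both paths pass through $u$ and they diverge at $u$, the subtree entered after $u$ under $(\mathcal S_i(v_i'),\mathcal S_{-i}(v_{-i}'))$ is a different child of $u$ than the one entered under $(\mathcal S_i(v_i),\mathcal S_{-i}(v_{-i}))$. I can therefore build a single profile: let player $i$ use valuation $v_i$, and let the others use a profile $\tilde v_{-i}$ that agrees with $\mathcal S_{-i}(v_{-i})$ above $u$, and below the child-of-$u$ that $B_i'$'s message leads to, agrees with $\mathcal S_{-i}(v_{-i}')$. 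Because $i$'s two messages at $u$ go to disjoint subtrees, these requirements on $\tilde v_{-i}$ are consistent. Then, with this fixed $\tilde v_{-i}$, truthfulness of $\mathcal S_i$ for player $i$ (the displayed inequality right before the Obviously Strategy-Proof subsection, applied to $v_i$ vs.\ $v_i'$ against opponents $\tilde v_{-i}$) gives precisely
$$
v_i(f(v_i,\tilde v_{-i}))-P_i(v_i,\tilde v_{-i})\ \ge\ v_i(f(v_i',\tilde v_{-i}))-P_i(v_i',\tilde v_{-i}),
$$
and by construction the left side equals $v_i(f_i(\mathcal S_i(v_i),B_{-i}))-p_i(\mathcal S_i(v_i),B_{-i})$ while the right side equals $v_i(f_i(B_i',B_{-i}'))-p_i(B_i',B_{-i}')$. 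This is the required inequality in Definition~\ref{def-obvs-behavior}, so $\mathcal S_i(v_i)$ is obviously dominant at $u$; ranging over $u$ and $v_i$ shows $\mathcal S_i$ is obviously dominant, and over $i$ gives the proposition.

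\textbf{Main obstacle.} The delicate step is the bookkeeping in the previous paragraph: verifying that a single opponents' profile $\tilde v_{-i}$ can simultaneously (i) keep $u$ on the realized path, (ii) reproduce $B_{-i}$'s continuation in the subtree below $i$'s message under $\mathcal S_i(v_i)$, and (iii) reproduce $B_{-i}'$'s continuation in the subtree below $i$'s message under $\mathcal S_i(v_i')$ — relying on perfect information (so opponents can condition on the full history, in particular on which child of $u$ was taken) and on the fact that $i$'s two messages at $u$ lead to disjoint subtrees, so requirements (ii) and (iii) never conflict. If in this model strategies must be deterministic functions of own valuation only (they are), one must also check such a $\tilde v_{-i}$ actually exists in the domain $V_{-i}$; this is immediate by taking, coordinate-wise, whichever of the two originating valuations governs the relevant part of the tree, since the two parts are separated by the choice at $u$.
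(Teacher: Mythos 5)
There is a genuine gap, and it is exactly at the step you flag as the ``main obstacle.'' Your plan is to replace the arbitrary behavior profiles $B_{-i}$, $(B_i',B_{-i}')$ from Definition~\ref{def-obvs-behavior} by \emph{valuation} profiles $v_{-i}$, $v_i'$, $v_{-i}'$ (or a spliced $\tilde v_{-i}$) whose images under the fixed strategies $\mathcal S_{-i}$, $\mathcal S_i$ reproduce those behaviors on the relevant parts of the tree, and then to invoke only the revelation-style truthfulness inequality $v_i(f(v_i,\tilde v_{-i}))-P_i(v_i,\tilde v_{-i})\ge v_i(f(v_i',\tilde v_{-i}))-P_i(v_i',\tilde v_{-i})$. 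But the behaviors quantified over in the definition of obvious dominance are arbitrary elements of $\mathcal B_i$ and $\mathcal B_{-i}$; they need not lie in the image of the strategy maps at all. A player $j\neq i$ may have a small (even singleton) domain $V_j$, so $\{\mathcal S_j(v_j)\}_{v_j\in V_j}$ can be a tiny subset of $\mathcal B_j$, and likewise $B_i'$ need not equal $\mathcal S_i(v_i')$ for any $v_i'\in V_i$. Your ``coordinate-wise'' construction of $\tilde v_{-i}$ does not repair this: a valuation is not an object that can be spliced across subtrees of the protocol --- choosing one valuation for player $j$ fixes, via $\mathcal S_j$, a single behavior at \emph{all} of $j$'s nodes, so there is no reason any $\tilde v_j$ makes $\mathcal S_j(\tilde v_j)$ agree with $B_j$ above $u$ and in one child's subtree while agreeing with $B_j'$ in the other child's subtree. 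Consequently the leaves you compare are not, in general, of the form $Leaf(\mathcal S_i(v_i),\mathcal S_{-i}(\tilde v_{-i}))$ and $Leaf(\mathcal S_i(v_i'),\mathcal S_{-i}(\tilde v_{-i}))$, and the truthfulness inequality you cite does not apply.

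The paper's proof avoids this by using the full strength of the dominance hypothesis rather than its revelation-style consequence: dominance is required against \emph{every} strategy profile of the opponents and \emph{every} alternative strategy of player $i$, and strategies may be constant (valuation-independent). Arguing by contradiction, one splices \emph{behaviors} (not valuations): define a constant opponent strategy profile $\mathcal S_{-i}'$ whose single output behavior agrees with $B_{-i}$ off the subtree of $u$ and below the child reached by $\mathcal S_i(v_i)$'s message, and with $B_{-i}'$ below the child reached by $B_i'$'s message (consistent because the two children are disjoint), and a constant alternative strategy $\mathcal S_i'$ outputting $B_i'$. The violated obvious-dominance inequality then becomes a violated dominance inequality against $(\mathcal S_i',\mathcal S_{-i}')$, a contradiction. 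If you reformulate your argument along these lines --- splicing behaviors and invoking dominance against arbitrary (constant) strategies instead of trying to manufacture valuations and using truthfulness of $(f,P)$ --- the proof goes through; as written, the translation step is false in general.
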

We defer the  proof of Proposition \ref{prop-obs} to Appendix \ref{subsec-prop-proof}. 
Proposition \ref{prop-obs}  depends crucially on the assumption that every node is associated with exactly one player that sends messages in it. 
Without this assumption, Proposition \ref{prop-obs} could have led to the erroneous conclusion that a sealed-bid implementation of a second-price auction, where all players simultaneously submit their values, is obviously strategy-proof.

% \section{Technical Aspects of Impossibilities}
% \input{proof-techniques}

\section{Impossibilities for Unknown Single-Minded Bidders} \label{sec-mua}
In this section, we consider
unknown single-minded bidders in multi-unit auctions (Subsection \ref{subsec-mua}) and combinatorial auctions (Subsection \ref{subsec-ca}). We show tight impossibilities for both settings under the standard assumptions of no-negative-transfers and individual rationality. The proofs are almost identical: The only difference between the proofs is the valuations we use. 

% The analysis in both proofs necessitates an examination of all allocation rules that yield non-trivial approximation guarantees. Thus, our attention is concentrated on particular valuation profiles,
% for which we can anticipate the behavior of the mechanism.
% Our primary challenge was identifying these valuation profiles. \shirinote{ask Ariel if this is clear} 
% that realizes them or have a sufficiently small number of plausible cases, so case analysis would be feasible.  

\subsection{The Multi-Unit Auction Setting} 
The definition of 
unknown single-minded bidders in a multi-unit auction is as follows. First, a valuation $v_i$ is \emph{single-minded} if it is parameterized with a quantity $q$ and a value $\alpha$ such that:
$$
v_i(s)=\begin{cases}
    \alpha \quad s\ge q, \\
    0 \quad \text{otherwise.}
\end{cases}
$$
{If the domain of valuations of a bidder contains solely single-minded valuations, we say that the bidder is \emph{single-minded}. If all the valuations are parameterized with the same quantity, then the bidder is \emph{known single-minded}, and otherwise she is \emph{unknown single-minded}.}

We remind that our impossibility result immediately implies that no obviously strategy-proof mechanism gives an approximation better than $\min\{m,n\}$ for a multi-unit auction with arbitrary monotone valuations (Corollary \ref{cor-mua-sm}). 
\label{subsec-mua} 
\subsubsection{Proof of Theorem \ref{thm-lb-mua}} \label{subsec-lb-mua}
Fix an auction with $m$ items and $n$ bidders. We assume that the domain $V_i$ of each bidder consists of single-minded valuations with values in  $\{0,1,\ldots,poly(m,n)\}$.

Assume towards a contradiction that 
there exists an obviously strategy-proof, individually rational, no-negative-transfers mechanism $\mathcal M$ together with strategy profile $\mathcal S=(\mathcal S_1,\ldots,\mathcal S_n)$
that implement an allocation rule and payment schemes  $(f,P_1,\ldots,P_n):V_1\times\cdots\times V_n\to\mathcal T\times \mathbb R^n$, where $f$ 
gives an approximation strictly better than $\min\{m,n\}$ to the optimal social welfare.  

% Thus, there exists
% an allocation rule
% $f:V_1\times\cdots\times V_n\to \mathcal T$  and payment schemes  $P_1,\ldots,P_n:V_1\times \cdots \times V_n\to \mathbb{R}^n$ the payment schemes  

% Fix an allocation rule  with approximation ratio $\alpha$ such that $\alpha>\max\{\frac{1}{n},\frac{1}{m}\}$. Let $\mathcal{M}$ be a normalized mechanism 
% and strategies  $(\mathcal S_1,\ldots,\mathcal{S}_n)$  that realize $f$ together with the payment schemes $P_1,\ldots,P_n:V_1\times \cdots \times V_n\to \mathbb{R}^n$. 
For every player $i$, we define three valuations that will be of particular interest:  
$$
v_i^{all}(s)=\begin{cases}
k^4 \quad s=m, \\
0 \quad \text{otherwise.}
\end{cases} \\\quad 
v_i^{one}(s)= \begin{cases}
1 \quad s\ge 1, \\
0 \quad \text{otherwise.}
\end{cases}
 \quad
v_i^{ONE}(s)=\begin{cases}
k^2+1 \quad s\ge 1, \\
0 \quad \text{otherwise.}
\end{cases} 
$$
where $k=\max\{m,n\}$. Observe that given the valuation profile $(v_1^{one},\ldots,v_n^{one})$, there are at least two bidders such that $f$ assigns to them at least one item
because of its approximation guarantee.  
% The reason for it is that the optimal welfare given $(v_1^{one},\ldots,v_n^{one})$  is $\min\{m,n\}$. Since $f$ guarantees strictly more than $\max\{\frac{1}{m},\frac{1}{n}\}$ of the welfare, the welfare has to be strictly larger than $1$. 
For convenience, assume those bidders are bidder $1$ and bidder $2$.\footnote{If those bidders would have been any other bidders, say bidders $3$ and $5$, the proof would be identical, except for the renaming of bidders.}
% We now show that the mechanism and the strategies $(\mathcal S_1,\ldots,\mathcal S_n)$ do not realize $f$ in obviously dominant strategies. 

For the analysis of the mechanism, we define the following subsets of the domains of the valuations:
$
\mathcal{V}_1=\{v_1^{one},v_1^{ONE},v_1^{all}\}$,  $\mathcal{V}_2=\{v_2^{one},v_2^{ONE},v_2^{all}\}$ and for every player $i\ge 3$, we set $\mathcal{V}_i=\{v_i^{one}\}$. We denote $\mathcal{V}_1\times\cdots\times \mathcal{V}_n$ with $\mathcal V$.  
% use the notation $\mathcal V=\mathcal{V}_1\times\cdots\times \mathcal{V}_n$.    
%$\mathcal V_1=\{v_1^{one},v_1^{ONE},v_1^{all}\}$ and $\mathcal V_2=\{v_2^{one},v_2^{ONE},v_2^{all}\}$. For every player $i\ge 3$, we define $\mathcal V_i=\{v_i^{one}\}$.
% We now analyze the mechanism given that the valuations of the players belong to $\mathcal V=\mathcal V_1\times  \cdots \times \mathcal V_n$.  

We will use the following lemma to show that at least one player $i$ at some vertex in the mechanism  has to send different messages for different valuations in $\mathcal V_i$. Formally:
\begin{lemma}\label{lemma-not-same-leaf}
There exist $(v_1,\ldots,v_n),(v_1',\ldots,v_n')\in \mathcal V_1\times\cdots\times \mathcal V_n$ such that the behaviors 
$(\mathcal S_1(v_1),\linebreak \ldots,\mathcal S_n(v_n))$ and $(\mathcal S_1(v_1'),\ldots,\mathcal S_n(v_n'))$ 
do not reach the same leaf in the mechanism. 
\end{lemma}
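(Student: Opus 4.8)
The plan is to argue by contradiction: suppose that for \emph{every} pair of valuation profiles in $\mathcal V=\mathcal V_1\times\cdots\times\mathcal V_n$, the induced behavior profiles reach the same leaf of $\mathcal M$. Since $\mathcal V_i=\{v_i^{one}\}$ is a singleton for every $i\ge 3$, this says that the leaf (and hence the allocation and payments) is constant as bidders $1$ and $2$ range over their three candidate valuations, while bidders $3,\ldots,n$ are pinned to $v_i^{one}$. Call this common leaf $\ell^\star$, with allocation $(T_1,\ldots,T_n)$ and payments $(P_1,\ldots,P_n)$. I would then derive a contradiction by comparing this fixed outcome against what the approximation guarantee forces on specific profiles in $\mathcal V$.

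The key steps, in order. First, evaluate $\ell^\star$ on the profile $(v_1^{one},\ldots,v_n^{one})$: by the choice of bidders $1$ and $2$, $f$ gives each of them at least one item, so in particular $|T_1|\ge 1$ and $|T_2|\ge 1$ under $\ell^\star$ — wait, more carefully, the leaf is the same but I should use that the allocation at $\ell^\star$ equals $f(v_1^{one},\ldots,v_n^{one})$, so indeed $T_1,T_2\neq\emptyset$ (and $T_1,T_2\neq\{m\}$ simultaneously is impossible since the items are disjoint, so at most one of them gets the full bundle, meaning at least one of them, say bidder $2$, gets a bundle $T_2$ with $1\le|T_2|<m$). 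Second, use individual rationality and no-negative-transfers at $\ell^\star$ for the profile $(v_1^{one},\ldots,v_n^{one})$: since $v_1^{one}(T_1)\le 1$ we get $0\le P_1\le 1$, and likewise $0\le P_2\le 1$. Third, now consider the profile where bidder $1$ reports $v_1^{all}$ instead: the leaf is still $\ell^\star$, so bidder $1$ still receives $T_1$ (with $1\le |T_1|\le m$, possibly $T_1\neq\{m\}$) at payment $P_1\le 1$; but the welfare-optimal allocation when bidder $1$ has $v_1^{all}$ and everyone else has $v_i^{one}$ (or $v_i^{ONE}$) is to give bidder $1$ all $m$ items for a welfare of $k^4$, whereas if $T_1\neq\{m\}$ then $v_1^{all}(T_1)=0$ and the achieved welfare over bidders $3,\ldots,n$ is at most $n-1<k$, so the approximation ratio is at least $k^4/k=k^3\gg\min\{m,n\}$ — contradiction, \emph{unless} $T_1=\{m\}$, i.e. bidder $1$ already gets the grand bundle at $\ell^\star$. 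By the symmetric argument with bidder $2$ reporting $v_2^{all}$, we'd need $T_2=\{m\}$ as well, which is impossible since $T_1$ and $T_2$ are disjoint and $m\ge 1$. This is the contradiction.

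I should double-check the boundary bookkeeping: the profiles I invoke — $(v_1^{one},v_2^{one},\ldots)$, $(v_1^{all},v_2^{one},\ldots)$, $(v_1^{one},v_2^{all},\ldots)$ — all lie in $\mathcal V$, so the "same leaf" hypothesis applies to each; and the optimal-welfare computations only need $m\ge 2$, $n\ge 2$, $k=\max\{m,n\}\ge 2$, which the theorem assumes. The role of $v_i^{ONE}$ is not needed for \emph{this} lemma (it will presumably be used later when turning the $i$-and-$u$ produced by this lemma into the full impossibility via Lemma \ref{lemma-bad-leaf-good-leaf}), so I would not bring it in here.

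The main obstacle, and the only subtle point, is making precise the claim "the leaf is constant over $\mathcal V$ implies the allocation and payments are constant and in particular coincide with $f$ evaluated at each profile": this is immediate from the definition of a mechanism realizing $(f,P_1,\ldots,P_n)$ via the strategies $(\mathcal S_1,\ldots,\mathcal S_n)$, since $Leaf(\mathcal S_1(v_1),\ldots,\mathcal S_n(v_n))$ is labeled with $f(v_1,\ldots,v_n)$ and the payments $P_i(v_1,\ldots,v_n)$. Given that, the rest is just the approximation-ratio arithmetic above, which is routine. One stylistic choice: rather than "at least one of bidders $1,2$ gets a non-grand bundle," I can phrase the whole argument symmetrically — $T_1=\{m\}$ \emph{and} $T_2=\{m\}$ are each forced, yet $T_1\cap T_2=\emptyset$, done — which is cleaner and avoids case analysis.
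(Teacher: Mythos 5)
Your proof is correct and rests on the same key observation as the paper's: the approximation guarantee forces $f$ to allocate all $m$ items to bidder $1$ on $(v_1^{all},v_2^{one},\ldots,v_n^{one})$ and all $m$ items to bidder $2$ on $(v_1^{one},v_2^{all},v_3^{one},\ldots,v_n^{one})$, which is exactly how the paper proves the lemma directly from these two profiles (no contradiction needed). Your additional steps involving the profile $(v_1^{one},\ldots,v_n^{one})$, individual rationality, and the payment bounds are superfluous for this lemma, as you yourself note at the end.
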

%%%PROOF-OF-PLACE%%%
\begin{proof}
% [of Lemma \ref{lemma-not-same-leaf}]
Consider the valuation profiles $v=(v_1^{all},v_2^{one},\ldots,v_n^{one})$ and $v'=(v_1^{one},\allowbreak v_2^{all},v_3^{one},\ldots,v_n^{one})$.  Given $v$, the only allocation that gives an approximation 
of $\min\{m,n\}$ to the welfare is allocating all items to player $1$, so $f(v)=(m,0,\ldots,0)$. Similarly, given the valuation profile $v'$, the only allocation rule that gives a $\min\{m,n\}$ approximation is allocating all the items to player $2$. Thus, the behavior profiles $\mathcal S(v)$ and $\mathcal S(v')$ reach different leaves. 
% Due to the same reasons, $f(v')=(0,m,0,\ldots,0)$. 
% Since the mechanism $\mathcal{M}$ together with the strategy profile $(\mathcal{S}_1,\ldots,\mathcal{S}_n)$ realize $f$, we get that the behaviors $(\mathcal S_1(v_1^{all}),\mathcal S_2(v_2^{one})\ldots,\mathcal S_n(v_n^{one})))$ and $(\mathcal S_1(v_1^{one}),\mathcal S_2(v_2^{all}),\mathcal S_3(v_3^{one}),\ldots,\allowbreak\mathcal S_n(v_n^{one})))$ reach different leaves. 
% Note that the behaviors of $(\mathcal S_1(v_1),\ldots, \mathcal S_n(v_n))$ and $(\mathcal S_1(v_1'),\ldots, \mathcal S_n(v_n'))$  do not reach the same leaf.  The reason for it is that the mechanism together with $(\mathcal S_1,\ldots,\mathcal S_n)$ realizes $f$, and $f$ does not output the same allocation for them.
% $f(v_1^{ONE},v_2^{one},v_3^{one},\ldots,v_n^{one})$ necessarily outputs an allocation where player $1$ wins all the items (and the rest of the bidders get none), whilst  the allocation of $f$ given $(v_1^{one},\ldots,v_n^{one})$  
% satisfies that player $2$ wins at least one item. Thus 
% Note that there exist $(v_1,\ldots,v_n),(v_1',\ldots,v_n')\in \mathcal V$ such that the behaviors of 
% Note that
% the strategies of  these valuations
% satisfy that $f$ necessarily maps them to different allocations. For illustration, note that 
%  The reason for it is that this allocation  gives welfare of $K^4$, whilst any other allocation provides welfare of at most $n-1$. In contrast,  Thus, the paths of the behaviors $(\mathcal S_1(v_1^{one}),\ldots \mathcal S_n(v_n^{one}))$ and  
\end{proof}

We now use Lemma \ref{lemma-not-same-leaf}: Let $u$ be the first vertex in the protocol such that there exist $(v_1,\ldots,v_n),(v_1',\ldots,v_n')\in \mathcal V$ where the behavior profiles $(\mathcal{S}_1(v_1),\ldots,\mathcal{S}_n(v_n))$ and $(\mathcal{S}_1(v_1'),\ldots,\mathcal{S}_n(v_n'))$ diverge. Note that $u\in Path(\mathcal{S}_1(v_1),\ldots,\mathcal{S}_n(v_n))\cap Path(\mathcal{S}_1(v_1'),\ldots,\mathcal{S}_n(v_n'))$.
We remind that each vertex is associated with only one player that sends messages in it. The player that is associated with vertex $u$ is necessarily either player $1$ or player $2$ because every  player  $i\ge 3$    has  only one valuation in $\mathcal V_i$, so the set  $\{\mathcal S_i(v_i)\}_{v_i\in \mathcal V_i}$ is a singleton.  
% contains only one behavior for those players.

We assume without loss of generality that vertex $u$ is associated with player $1$, meaning that    there exist $v_1,v_1'\in \mathcal{V}_1$ such that $\mathcal S_1(v_1)$ and $\mathcal S_1(v_1')$ dictate different messages at vertex $u$. We remind that $\mathcal{V}_1=\{v_1^{one},v_1^{ONE},v_1^{all}\}$, so
% Thus,
the  following claims jointly imply a contradiction, completing the proof:
\begin{claim}\label{claim-oneone-same}
    The strategy $\mathcal S_1$ dictates the same message at vertex $u$ for the valuations $v_1^{one}$ and $v_1^{ONE}$. 
\end{claim}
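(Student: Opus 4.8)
\textbf{Proof plan for Claim \ref{claim-oneone-same}.}

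The plan is to apply Lemma \ref{lemma-bad-leaf-good-leaf} twice, in both directions, to the valuations $v_1^{one}$ and $v_1^{ONE}$ at vertex $u$. Recall that $u$ is the \emph{first} vertex where behaviors corresponding to profiles in $\mathcal V$ diverge; hence for any completions $v_{-1},v_{-1}'$ built from $\mathcal V_2\times\cdots\times\mathcal V_n$, the behavior profiles $(\mathcal S_1(v_1^{one}),\mathcal S_{-1}(v_{-1}))$ and $(\mathcal S_1(v_1^{ONE}),\mathcal S_{-1}(v_{-1}'))$ both pass through $u$ (every earlier vertex along these paths sees identical messages). So condition (1) of Lemma \ref{lemma-bad-leaf-good-leaf} is automatically met; what remains is to exhibit completions that violate condition (2) in one direction, forcing the negation of its conclusion to fail — i.e., forcing $\mathcal S_1$ to send the same message at $u$ for $v_1^{one}$ and $v_1^{ONE}$.

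First I would fix a convenient completion of the other bidders, say $v_{-1}=v_{-1}'=(v_2^{one},v_3^{one},\ldots,v_n^{one})$, so that the two full profiles are $(v_1^{one},v_{-1})$ and $(v_1^{ONE},v_{-1})$. The key point is that $\mathcal M$ together with $\mathcal S$ is \emph{dominant-strategy} (being obviously strategy-proof), so the truthfulness inequality displayed before the ``Obviously Strategy-Proof Mechanisms'' subsection applies to player $1$ with these two valuations. Writing $(T,q)$ for player $1$'s allocation–payment pair under $(v_1^{one},v_{-1})$ and $(T',q')$ for the pair under $(v_1^{ONE},v_{-1})$: truthfulness for $v_1^{one}$ gives $v_1^{one}(T)-q \ge v_1^{one}(T')-q'$, and truthfulness for $v_1^{ONE}$ gives $v_1^{ONE}(T')-q' \ge v_1^{ONE}(T)-q$. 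Combined with individual rationality ($v_1^{one}(T)-q\ge 0$, $v_1^{ONE}(T')-q'\ge 0$) and no-negative-transfers ($q,q'\ge 0$), these standard manipulations pin down the relationship between the two outcomes. The upshot I expect is that both outcomes give player $1$ the \emph{same} utility when measured by $v_1^{one}$ (or one can argue directly that $q=q'$ and that player $1$ receives a positive quantity in both, or receives nothing in both), so that neither direction of the strict inequality in condition (2) of Lemma \ref{lemma-bad-leaf-good-leaf} can hold. Hence by the lemma, $\mathcal S_1$ dictates the same message at $u$ for $v_1^{one}$ and $v_1^{ONE}$, as claimed.

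The main obstacle will be handling the case analysis cleanly. The subtlety is that ``same utility under $v_1^{one}$'' is exactly the hypothesis under which Lemma \ref{lemma-bad-leaf-good-leaf} is silent (it needs a \emph{strict} gap), so I cannot conclude directly from equality of utilities; I instead need the two \emph{strict} inequalities $v_1^{one}(T)-q < v_1^{one}(T')-q'$ and its reverse to both fail, which is exactly what the truthfulness-plus-IR-plus-NNT bookkeeping delivers. One has to be a little careful that $v_1^{one}$ and $v_1^{ONE}$ differ only in the scalar value ($1$ versus $k^2+1$) on the same threshold quantity $q=1$, so $T$ contains a nonzero quantity iff $v_1^{one}(T)>0$ iff $v_1^{ONE}(T)>0$; this monotonicity in the value parameter is what lets the inequalities interact. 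If a residual case survives (e.g. player $1$ gets a positive quantity under one profile but not the other), I would rule it out by noting that the larger value $v_1^{ONE}$ would then profit by deviating to the report that secures the item, contradicting dominant-strategy incentive compatibility together with no-negative-transfers. Once every case is eliminated, Lemma \ref{lemma-bad-leaf-good-leaf} closes the argument.
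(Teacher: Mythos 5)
Your plan has the logic of Lemma \ref{lemma-bad-leaf-good-leaf} inverted, and this is fatal to the approach. The lemma is an implication of the form: \emph{if} both paths pass through $u$ \emph{and} there is a \emph{strict} utility gap between the two reached leaves (measured under $v_i$), \emph{then} $\mathcal S_i$ sends the same message at $u$ for the two valuations. You propose to show that neither strict inequality holds and then conclude "hence by the lemma" that the messages agree — but when the strict-gap hypothesis fails the lemma is simply silent; it does not deliver its conclusion. Worse, with your choice of identical completions $v_{-1}=v_{-1}'=(v_2^{one},\ldots,v_n^{one})$ the lemma can \emph{never} be triggered: dominant-strategy incentive compatibility against the same $v_{-1}$ gives $v_1^{ONE}(T')-q'\ge v_1^{ONE}(T)-q$ and $v_1^{one}(T)-q\ge v_1^{one}(T')-q'$, which are exactly the negations of the two strict gaps one would need. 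The whole point of the lemma is that $v_{-i}$ and $v_{-i}'$ may be \emph{different} profiles of the other bidders; that freedom is what your plan gives away.

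The paper's proof uses precisely that freedom: it compares the profile $(v_1^{one},v_2^{one},\ldots,v_n^{one})$, where Lemma \ref{lemma-small-pay} part 1 guarantees bidder $1$ wins at least one item and pays at most $1$, so her utility under $v_1^{ONE}$ is at least $k^2$, against the profile $(v_1^{ONE},v_2^{all},v_3^{one},\ldots,v_n^{one})$, where the approximation guarantee forces all items to bidder $2$, so bidder $1$ gets nothing and (by IR plus no-negative-transfers) pays nothing, utility $0$ (Lemma \ref{lemma-small-pay} part 2). This produces the required strict gap under $v_1^{ONE}$; since both behavior profiles are built from $\mathcal V$ and hence pass through $u$ (the first divergence vertex), Lemma \ref{lemma-bad-leaf-good-leaf} then yields Claim \ref{claim-oneone-same}. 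The missing idea in your proposal is to vary bidder $2$'s valuation ($v_2^{all}$ versus $v_2^{one}$) so that the approximation guarantee itself manufactures the strict gap; no amount of truthfulness-plus-IR-plus-NNT bookkeeping with a fixed completion can substitute for it.
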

\begin{claim}\label{claim-one-all-same}
        The strategy $\mathcal S_1$ dictates the same message at vertex $u$ for the valuations $v_1^{ONE}$ and $v_1^{all}$.
\end{claim}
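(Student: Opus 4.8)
I need to prove Claims~\ref{claim-oneone-same} and~\ref{claim-one-all-same}, each asserting that at vertex $u$ the strategy $\mathcal S_1$ sends the same message for a certain pair of valuations. The natural tool is Lemma~\ref{lemma-bad-leaf-good-leaf}: to force two valuations $v_1, v_1'$ to agree at $u$, it suffices to exhibit valuation profiles $(v_1,v_{-1})$ and $(v_1',v_{-1}')$ that both pass through $u$ under the prescribed strategies, such that the leaf reached by $(v_1',v_{-1}')$ is strictly more profitable for a bidder with valuation $v_1$ than the leaf reached by $(v_1,v_{-1})$. So for each claim I must (a) argue the relevant behaviors actually reach $u$, and (b) control the allocation/payment at the two leaves well enough to compare utilities.

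\textbf{Reaching $u$.} By the choice of $u$ as the \emph{first} vertex where behavior profiles from $\mathcal V$ diverge, every behavior profile $\mathcal S(w)$ with $w\in\mathcal V$ passes through $u$ — because up to $u$ all such profiles agree, and $u$ lies on the path of the two particular profiles $(v_1,\dots,v_n),(v_1',\dots,v_n')$ witnessing the divergence. In particular I am free to pick $v_{-1} = v_{-1}' = (v_2^{one}, v_3^{one},\dots,v_n^{one})$ (or other profiles in $\mathcal V_{-1}$) and vary only player $1$'s valuation among $\{v_1^{one}, v_1^{ONE}, v_1^{all}\}$; all resulting profiles pass through $u$. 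This is the step I'd do first and carefully, since everything else relies on it.

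\textbf{Claim~\ref{claim-oneone-same} ($v_1^{one}$ vs.\ $v_1^{ONE}$).} Here I want to show that pretending to be $v_1^{ONE}$ would be (weakly, in fact we need strictly for at least one direction) better for the real type $v_1^{one}$, or symmetrically. Use individual rationality and no-negative-transfers to pin down payments: with profile $(v_1^{ONE}, v_{-1})$ where $v_{-1}$ is all-$v^{one}$, the optimal welfare is dominated by giving player $1$ at least one item (her value $k^2+1$ swamps the at-most-$n-1\le k$ value the others can contribute), so the approximation guarantee forces $f$ to allocate player~$1$ a nonempty bundle; IR then bounds her payment by $k^2+1$, but more usefully, comparing to the profile $(v_1^{one},v_{-1})$: if at $(v_1^{one},v_{-1})$ player $1$ gets an item and pays $p$, then by dominant-strategy (Eq.~\eqref{eq-truthfulness}-type inequality, which OSP implies) a $v_1^{one}$ bidder can't benefit by deviating to $v_1^{ONE}$ and vice versa, which together with IR forces the payment for $v_1^{ONE}$ to also be at most $1$ whenever she wins; then one checks the utilities at the two leaves are comparable in the way Lemma~\ref{lemma-bad-leaf-good-leaf} needs. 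The cleanest route: show that under $(v_1^{ONE}, v_{-1})$ player~$1$ wins and pays at most $1$ (IR for $v_1^{one}$ plus truthfulness between the two types), so her utility as a $v_1^{one}$-type at that leaf is $\ge 0$; and under $(v_1^{one},v_{-1})$ if she does \emph{not} win she has utility $0$ while paying $0$ — so I need the leaves to differ, which is where I must also bring in the $v_1^{all}$ profile or a welfare argument showing the win is forced. I expect to need to chase a short case analysis on whether player $1$ wins under each profile.

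\textbf{Claim~\ref{claim-one-all-same} ($v_1^{ONE}$ vs.\ $v_1^{all}$).} This is the analogous argument with the gap between $k^2+1$ and $k^4$. Under $(v_1^{all}, v_{-1})$ with $v_{-1}$ all-$v^{one}$, the approximation guarantee forces $f$ to give player~$1$ the \emph{entire} bundle of $m$ items (nothing less has positive value to her, and the alternative welfare is $\le k \ll k^4$), and IR bounds her payment by $k^4$; but by truthfulness between $v_1^{ONE}$ and $v_1^{all}$, her payment when winning everything must in fact be at most $k^2+1$ (else a $v_1^{ONE}$-type... wait, careful: $v_1^{all}$ values the full bundle at $k^4$, so this needs the reverse deviation). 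The key point will be: at the leaf reached by $(v_1^{all},v_{-1})$, player~$1$ gets all items; evaluated by a $v_1^{ONE}$-bidder that's value $k^2+1$ minus a payment that truthfulness caps appropriately, giving nonnegative utility — whereas at the leaf reached by $(v_1^{ONE},v_{-1})$ a $v_1^{ONE}$-bidder might get only one item or get a worse deal, and I invoke Lemma~\ref{lemma-bad-leaf-good-leaf} in the direction that forces agreement. \textbf{The main obstacle} I anticipate is exactly this bookkeeping of which leaf is more profitable for which type, and ensuring the inequality in condition~2 of Lemma~\ref{lemma-bad-leaf-good-leaf} is \emph{strict} — this will rely on the strict separation $1 < k^2+1 < k^4$ together with IR/no-negative-transfers pinning payments, and on the approximation guarantee forcing specific allocations (player~1 wins a nonempty set, resp.\ the grand bundle) at the relevant profiles. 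Once the allocations are forced, the payment comparisons follow from IR plus the ordinary dominant-strategy incentive constraints among $v_1^{one}, v_1^{ONE}, v_1^{all}$, and the two claims drop out.
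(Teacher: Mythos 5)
Your high-level framework is the right one (all profiles in $\mathcal V$ reach $u$ because $u$ is the first divergence point, then apply Lemma \ref{lemma-bad-leaf-good-leaf}), but for Claim \ref{claim-one-all-same} the concrete instantiation you sketch does not close, and the two missing ingredients are exactly the ones the paper supplies. First, you keep $v_{-1}$ fixed at $(v_2^{one},\ldots,v_n^{one})$ on both sides of the comparison. Under $(v_1^{ONE},v_2^{one},\ldots,v_n^{one})$ the approximation guarantee forces player $1$ to win an item, so a $v_1^{ONE}$-bidder's utility at that leaf is only bounded below by $0$ and could be as large as $k^2+1$; you therefore cannot establish the \emph{strict} inequality that condition 2 of Lemma \ref{lemma-bad-leaf-good-leaf} requires, in either direction. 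The paper avoids this by exploiting that $\mathcal V_2$ contains $v_2^{all}$: at the profile $(v_1^{ONE},v_2^{all},v_3^{one},\ldots,v_n^{one})$ the approximation guarantee forces all items to player $2$, and individual rationality plus no-negative-transfers pin player $1$'s utility at exactly $0$ (Lemma \ref{lemma-small-pay}, part 2). That zero-utility leaf is what the all-items leaf is compared against.

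Second, your cap on the payment at the $(v_1^{all},v_2^{one},\ldots,v_n^{one})$ leaf does not follow from ``truthfulness between $v_1^{ONE}$ and $v_1^{all}$,'' as you half-suspected: the IC constraint in which the $v_1^{ONE}$-type deviates to $v_1^{all}$ only yields a \emph{lower} bound on $P_1(v_1^{all},v_{-1})$, and the reverse constraint involves the unknown allocation at the $v_1^{ONE}$ report, so among the three valuations of $\mathcal V_1$ alone IR gives you nothing better than $P_1\le k^4$, which is useless against a value of $k^2+1$. The paper's proof of Lemma \ref{lemma-small-pay}, part 3, introduces an auxiliary single-minded valuation $\hat v_1$ (value $k^2$ for the grand bundle, available in the full domain $V_1$ though not in $\mathcal V_1$): the approximation guarantee forces $\hat v_1$ to win all $m$ items, IR caps its payment at $k^2$, and since $f$ gives the same allocation to $\hat v_1$ and to $v_1^{all}$, dominant-strategy incentive compatibility forces equal payments, so $P_1(v_1^{all},v_2^{one},\ldots,v_n^{one})\le k^2$. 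Only then does the comparison $k^2+1-k^2=1>0$ go through and Lemma \ref{lemma-bad-leaf-good-leaf} yield the claim. Without the $v_2^{all}$ profile and the auxiliary-valuation payment cap, the ``bookkeeping'' you flag as the main obstacle cannot be completed.
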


% Observe that by construction, every $v\in \mathcal V$ satisfies that vertex $u$ is in $Path(\mathcal S(v))$. 
In the proofs of Claims \ref{claim-oneone-same} and Claim \ref{claim-one-all-same}, we use the following lemma, which is a collection of observations about the allocation and the payment scheme of player $1$:    
% Note that since $\mathcal M$ is an obviously strategy-proof mechanism, it is also a dominant strategy mechanism. 
% We now assume towards a contradiction that the mechanism $\mathcal M$ and the strategies $(\mathcal S_1,\ldots,\mathcal S_n)$ realize $(f,P_1,\ldots,P_n)$ in obviously-dominant strategies. 
% Thus, it is clearly also a dominant-strategy mechanism.
% We use this fact together with the fact that the implementation also satisfies individual rationality and no-negative-transfers to derive some basic observations about the allocation and the payment scheme of player $1$:
\begin{lemma}\label{lemma-small-pay}
    The allocation rule $f$ and the payment scheme $P_1$ of bidder $1$ satisfy that:
    % Let $f$ be an allocation rule and let $P_1$ be the payment scheme of bidder $1$ that 
    % The allocation rule $f$ and the  payment scheme $P_1$ $(f,P_1,\ldots,P_n):V_1\times \cdots \times V_n\to \mathbb{R}^{n}$
    % are realized by a dominant-strategy, individually rational and no-negative-transfers mechanism. Then: 
    \begin{enumerate}
        \item Given $(v_1^{one},v_{2}^{one},\ldots,v_{n}^{one})$, bidder $1$ wins at least one item and pays at most $1$.  \label{item-1}
        \item  Given $(v_1^{ONE},v_2^{all},v_3^{one},\ldots,v_n^{one})$, bidder $1$ gets the empty bundle and pays zero.   \label{item-2}
        \item Given $(v_1^{all},v_2^{one},\ldots,v_n^{one})$, bidder $1$ wins $m$ items and pays at most $k^2$. \label{item-3}  
        % \item Given $(v_1^{ONE},v_2^{all},v_3^{one},\ldots,v_n^{one})$, bidder $1$ gets the empty bundle and pays zero. \label{item-4}
    \end{enumerate}
\end{lemma}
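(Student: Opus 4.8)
The plan is to prove the three items of Lemma~\ref{lemma-small-pay} essentially independently, each time combining the approximation guarantee of $f$ (to pin down the allocation, or at least the bundle of bidder~$1$) with individual rationality and no-negative-transfers (to bound the payment), and then invoking Lemma~\ref{lemma-bad-leaf-good-leaf} where a payment bound is not immediate from IR alone. Throughout, recall $k=\max\{m,n\}$, so $k^2+1$, $k^4$ are large compared to $1$, and an approximation ratio strictly better than $\min\{m,n\}\le k$ is what forces the ``right'' allocation in each profile.

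For item~\ref{item-1}: under $(v_1^{one},\ldots,v_n^{one})$ the optimal welfare is $\min\{m,n\}$ (one unit to each of $\min\{m,n\}$ bidders), so an allocation with ratio strictly better than $\min\{m,n\}$ must have welfare strictly greater than $1$, i.e.\ at least two bidders get a unit; by the earlier paragraph we already named these bidders $1$ and $2$, so bidder~$1$ wins at least one item, hence $v_1^{one}(f(v))=1$. No-negative-transfers gives $P_1\ge 0$ and individual rationality gives $P_1\le v_1^{one}(f(v))=1$, so bidder~$1$ pays at most~$1$.

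For item~\ref{item-2}: under $v'=(v_1^{ONE},v_2^{all},v_3^{one},\ldots,v_n^{one})$ the optimal allocation gives all $m$ units to bidder~$2$ for welfare $k^4$; any allocation not giving bidder~$2$ the full bundle has welfare at most $(k^2+1)+(n-2)\cdot 1 + 0 \le 2k^2$ (bidder~$1$'s value plus the $v^{one}$ bidders), and $k^4/(2k^2)=k^2/2 > \min\{m,n\}$ for $m,n\ge 2$, so $f(v')$ must award all items to bidder~$2$; hence bidder~$1$ gets $\emptyset$ and $v_1^{ONE}(f(v'))=0$. Individual rationality forces $P_1(v')\le 0$ and no-negative-transfers forces $P_1(v')\ge 0$, so the payment is exactly zero. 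The analogue is already recorded for bidder~$2$ in the symmetric profile $v$ of Lemma~\ref{lemma-not-same-leaf}.

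For item~\ref{item-3}: under $v=(v_1^{all},v_2^{one},\ldots,v_n^{one})$ the optimum gives all $m$ units to bidder~$1$ for welfare $k^4$, while any other allocation has welfare at most $(n-1)\cdot 1\le k \ll k^4/\min\{m,n\}$, so $f$ must give bidder~$1$ all $m$ items; thus $v_1^{all}(f(v))=k^4$ and IR only yields the weak bound $P_1\le k^4$. To get $P_1\le k^2$, I would apply Lemma~\ref{lemma-bad-leaf-good-leaf}: the vertex $u$ from the main argument lies on both $Path(\mathcal S(v))$ (with $v_1=v_1^{all}$) and $Path(\mathcal S(v'))$ where $v'$ is the profile of item~\ref{item-1}, $v'=(v_1^{one},v_2^{one},\ldots,v_n^{one})$ — indeed $u$ is defined to be the \emph{first} divergence point among profiles in $\mathcal V$, so all of $\mathcal V_1$'s behaviors agree up to and including reaching $u$. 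If bidder~$1$ with valuation $v_1^{all}$ were charged more than $k^2$ at the leaf of $\mathcal S(v)$, her utility there would be $k^4-P_1(v) < k^4$, whereas deviating to the behavior $\mathcal S_1(v_1^{one})$ against $v_{-1}$ would — hmm, this needs the right comparison leaf. The cleaner route, and the one I expect the paper takes: consider bidder~$1$ \emph{with true valuation $v_1^{all}$} contemplating at $u$ the behavior leading to the leaf of $\mathcal S(v')$; by item~\ref{item-1} that leaf gives bidder~$1$ a single unit at payment $\le 1$, so utility $\ge k^4-1$ under $v_1^{all}$ (since $v_1^{all}(\{{\ge}1\})$... no, $v_1^{all}$ values only the full bundle). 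So instead the comparison must use a profile where bidder~$1$'s deviation still wins everything; here the obstacle surfaces, and the fix is to compare against a profile $(v_1^{ONE}, v_2^{one},\ldots,v_n^{one})$ or similar, where the approximation guarantee again forces the full bundle to bidder~$1$ and IR for valuation $v_1^{ONE}$ caps the payment at $k^2+1$; Lemma~\ref{lemma-bad-leaf-good-leaf} applied at $u$ (which, being prior to any divergence, lies on this path too) then transfers that payment bound to the $v_1^{all}$ leaf, giving $P_1\le k^2+1$, and one sharpens to $k^2$ by a parity/integrality or by-one-off bookkeeping argument. The main obstacle, therefore, is item~\ref{item-3}: identifying the correct auxiliary valuation profile through vertex $u$ so that a \emph{low-utility} leaf for some valuation of bidder~$1$ certifies, via obvious dominance, a low payment at the \emph{high-value} leaf reached under $v_1^{all}$ — the first two items only need approximation plus IR and no-negative-transfers and are routine by comparison.
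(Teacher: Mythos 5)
Items \ref{item-1} and \ref{item-2} of your proposal are correct and match the paper's proof: the approximation guarantee pins down bidder $1$'s bundle, and individual rationality plus no-negative-transfers bound the payment. The genuine gap is in item \ref{item-3}, and you correctly sense it yourself. Your plan is to transfer a payment bound from an auxiliary profile through the vertex $u$ via Lemma \ref{lemma-bad-leaf-good-leaf}, but this is the wrong tool and the wrong auxiliary profile. First, Lemma \ref{lemma-bad-leaf-good-leaf} only concludes that a strategy sends the same \emph{message} at a vertex; it does not transfer payment bounds between leaves, and the present lemma is in any case a statement purely about $(f,P_1)$ that the paper proves without reference to $u$ at all (indeed $u$ is defined later, using this lemma). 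Second, your proposed comparison profile $(v_1^{ONE},v_2^{one},\ldots,v_n^{one})$ does not do what you claim: the approximation guarantee does \emph{not} force the full bundle to bidder $1$ there, since giving bidder $1$ a single unit already captures the value $k^2+1$ (and spreading the remaining units is welfare-optimal). So the allocations under $v_1^{ONE}$ and $v_1^{all}$ need not coincide, the DSIC comparison collapses (under $v_1^{all}$ a one-unit bundle is worth $0$), and the concluding ``sharpen $k^2+1$ to $k^2$ by parity/integrality'' is not an argument.

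The paper's fix is to choose the auxiliary valuation so that the \emph{same} allocation is forced: define $\hat v_1$ single-minded on the full quantity $m$ with value $k^2$ (this lies in the domain). Under $(\hat v_1,v_2^{one},\ldots,v_n^{one})$, any allocation not giving bidder $1$ all $m$ units has welfare at most $\min\{m,n-1\}\le k$ against an optimum of $k^2$, a ratio of at least $k\ge\min\{m,n\}$, so $f$ must give bidder $1$ the full bundle; individual rationality then gives $P_1(\hat v_1,v_{-1})\le k^2$. Since $f$ also gives bidder $1$ the full bundle under $(v_1^{all},v_2^{one},\ldots,v_n^{one})$, dominant-strategy incentive compatibility forces equal payments at equal allocations (otherwise bidder $1$ would imitate whichever report is cheaper), hence $P_1(v_1^{all},v_{-1})=P_1(\hat v_1,v_{-1})\le k^2$. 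This plain DSIC payment-equality step, with an auxiliary valuation whose forced allocation matches that of $v_1^{all}$, is the missing idea in your item \ref{item-3}.
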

The lemma is a direct consequence of the properties of the mechanism. 
% The proof of lemma can be directly derived by the properties of the mechanism, such as its approximation guranatee, the fact that it is obvously strategy-proof a 
% We prove the lemma
% We prove the lemma by using the approximation guarantee of the mechanism, together with the fact that it is o
% The proof of the lemma is
% that stems from the fact that it is a dominant-strategy mechanism  (because it is obviously strategy-proof), and satisfies no-negative-transfers and individual rationality.
We use 
it now and defer the proof to Appendix \ref{subsec-small-pay-proof}.
% We now use Lemma \ref{lemma-small-pay} to obtain a contradiction. 
% The proof proceeds as follows. We will show that since $\mathcal S_1$ is obviously dominant, then it necessarily dictates the same message for $v_1^{one}$ and for $v_1^{ONE}$ at vertex $u$. Using similar arguments, we will also show that $\mathcal S_1$ dictates the same message for $v_1^{ONE}$ and $v_1^{all}$ at vertex $u$. Thus, the strategy $\mathcal S_1$ assigns the same message to all the valuations in $\mathcal V_1$, so we get a contradiction, which completes the proof.  
%%%PROOF-OF-PLACE%%%
\begin{proof}[Proof of Claim \ref{claim-oneone-same}]
    Note that by Lemma \ref{lemma-small-pay} part \ref{item-1}, $f(v_1^{one},\ldots,v_n^{one})$ allocates at least one item to player $1$ and $P_1(v_1^{one},\ldots,v_n^{one})\le 1$. Therefore:
\begin{equation}\label{eq-good-leaf1}
 v_1^{ONE}(f(v_1^{one},\ldots,v_n^{one}))-P_1(v_1^{one},\ldots,v_n^{one})\ge k^2   
\end{equation}
 In contrast, by part \ref{item-2} of Lemma \ref{lemma-small-pay},   $f(v_1^{ONE},v_2^{all},v_3^{one},\ldots,v_n^{one})$ allocates no items to player $1$ and $P_1(v_1^{ONE},v_2^{all},v_3^{one},\ldots,v_n^{one})=0$, so:
 \begin{equation}\label{eq-bad-leaf1}
 v_1^{ONE}(f(v_1^{ONE},v_2^{all},v_3^{one},\ldots,v_n^{one}))-P_1(v_1^{ONE},v_2^{all},v_3^{one},\ldots,v_n^{one})= 0   
\end{equation}
Combining inequalities (\ref{eq-good-leaf1}) and (\ref{eq-bad-leaf1}) gives:
\begin{align*}
  v_1^{ONE}(f(v_1^{ONE},v_2^{all},v_3^{one},\ldots,v_n^{one}))-&P_1(v_1^{ONE},v_2^{all},v_3^{one},\ldots,v_n^{one})< \\
  &v_1^{ONE}(f(v_1^{one},\ldots,v_n^{one}))-P_1(v_1^{one},\ldots,v_n^{one})  
\end{align*}
We remind that vertex $u$ belongs in $Path(\mathcal S_1(v_1^{one}),\mathcal S_2(v_2^{one}),\ldots,\mathcal S_n(v_n^{one}))$ and also in
$Path(\mathcal{S}_1(v_1^{ONE}),\allowbreak\mathcal{S}_2(v_2^{all}), \mathcal{S}_3(v_3^{one}),\ldots,\mathcal{S}_n(v_n^{one}))$. Therefore, Lemma \ref{lemma-bad-leaf-good-leaf} gives that the strategy $\mathcal S_1$ dictates the same message for  $v_1^{one}$ and $v_1^{ONE}$ at vertex $u$. See Figure \ref{subfig-1} for an illustration.
\end{proof}
% We begin by showing that $\mathcal S_1$ is not obviously dominant if it dictates different messages for $v_1^{one}$ and $v_1^{ONE}$. 
% Let $B_{-1}$ be the behavior of players $2,\ldots,n$ that is specified by $(\mathcal S_2(v_2^{one}),\ldots,\mathcal S_n(v_n^{one}))$. Let $B_{-1}'$ be the behavior that is specified by $(\mathcal S_2(v_2^{all}),\mathcal S_3(v_3^{one}),\ldots,\mathcal S_n(v_n^{one}))$. 
% Since the mechanism $\mathcal M$ together with the strategies $(\mathcal S_1,\ldots,\mathcal S_n)$ realize $f$ and the payment scheme $P_1$, 

% We remind that by assumption, $f(v_1^{one},\ldots,v_n^{one})$ allocates at least $1$ item to player $1$. Since the mechanism $\mathcal M$ together with the strategy profile $(\mathcal S_1,\ldots, \mathcal{S}_n)$ realize $f$, by Lemma \ref{lemma-small-pay} 

\begin{figure}[h]
    \centering
    \caption{\footnotesize Illustrations of the tree rooted at vertex $u$ of the mechanism $\mathcal M$. Figure \ref{subfig-1} depicts the case where strategy $\mathcal S_1$ dictates different messages for $v_1^{one}$ and $v_1^{ONE}$. Figure \ref{subfig-12} describes the case where  $\mathcal S_1$ dictates different messages for $v_1^{all}$ and $v_1^{ONE}$. 
    Both of them demonstrate that if bidder $1$ sends different messages for either $v_1^{one}$ and $v_1^{ONE}$ or for $v_1^{ONE}$ and $v_1^{all}$, then the strategy $\mathcal S_1$ is not obviously dominant.  
    }
\begin{subfigure}[b]{0.45\textwidth}
		
    \centering
        \setlength{\belowcaptionskip}{12pt}

    		\begin{tikzpicture}[scale=0.8]
			\node[shape=circle,draw=black,minimum size=0.8cm] (u) at (2,2) {$u$};
			\node[shape=circle,draw=black,minimum size=0.8cm] (v) at (0,0) {};
			%	\node[shape=circle,draw=black,minimum size=0.8cm] (l) at (-1,-1) {};
			\node[shape=circle,fill=magenta!30,draw=black,minimum size=0.8cm,label=below:{\small $(s_1\ge 1,p_1\le 1)$}] (l) at (-1,-1) {\small$\ell$};
			\node[shape=circle,draw=black,minimum size=0.8cm] (v') at (4,0) {}; 
			%	\node[shape=circle,draw=black,minimum size=0.8cm] (l') at (4,-1) {};
			\node[shape=circle,draw=black,,fill=gray!40,minimum size=0.8cm,label=below:{\small $(s_1'=0,p_1'=0)$}] (l') at (5,-1) {\small$\ell'$};
			\node[shape=circle,draw=black,minimum size=0.8cm] (n') at (3,-1) {}; 
			\node[shape=circle,draw=black,minimum size=0.8cm] (n) at (1,-1) {};

			\draw [->] (u) edge  node[sloped, above] {\scriptsize$\mathcal S_1(v_1^{one})$} (v);
			\draw [->] (u) edge  node[sloped, above] {\scriptsize $\mathcal S_1(v_1^{ONE})$} (v');
			\draw [->] (v) edge[dotted]  node[sloped, above] {}  (l);
			\draw [->] (v') edge[dotted]  node[sloped, above] {} (l');
			\draw [->] (v) edge[dotted]  node[sloped, above] {} (n);
			\draw [->] (v') edge[dotted]  node[sloped, above] {} (n');
		\end{tikzpicture}
  \caption{\footnotesize Let us denote the leaves that $(\mathcal S_1(v_1^{one}),\ldots,\mathcal S_n(v_n^{one}))$
and $(\mathcal S_1(v_1^{ONE}),\allowbreak S_2(v_2^{all}),\allowbreak S_3(v_3^{one})\ldots,\mathcal S_n(v_n^{one}))$ reach
with $\ell$ and  $\ell'$ respectively. The outcome of leaf $\ell$ is more profitable than the outcome of leaf $\ell'$ for bidder $i$ with valuation $v_i^{ONE}$, which is why $\mathcal S_1$ is not obviously dominant in this case.} \label{subfig-1} 
\end{subfigure}
      \hspace{12pt} 
\begin{subfigure}[b]{0.45\textwidth}
        \setlength{\belowcaptionskip}{12pt}
    
    \centering
    		\begin{tikzpicture}[scale=0.8]
			\node[shape=circle,draw=black,minimum size=0.8cm] (u) at (2,2) {$u$};
			\node[shape=circle,draw=black,minimum size=0.8cm] (v) at (0,0) {};
			%	\node[shape=circle,draw=black,minimum size=0.8cm] (l) at (-1,-1) {};
			\node[shape=circle,fill=magenta!30,draw=black,minimum size=0.8cm,label=below:{\small $(s_1=m,p_1\le k^2)$}] (l) at (-1,-1) {\small$\ell$};
			\node[shape=circle,draw=black,minimum size=0.8cm] (v') at (4,0) {}; 
			%	\node[shape=circle,draw=black,minimum size=0.8cm] (l') at (4,-1) {};
			\node[shape=circle,draw=black,,fill=gray!40,minimum size=0.8cm,label=below:{\small $(s_1'=0,p_1'=0)$}] (l') at (5,-1) {\small$\ell'$};
			\node[shape=circle,draw=black,minimum size=0.8cm] (n') at (3,-1) {}; 
			\node[shape=circle,draw=black,minimum size=0.8cm] (n) at (1,-1) {};

			\draw [->] (u) edge  node[sloped, above] {\scriptsize$\mathcal S_1(v_1^{all})$} (v);
			\draw [->] (u) edge  node[sloped, above] {\scriptsize $\mathcal S_1(v_1^{ONE})$} (v');
			\draw [->] (v) edge[dotted]  node[sloped, above] {}  (l);
			\draw [->] (v') edge[dotted]  node[sloped, above] {} (l');
			\draw [->] (v) edge[dotted]  node[sloped, above] {} (n);
			\draw [->] (v') edge[dotted]  node[sloped, above] {} (n');
		\end{tikzpicture}
  \caption{\footnotesize We denote the leaves that $(\mathcal S_1(v_1^{all}),\mathcal S_2(v_2^{one}),\ldots,\mathcal S_n(v_n^{one}))$
and $(\mathcal S_1(v_1^{ONE}),\allowbreak S_2(v_2^{all}),\allowbreak S_3(v_3^{one})\ldots,\mathcal S_n(v_n^{one}))$ reach
with $\ell$ and  $\ell'$ respectively. The outcome of leaf $\ell$ is more profitable than the outcome of leaf $\ell'$ for bidder $i$ with valuation $v_i^{ONE}$, which is why $\mathcal S_1$ is not obviously dominant in this case.} \label{subfig-12}
\end{subfigure}
\end{figure}
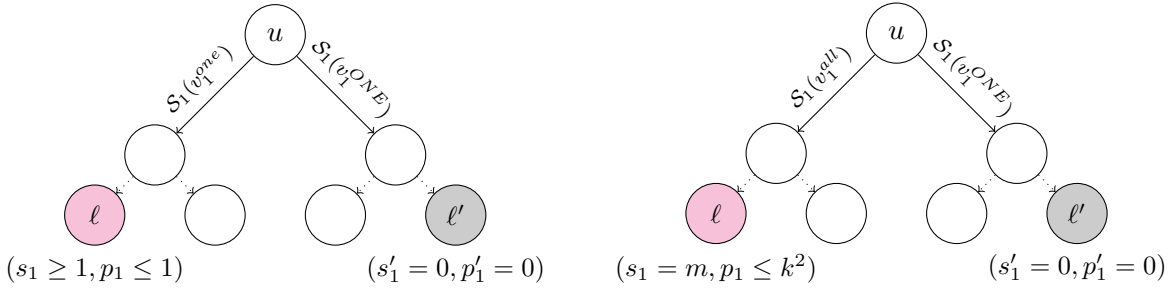
%%%PROOF-OF-PLACE%%%
\begin{proof}[Proof of Claim \ref{claim-one-all-same}]
Following the same approach as in the proof of Claim \ref{claim-oneone-same}:
\begin{align*}
v_1^{ONE}(f(v_1^{all},v_2^{one},\ldots,v_n^{one}))-P_1(v_1^{all},v_2^{one},\ldots,v_n^{one}) &\ge k^2+1-k^2 \quad \text{(Lemma \ref{lemma-small-pay}.\ref{item-3})}
% &\text{(by Lemma \ref{lemma-small-pay} Part \ref{item-3})} 
\\
& >0\\
&=v_1^{ONE}(f(v_1^{ONE},v_2^{all},v_3^{one},\ldots,v_n^{one}))  \\ 
&-P_1(v_1^{ONE},v_2^{all},v_3^{one},\ldots,v_n^{one}) \quad \text{(Lemma \ref{lemma-small-pay}.\ref{item-2})} 
% &\text{(by Lemma \ref{lemma-small-pay} Part \ref{item-2})} 
\end{align*}
% where the first inequality is by Lemma \ref{lemma-small-pay} part \ref{item-3} and the equality is by Lemma \ref{lemma-small-pay} part \ref{item-2}.
Since by assumption vertex $u$ belongs in $Path(\mathcal S_1(v_1^{all}),\mathcal S_2(v_2^{one}),\ldots,\mathcal S_n(v_n^{one}))$ and in
$Path((\mathcal{S}_1(v_1^{ONE}),\allowbreak\mathcal{S}_2(v_2^{all}), \mathcal{S}_3(v_3^{one}),\ldots,\mathcal{S}_n(v_n^{one})))$. By Lemma \ref{lemma-bad-leaf-good-leaf}, the strategy $\mathcal S_1$ dictates the same message for the valuations $v_1^{ONE}$ and $v_1^{all}$ at vertex $u$. See Figure \ref{subfig-12} for an illustration.    
\end{proof}
% We will now show that  $\mathcal{S}_1$ dictates the same message for  $v_1^{ONE}$ and $v_1^{all}$. 
 % To conclude, $\mathcal{S}_1$ dictates the same message for all the valuations in $\mathcal V_1$, and we get a contradiction. 

\subsection{The Combinatorial Auction Setting} \label{subsec-ca}
%\toedit{We remind that there exists a mechanism that achieves an approximation of $k$ to the optimal welfare for unknown single-minded bidders, if the largest set desired is of size $k$.} 
In this section, we consider unknown single-minded bidders in a combinatorial auction.   
% The valuation of each player $i$ is specified by $v_i:[m]\to \mathbb{R}$.  
The definition of 
unknown single-minded bidders is as follows:  a valuation $v_i$ is \emph{single-minded} if it is parameterized with a bundle $S^\ast$ and a value $\alpha$ such that:
$$
v_i(S)=\begin{cases}
    \alpha \quad S^\ast\subseteq  S, \\
    0 \quad \text{otherwise.}
\end{cases}
$$
Similarly to the definition of unknown single-minded bidders in a multi-unit auctions, 
% {If the domain of valuations of a bidder contains solely single-minded valuations, we say that the 
bidder is \emph{single-minded} if all the valuations in her domain are single-minded. If all the valuations are parameterized with the same quantity, then the bidder is \emph{known single-minded} and otherwise, she is \emph{unknown single-minded}.
% bidder $i$ is \emph{known single-minded} if all the valuations in the domain $V_i$ are parameterized with the same bundle $S^\ast$, and otherwise she is \emph{unknown single-minded}. 
We now prove a tight impossibility for unknown single-minded bidders:
% Bidder $i$ is \emph{unknown single-minded} if her domain $V_i$ contains  two valuations $v_i^1,v_i^2$ whose desired sets are different. 
% \toedit{We can now prove that if the bidders are unknown single-minded, then every obviously strategy-proof mechanism 

%%%PROOF-OF-PLACE%%%
\begin{proof}[Proof of Theorem \ref{cor-ca-sm}]    
Fix an auction with a set of items  $M$ and $n$ bidders. 
Denote the elements in the set of items with $\{e_1,\ldots,e_m\}$. 
We assume that the domain $V_i$ of each bidder consists of single-minded valuations with values in  $\{0,1,\ldots,poly(m,n)\}$.

Assume towards a contradiction that 
there exists an obviously strategy-proof, individually rational, no-negative-transfers mechanism $\mathcal M$ together with strategy profile $\mathcal S=(\mathcal S_1,\ldots,\mathcal S_n)$
that implement an allocation rule and payment schemes  $(f,P_1,\ldots,P_n):V_1\times\cdots\times V_n\to\mathcal T\times \mathbb R^n$, where $f$ 
gives an approximation strictly better than $\min\{m,n\}$ to the optimal social welfare.  

% Thus, there exists
% an allocation rule
% $f:V_1\times\cdots\times V_n\to \mathcal T$  and payment schemes  $P_1,\ldots,P_n:V_1\times \cdots \times V_n\to \mathbb{R}^n$ the payment schemes  

% Fix an allocation rule  with approximation ratio $\alpha$ such that $\alpha>\max\{\frac{1}{n},\frac{1}{m}\}$. Let $\mathcal{M}$ be a normalized mechanism 
% and strategies  $(\mathcal S_1,\ldots,\mathcal{S}_n)$  that realize $f$ together with the payment schemes $P_1,\ldots,P_n:V_1\times \cdots \times V_n\to \mathbb{R}^n$. 

We define several valuations. For each player $i\le m$, we define:
$$
v_i^{one}(S)=\begin{cases}
1 \quad e_i\in S, \\
0 \quad \text{otherwise.}
\end{cases}
$$
For each player $i>m$, we set:
$$
v_i^{one}(S)=\begin{cases}
1 \quad e_m\in S, \\
0 \quad \text{otherwise.}
\end{cases}
$$
Observe that given the valuation profile $(v_1^{one},\ldots,v_n^{one})$, there are at least two bidders that win a valuable item because of the approximation guarantee of $f$.
% such that $f$ assigns to them at least one item because of its approximation guarantee.  
% The reason for it is that the optimal welfare given $(v_1^{one},\ldots,v_n^{one})$  is $\min\{m,n\}$. Since $f$ guarantees strictly more than $\max\{\frac{1}{m},\frac{1}{n}\}$ of the welfare, the welfare has to be strictly larger than $1$. 
For convenience, assume those bidders are bidder $1$ and bidder $2$ who win items $e_1$ and $e_2$ respectively.
% Without loss of generality, we assume that bidders $1$ and $2$ are allocated at least one item each. 
We now define two additional valuations for bidders $1$ and $2$:
$$
v_1^{all}(S)=\begin{cases}
k^4 \quad S=M, \\
0 \quad \text{otherwise.}
\end{cases} \\\quad 
v_1^{ONE}(S)= \begin{cases}
k^2+1 \quad e_1\in S, \\
0 \quad \text{otherwise.}
\end{cases}
$$
$$
v_2^{all}(S)=\begin{cases}
k^4 \quad S=M, \\
0 \quad \text{otherwise.}
\end{cases} \\\quad 
v_2^{ONE}(S)= \begin{cases}
k^2+1 \quad e_2\in S, \\
0 \quad \text{otherwise.}
\end{cases}
$$
where $k=\max\{m,n\}$. 
% We now show that the mechanism and the strategies $(\mathcal S_1,\ldots,\mathcal S_n)$ do not realize $f$ in obviously dominant strategies. 
For the analysis of the mechanism, we define the following subsets of the domains of the valuations:
$
\mathcal{V}_1=\{v_1^{one},v_1^{ONE},v_1^{all}\}$,  $\mathcal{V}_2=\{v_2^{one},v_2^{ONE},v_2^{all}\}$ and for every player $i\ge 3$, we set $\mathcal{V}_i=\{v_i^{one}\}$. 
% We use the notation $\mathcal V=\mathcal{V}_1\times\cdots\times \mathcal{V}_n$. 
The rest of the proof is identical to the proof of Theorem \ref{thm-lb-mua}. 
\end{proof}

\section{Obviously Strategy-Proof Mechanisms 
for Additive  Valuations}\label{sec-additive-unit}
In this section, we analyze mechanisms for additive valuations. We remind that a valuation $v_i:2^{M}\to \mathbb R$ is \emph{additive} if $v_i(S)=\sum_{j\in S}v_i(\{j\})$.
We begin by showing that for a restricted class of additive valuations, there is an obviously strategy-proof mechanism (Subsection \ref{subsec-mech}). 
% This mechanism demonstrates  what can go \textquote{wrong} when trying to prove an impossibility for additive valuations:  one needs to choose carefully the valuations for which the analysis is done, because otherwise there could be some obviously strategy-proof mechanism that gives a \textquote{good} approximation. 
In Subsection \ref{subsec-impos-add},
we
show that under standard assumptions, no obviously strategy-proof mechanism gives an approximation better than $\min\{m,n\}$.  
% This improves on the impossibility of \cite{BG17}, who show that for  two items and two additive bidders, normalized obviously strategy-proof mechanisms cannot  maximize the welfare. 
% The proof of Theorem \ref{thm-lb} in Subsection \ref{subsec-impos-add} is for the case of additive valuations.
The proof of impossibility for unit-demand valuations (Theorem \ref{thm-lb-unit})
 is very similar to the proof of Theorem \ref{thm-lb-add}. It can be found in Appendix \ref{appsec-unit}.

\subsection{Proof of Theorem \ref{thm-ub}: A Welfare-Maximizing Mechanism for a  (Restricted) Class of Additive Valuations} \label{subsec-mech}
Consider the following serial mechanism.
Player $1$ reports the items for which his value is $x_h$ and gets them at a price of $x_l$ per item. Next, player $2$ reports the remaining items for which his value is $x_h$,  gets them at a price of $x_l$, and the round procedure continues until no items or bidders are left.
If there are items left at the end of this round, we run a second serial round: this time, each bidder takes the remaining items for which his value is $x_l$. 

This mechanism maximizes the welfare because every item is allocated to a bidder that has maximum value for it: 
% The reason for it is that 
if an item has a bidder with value $x_h$ for it, by construction such a bidder wins it. If the highest value for the item is $x_l$, then it is allocated to a bidder at price $x_l$.

The mechanism is obviously strategy-proof  because  the utility of each bidder that takes $j$ high-valued items in the first round is fixed to be $j\cdot (x_h-x_l)$, no matter what the other bidders do or how many items with value $x_l$ he wins at the second round.
% is determined  by the number of items that he wins for which he has value of $x_h$.
% increase their profit, i.e. their value for the items minus their payment, only when they win items for which their value is $x_h$. 
% Thus, for every bidder $i$ that takes $j$ items at the first round,  
% his utility after sending the first message in the protocol is fixed to be $j\cdot (x_h-x_l)$, no matter what the other bidders do or how many items with value $x_l$ he wins at the second round. 
% Thus, In particular, misreporting to have high value $x_h$ instead of a low value $x_l$ of an does not increase the utility.  
Therefore, 
answering the queries of the mechanism truthfully is an obviously dominant strategy for all bidders.

\subsection{Impossibility Result: Proof of Theorem \ref{thm-lb-add}}
 \label{subsec-impos-add}

 Fix an auction with a set of items  $M$ and $n$ bidders, where the   domain $V_i$ of each bidder consists of additive valuations with values in  $\{0,1,\ldots,poly(m,n)\}$. We denote the elements  of $M$ with $\{e_1,\ldots,e_m\}$.
 
Assume towards a contradiction that 
there exists an obviously strategy-proof, individually rational, no-negative-transfers mechanism $\mathcal M$ together with strategies $\mathcal S=(\mathcal S_1,\ldots,\mathcal S_n)$
that implement an allocation rule and payment schemes  $(f,P_1,\ldots,P_n):V_1\times\cdots\times V_n\to\mathcal T\times \mathbb R^n$, where $f$ 
gives an approximation strictly better than $\min\{m,n\}$ to the optimal social welfare.  

{We start by defining several valuations. Since these valuations are additive, we can fully describe them by specifying their value for each item.}  
% Thus, there exists
% an allocation rule
% $f:V_1\times\cdots\times V_n\to \mathcal T$  and payment schemes  $P_1,\ldots,P_n:V_1\times \cdots \times V_n\to \mathbb{R}^n$ the payment schemes  
% Fix an allocation rule  with approximation ratio $\alpha$ such that $\alpha>\max\{\frac{1}{n},\frac{1}{m}\}$. Let $\mathcal{M}$ be a normalized mechanism 
% and strategies  $(\mathcal S_1,\ldots,\mathcal{S}_n)$  that realize $f$ together with the payment schemes $P_1,\ldots,P_n:V_1\times \cdots \times V_n\to \mathbb{R}^n$. 
For every player $i\le m$, we define:
$$
v_i^{e_i,one}(x)=\begin{cases}
1 \quad x=e_i, \\
0 \quad \text{otherwise.}
\end{cases}
$$
Whereas for every player $i>m$:
$$
v_i^{e_i,one}(x)=\begin{cases}
1 \quad x=e_m, \\
0 \quad \text{otherwise.}
\end{cases}
$$
 Observe that given the valuation profile $(v_1^{e_1,one},\ldots,v_n^{e_n,one})$, the approximation ratio of $f$
 guarantees that there are at least two bidders such that $f$ assigns to them a bundle that is valuable for them. For convenience, we assume that bidders $1$ and $2$ are allocated $e_1$ and $e_2$.\footnote{{Similarly to the proof of Theorem \ref{thm-lb-mua}, this is without loss of generality because if it would have been any other pair of bidders, the proof would have remained the same up to renaming of the bidders.}}
For the analysis, We define  additional valuations of players $1$ and $2$:
$$
% v_1^{e_1,small}(S)=\begin{cases}
% 1 \quad e_1\in S, \\
% 0 \quad \text{otherwise.}
% \end{cases} \\ \quad 
v_1^{e_1,big}(x)= \begin{cases}
3k^4 \quad x=e_1, \\
0 \quad \text{otherwise.}
\end{cases}
% \hspace{0.5em}
v_1^{e_2,big}(x)=\begin{cases}3k^4 \quad x=e_2, \\
0, \quad \text{otherwise.}
\end{cases} 
% \quad
v_1^{both}(x)=\begin{cases}
2k^2+2 \quad x=e_1, \\
2k^2 \quad x=e_2, \\
0, \quad \text{otherwise.}
\end{cases} 
$$ 
$$
% v_1^{e_1,small}(S)=\begin{cases}
% 1 \quad e_1\in S, \\
% 0 \quad \text{otherwise.}
% \end{cases} \\ \quad 
v_2^{e_2,big}(x)= \begin{cases}
3k^4 \quad x=e_2, \\
0 \quad \text{otherwise.}
\end{cases}
% \hspace{0.5em}
v_2^{e_1,big}(x)=\begin{cases}3k^4 \quad x=e_1, \\
0, \quad \text{otherwise.}
\end{cases} 
% \quad
v_2^{both}(x)=\begin{cases}
2k^2+2 \quad x=e_2, \\
2k^2 \quad x=e_1, \\
0, \quad \text{otherwise.}
\end{cases} 
$$ 
% we define three valuations that will be of particular interest:  
% $$
% v_i^{all}(s)=\begin{cases}
% K^4, \quad s=m, \\
% 0, \quad \text{otherwise.}
% \end{cases} \\, \quad 
% v_i^{one}(s)= \begin{cases}
% 1, \quad s\ge 1, \\
% 0, \quad \text{otherwise.}
% \end{cases}
% , \quad
% v_i^{ONE}(s)=\begin{cases}
% K^2+1, \quad s\ge 1, \\
% 0, \quad \text{otherwise.}
% \end{cases} 
% $$
where $k=\max\{m,n\}$. We define the following subsets of the domains of the valuations:
$$
\mathcal V_1=\{v_1^{e_1,one},v_1^{e_1,big},v_1^{e_2,big},v_1^{both}\}, \quad
\mathcal V_2=\{v_2^{e_2,one},v_2^{e_2,big},v_2^{e_1,big},v_2^{both}\}
$$
% The reason for it is that the optimal welfare given $(v_1^{one},\ldots,v_n^{one})$  is $\min\{m,n\}$. Since $f$ guarantees strictly more than $\max\{\frac{1}{m},\frac{1}{n}\}$ of the welfare, the welfare has to be strictly larger than $1$. 
{and for every player $i\ge 3$,  we set $\mathcal{V}_i=\{v_i^{e_i,one}\}$.\footnote{Note that for every player $i$, all valuations in $\mathcal V_i$ except for $v_1^{both}$ and $v_2^{both}$ have only one valuable item. 
It is for this reason that  the proofs of Theorems \ref{thm-lb-add}
and \ref{thm-lb-unit} {are so similar}.} 
%$\mathcal V_1=\{v_1^{one},v_1^{ONE},v_1^{all}\}$ and $\mathcal V_2=\{v_2^{one},v_2^{ONE},v_2^{all}\}$. For every player $i\ge 3$, we define $\mathcal V_i=\{v_i^{one}\}$.
% We now analyze the mechanism given that the valuations of the players belong to $\mathcal V=\mathcal V_1\times  \cdots \times \mathcal V_n$.  
We will use the following lemma to show that at least one player $i$ at some point  has to send  different messages for different valuations in $\mathcal V_i$. Formally: 
% We will use the following lemma to show that at least one player at some point at the mechanism has to send a message that reveals information about her type. Formally:
\begin{lemma}\label{lemma-not-same-leaf-add}
There exist $(v_1,\ldots,v_n),(v_1',\ldots,v_n')\in \mathcal V_1\times\cdots\times \mathcal V_n$ such that the behaviors 
$(\mathcal S_1(v_1),\linebreak \ldots,\mathcal S_n(v_n))$ and $(\mathcal S_1(v_1'),\ldots,\mathcal S_n(v_n'))$ 
do not reach the same leaf in the mechanism. 
\end{lemma}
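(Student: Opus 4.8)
\textbf{Proof proposal for Lemma \ref{lemma-not-same-leaf-add}.}

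The plan is to mimic the argument in the proof of Lemma \ref{lemma-not-same-leaf}, using the approximation guarantee of $f$ to pin down the allocation on two carefully chosen valuation profiles, and then observe that these allocations must differ, forcing the corresponding behavior profiles to reach distinct leaves. Concretely, I would consider the two profiles $v = (v_1^{e_1,big}, v_2^{e_2,one}, v_3^{e_3,one}, \ldots, v_n^{e_n,one})$ and $v' = (v_1^{e_2,big}, v_2^{e_1,big}, v_3^{e_3,one}, \ldots, v_n^{e_n,one})$, both of which lie in $\mathcal V_1 \times \cdots \times \mathcal V_n$.

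The key step is a welfare computation on each profile. For $v$, bidder $1$ has value $3k^4$ for $e_1$ and nothing else; bidders $2,\ldots,n$ each have value $1$ for a single item (namely $e_i$, or $e_m$ for $i>m$), and $k=\max\{m,n\}$. The optimal welfare is at least $3k^4$ (give $e_1$ to bidder $1$), and in fact the welfare of any allocation is at most $3k^4 + m \le 3k^4 + k$, while any allocation that does \emph{not} give $e_1$ to bidder $1$ has welfare at most $m \le k$. Since $f$ approximates the optimum strictly better than $\min\{m,n\} \le k$, the allocation $f(v)$ must achieve welfare exceeding $\tfrac{1}{k}\cdot 3k^4 = 3k^3 > k$, which is only possible if $e_1$ is allocated to bidder $1$ under $f(v)$. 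By the identical computation on $v'$, the item $e_1$ must be allocated to bidder $2$ under $f(v')$ (bidder $2$ has the value $3k^4$ for $e_1$ there, and no one else values it highly enough). Hence $f(v) \ne f(v')$: the two allocations disagree on who receives $e_1$.

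Finally, since the mechanism $\mathcal M$ with strategies $\mathcal S$ realizes $f$, the leaves $Leaf(\mathcal S_1(v_1),\ldots,\mathcal S_n(v_n))$ and $Leaf(\mathcal S_1(v_1'),\ldots,\mathcal S_n(v_n'))$ are labeled with the allocations $f(v)$ and $f(v')$ respectively, which are different; therefore these two behavior profiles cannot reach the same leaf, which is exactly the claim. I do not expect a genuine obstacle here — the only thing to be careful about is choosing profiles that (i) stay inside the restricted domains $\mathcal V_i$ and (ii) have the property that the approximation guarantee \emph{uniquely} determines the fate of a distinguishing item; the valuations $v_i^{e_1,big}$ and $v_i^{e_2,big}$ were introduced precisely to make this work, so the argument goes through verbatim up to the bookkeeping of the constants $3k^4$ versus $k$.
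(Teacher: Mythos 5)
Your first profile $v=(v_1^{e_1,big},v_2^{e_2,one},\ldots,v_n^{e_n,one})$ and the welfare computation showing that $f(v)$ must give $e_1$ to bidder $1$ are fine and match the paper. The gap is in your second profile $v'=(v_1^{e_2,big},v_2^{e_1,big},v_3^{e_3,one},\ldots,v_n^{e_n,one})$: there the computation is \emph{not} ``identical,'' because two bidders have a high value ($3k^4$ for $e_2$ by bidder $1$ and $3k^4$ for $e_1$ by bidder $2$), so the optimal welfare is about $6k^4$ and an allocation that serves only bidder $1$ already gets roughly a $2$-approximation. Concretely, take $m=n=3$ (so $k=3$, $3k^4=243$): under your $v'$ the optimum is $487$, and the allocation giving $\{e_1,e_2\}$ to bidder $1$ and $e_3$ to bidder $3$ has welfare $244>487/3$, so it is consistent with an approximation strictly better than $\min\{m,n\}$ while giving $e_1$ to bidder $1$, not to bidder $2$. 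Moreover, that same allocation is also consistent with the guarantee under your $v$, so the approximation guarantee alone does not rule out $f(v)=f(v')$, and your argument that the two behavior profiles reach different leaves collapses whenever $\min\{m,n\}\ge 3$. (It does happen to work when $\min\{m,n\}=2$, but the lemma must cover all $m,n\ge 2$.)

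The paper avoids this by choosing the second profile so that only \emph{one} bidder has a high value: $v'=(v_1^{e_1,one},v_2^{e_1,big},v_3^{e_3,one},\ldots,v_n^{e_n,one})$. There the optimum is about $3k^4$, any allocation not giving $e_1$ to bidder $2$ has welfare at most about $k$, and the guarantee forces welfare greater than $3k^4/\min\{m,n\}\ge 3k^3>k$, so $e_1$ must go to bidder $2$; combined with your (correct) analysis of $v$, the allocations at the two leaves differ on who receives $e_1$. So the overall strategy is right, but you need to replace $v_1^{e_2,big}$ by $v_1^{e_1,one}$ (or otherwise ensure the distinguishing item's fate is uniquely pinned down in \emph{both} profiles) for the proof to go through.
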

\begin{proof}
% [of Lemma \ref{lemma-not-same-leaf-add}]
Consider the valuation profiles $v=(v_1^{e_1,big},v_2^{e_2,one},\ldots,v_n^{e_n,one})$ and $v'=(v_1^{e_1,one},\allowbreak  v_2^{e_1,big},v_3^{e_3,one},\allowbreak\ldots,v_n^{e_n,one})$.  Given the valuation profile $v$, 
player $1$ wins item $e_1$ because of the approximation guarantee of the mechanism. Similarly, given the valuation profile $v'$, player $2$ wins item $e_1$. 
% The allocation of $f(v)$ gives  item $e_1$ to player $1$ because of its approximation guarantee. Due to the same reason, $f(v')$ allocates $e_1$ to player $2$.
Thus, 
% Since the mechanism $\mathcal{M}$ together with the strategy profile  $\mathcal S$ realize $f$, we get that 
the behavior profiles $\mathcal S(v)$ and $\mathcal S(v')$ reach different leaves.
% $(\mathcal S_1(v_1^{all}),\mathcal S_2(v_2^{one})\ldots,\mathcal S_n(v_n^{one})))$ and $(\mathcal S_1(v_1^{one}),\mathcal S_2(v_2^{all}),\mathcal S_3(v_3^{one}),\ldots,\mathcal S_n(v_n^{one})))$ reach different leaves. 
% Note that the behaviors of $(\mathcal S_1(v_1),\ldots, \mathcal S_n(v_n))$ and $(\mathcal S_1(v_1'),\ldots, \mathcal S_n(v_n'))$  do not reach the same leaf.  The reason for it is that the mechanism together with $(\mathcal S_1,\ldots,\mathcal S_n)$ realizes $f$, and $f$ does not output the same allocation for them.
% $f(v_1^{ONE},v_2^{one},v_3^{one},\ldots,v_n^{one})$ necessarily outputs an allocation where player $1$ wins all the items (and the rest of the bidders get none), whilst  the allocation of $f$ given $(v_1^{one},\ldots,v_n^{one})$  
% satisfies that player $2$ wins at least one item. Thus 
% Note that there exist $(v_1,\ldots,v_n),(v_1',\ldots,v_n')\in \mathcal V$ such that the behaviors of 
% Note that
% the strategies of  these valuations
% satisfy that $f$ necessarily maps them to different allocations. For illustration, note that 
%  The reason for it is that this allocation  gives welfare of $K^4$, whilst any other allocation provides welfare of at most $n-1$. In contrast,  Thus, the paths of the behaviors $(\mathcal S_1(v_1^{one}),\ldots \mathcal S_n(v_n^{one}))$ and  
\end{proof}

We now use Lemma \ref{lemma-not-same-leaf-add}. Let $u$ be the first vertex in the protocol such that there exist $(v_1,\ldots,v_n),(v_1',\ldots,v_n')\in \mathcal V$ where the behavior profiles $(\mathcal{S}_1(v_1),\ldots,\mathcal{S}_n(v_n))$ and $(\mathcal{S}_1(v_1'),\ldots,\mathcal{S}_n(v_n'))$ diverge. Note that $u\in Path(\mathcal{S}_1(v_1),\ldots,\mathcal{S}_n(v_n))\cap Path(\mathcal{S}_1(v_1'),\ldots,\mathcal{S}_n(v_n'))$.
We remind that each vertex is associated with only one player that sends messages in it. The player that is associated with vertex $u$ is necessarily either player $1$ or player $2$ because every  player  $i\ge 3$    has  only one valuation in $\mathcal V_i$, so the set  $\{\mathcal S_i(v_i)\}_{v_i\in \mathcal V_i}$ is a singleton.  
% contains only one behavior for those players.
We assume without loss of generality that vertex $u$ is associated with player $1$, meaning that    there exist $v_1,v_1'\in \mathcal{V}_1$ such that $\mathcal S_1(v_1)$ and $\mathcal S_1(v_1')$ dictate different messages at vertex $u$.  

Lemmas \ref{lemma-case-1} and \ref{lemma-case-2} examine two complementary cases, demonstrating that in each scenario, the strategy $\mathcal S_1$
% Lemma \ref{lemma-case-1} and Lemma \ref{lemma-case-2}  analyze two complementary cases.  and show that in each of these cases, the strategy $\mathcal S_1$ 
dictates the same message at vertex $u$ for all the valuations in $\mathcal V_1$. This leads to a contradiction, completing the proof.
\begin{lemma}\label{lemma-case-1}
If $f(v_1^{both},v_2^{e_1,big},v_3^{e_3,one},\ldots,v_n^{e_n,one})$ outputs an allocation where bidder $1$ wins a bundle that contains $e_2$, 
% the mechanism allocates a bundle that contains item $e_2$ to bidder $1$ given the valuation profile $f(v_1^{both},v_2^{e_1,big},v_3^{e_3,big},\ldots,v_n^{e_n,big})$, 
then the strategy $\mathcal S_1$ dictates the same message at vertex $u$ for all the valuations in $\mathcal V_1$.   
\end{lemma}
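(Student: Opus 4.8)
The plan is to use Lemma \ref{lemma-bad-leaf-good-leaf} (the ``bad-leaf/good-leaf'' pruning observation) repeatedly, exactly as in the proof of Theorem \ref{thm-lb-mua}, to show that $\mathcal S_1$ cannot separate any two valuations in $\mathcal V_1 = \{v_1^{e_1,one}, v_1^{e_1,big}, v_1^{e_2,big}, v_1^{both}\}$ at vertex $u$. It suffices to exhibit, for each pair of valuations in $\mathcal V_1$ that $\mathcal S_1$ could conceivably try to separate, a pair of valuation profiles whose $\mathcal S$-paths both pass through $u$ and between which the relevant bidder-$1$ valuation strictly prefers one leaf over the other; Lemma \ref{lemma-bad-leaf-good-leaf} then forces $\mathcal S_1$ to send the same message for those two valuations at $u$. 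Since vertex $u$ is, by choice, the first place where \emph{some} two profiles in $\mathcal V$ diverge, and (as argued before the lemma) that vertex belongs to player $1$, showing that $\mathcal S_1$ never separates within $\mathcal V_1$ at $u$ is the desired contradiction.

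The key input is an analogue of Lemma \ref{lemma-small-pay}: a short list of allocation/payment facts that follow from approximation, individual rationality, and no-negative-transfers. First I would record that given $(v_1^{e_1,one},\dots,v_n^{e_n,one})$ bidder $1$ wins a bundle containing $e_1$ and pays at most $1$ (approximation forces the assignment, and truthfulness/IR bound the payment using the $v_1^{e_1,big}$ deviation). Second, given $(v_1^{e_1,big}, v_2^{e_2,one}, v_3^{e_3,one},\dots)$ bidder $1$ wins a bundle containing $e_1$ and pays at most $3k^4$; and given $(v_1^{e_2,big}, v_2^{e_2,one},\dots)$ — wait, here $e_2$ is contested — one uses the profile $(v_1^{e_2,big}, v_2^{e_1,big}, v_3^{e_3,one},\dots)$ so that approximation assigns $e_2$ to bidder $1$ and $e_1$ to bidder $2$, with bidder $1$'s payment at most $3k^4$. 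Third, the ``bad'' profile $(v_1^{both}, v_2^{e_1,big}, v_3^{e_3,one},\dots)$: here the hypothesis of Lemma \ref{lemma-case-1} is exactly that bidder $1$'s bundle contains $e_2$; combined with no-negative-transfers and IR, bidder $1$'s utility under $v_1^{both}$ there is at most $v_1^{both}(e_2 \text{ plus possibly } e_1) = 2k^2$ or $2k^2 + (2k^2+2)$ depending on whether $e_1$ is also won — I would instead route around this by noting that whatever the exact bundle, the payment and allocation give a utility strictly below the target value $2k^2+2$ unless bidder $1$ also gets $e_1$, and separately handling that sub-case. The cleanest route is: against profile $(v_1^{both}, v_2^{e_1,big}, \dots)$, bidder $2$ must win $e_1$ by approximation (since $v_2^{e_1,big}$ has value $3k^4 \gg$ everything else), so bidder $1$'s bundle there contains $e_2$ but not $e_1$, giving bidder $1$ utility exactly $2k^2 - P_1(\cdots) \le 2k^2$ under $v_1^{both}$.

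With these facts in hand the pruning arguments are routine and mirror Claims \ref{claim-oneone-same}–\ref{claim-one-all-same}: for instance, under $v_1^{both}$ the leaf reached on $(v_1^{e_1,one},\dots,v_n^{e_n,one})$ gives utility $\ge v_1^{both}(e_1) - 1 = 2k^2+1 > 2k^2 \ge$ utility at the leaf reached on $(v_1^{both}, v_2^{e_1,big},\dots)$, so by Lemma \ref{lemma-bad-leaf-good-leaf} the message at $u$ is the same for $v_1^{e_1,one}$ and $v_1^{both}$; similarly one chains $v_1^{e_1,big}$, $v_1^{e_2,big}$ to $v_1^{both}$ using the ``big'' profiles, whose utilities $3k^4 - O(k^4)$ stay comfortably above $0$ which is an upper bound on the relevant alternative utilities (recomputing against the appropriate ``big'' profile where the other bidder also holds a ``big'' valuation, so that bidder $1$ gets nothing there). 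The main obstacle is bookkeeping: choosing, for each of the (up to) three separations $v_1^{e_1,one}\!/v_1^{both}$, $v_1^{e_1,big}\!/v_1^{both}$, $v_1^{e_2,big}\!/v_1^{both}$ (transitivity then covers the rest), the right contested-item profile for bidder $2$ so that (i) both profiles lie in $\mathcal V$ and their $\mathcal S$-paths provably pass through $u$ — which holds because $u$ is the first divergence point in $\mathcal V$, hence every $\mathcal V$-profile's path reaches $u$ — and (ii) the numeric gap points the correct way. The constants ($3k^4$, $2k^2+2$, $2k^2$, the bound $\le 1$ on the ``one''-payment) are engineered precisely so every comparison is strict; I would just verify each of the at most three inequalities and invoke Lemma \ref{lemma-bad-leaf-good-leaf}.
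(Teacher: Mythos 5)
Your skeleton — link each valuation in $\mathcal V_1$ to $v_1^{both}$ at $u$ via Lemma \ref{lemma-bad-leaf-good-leaf}, using that every $\mathcal V$-profile's path passes through $u$ — is sound, and your first link ($v_1^{e_1,one}$ vs.\ $v_1^{both}$, comparing under $v_1^{both}$ the leaf of $(v_1^{e_1,one},v_2^{e_2,one},\ldots)$ with the leaf of $(v_1^{both},v_2^{e_1,big},\ldots)$) is correct. But the links for $v_1^{e_1,big}$ and $v_1^{e_2,big}$ rest on allocation/payment ``facts'' that are either false or too weak. First, in $(v_1^{e_2,big},v_2^{e_1,big},v_3^{e_3,one},\ldots)$ the approximation guarantee does \emph{not} force $e_2$ to go to bidder $1$: serving bidder $2$ with $e_1$ and the small bidders with their items already yields roughly half of the optimum, which is strictly better than a $\min\{m,n\}$-approximation whenever $\min\{m,n\}\ge 3$. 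Likewise, in a profile where both bidders hold a big valuation on the \emph{same} item it is not true that ``bidder $1$ gets nothing there''; the mechanism may give him that item at a high price. This is exactly why the paper states parts \ref{item-2-add}--\ref{item-3-add} of Lemma \ref{lemma-small-pay-add} as a two-case dichotomy, and the second case (wins the item but pays at least $2k^3+k^2$) requires a dominant-strategy argument with the auxiliary valuation $\hat v_1$ of value $2k^3+k^2$ — an ingredient absent from your plan.

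Second, the only payment bound you record on your ``good'' leaves is the IR bound $\le 3k^4$, which yields utility $\ge 0$; since every leaf has utility $\ge 0$ by individual rationality, this can never produce the \emph{strict} inequality that Lemma \ref{lemma-bad-leaf-good-leaf} needs. To connect $v_1^{e_1,big}$ and $v_1^{e_2,big}$ to $v_1^{both}$ one needs the sharper DSIC-derived bounds of Lemma \ref{lemma-small-pay-add}: e.g.\ at $(v_1^{both},v_2^{e_2,one},\ldots)$ bidder $1$ wins $e_1$ and pays at most $4k^2+2$ (part \ref{item-4-add}), and at $(v_1^{both},v_2^{e_1,big},\ldots)$, under the case hypothesis, he wins $e_2$ and pays at most $2k^2$ (part \ref{item-6-add}); the paper then compares these, under $v_1^{e_1,big}$ and $v_1^{e_2,big}$ respectively, against the both-big profiles $(v_1^{e_1,big},v_2^{e_1,big},\ldots)$ and $(v_1^{e_2,big},v_2^{e_2,big},\ldots)$ whose utility is capped at $3k^4-2k^3-k^2$ (Claims \ref{claim-sameM-both-large1} and \ref{claim-sameM-both-large2}, after Claim \ref{claim-e1-same-add}). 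So the step you dismiss as ``routine bookkeeping'' is where the real content lies: with the facts you actually listed, two of your three comparisons cannot be made strict, and one of the listed facts is not implied by the approximation guarantee at all.
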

\begin{lemma}\label{lemma-case-2}
    If $f(v_1^{both},v_2^{e_1,big},v_3^{e_3,one},\ldots,v_n^{e_n,one})$
    outputs an allocation where bidder $1$ wins a bundle not containing 
$e_2$, 
    % to bidder $1$ given the valuation profile $(v_1^{both},v_2^{e_1,big},v_3^{e_3,big},\ldots,v_n^{e_n,big})$, 
    then the strategy $\mathcal S_1$ dictates the same message at vertex $u$ for all the valuations in $\mathcal V_1$.   
\end{lemma}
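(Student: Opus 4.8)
The plan is to derive a contradiction by showing that $\mathcal S_1$ prescribes the same message at $u$ for each of the four valuations in $\mathcal V_1=\{v_1^{e_1,one},v_1^{e_1,big},v_1^{e_2,big},v_1^{both}\}$. The workhorse is Lemma \ref{lemma-bad-leaf-good-leaf}, and the key simplification is that, since $u$ is the \emph{first} vertex at which two behavior profiles from $\mathcal V$ diverge, \emph{every} profile in $\mathcal V$ passes through $u$; hence the common-path hypothesis of Lemma \ref{lemma-bad-leaf-good-leaf} is automatic for any two profiles drawn from $\mathcal V$, and one only has to compare bidder $1$'s utility at the two leaves. I will use $v_1^{both}$ as a hub and connect it to each of the other three valuations.

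First I would unpack the Case-$2$ hypothesis. At $\pi:=(v_1^{both},v_2^{e_1,big},v_3^{e_3,one},\dots,v_n^{e_n,one})$ the approximation guarantee forces bidder $2$ to win $e_1$ (otherwise $e_1$ is unserved and the welfare is of order $k^2$ against an optimum of at least $3k^4$), so bidder $1$'s bundle contains neither $e_1$ nor $e_2$; thus $v_1^{both}$ values it at $0$, and individual rationality together with no-negative-transfers forces $P_1(\pi)=0$. Consequently the leaf reached under $\mathcal S(\pi)$ has $v_1^{both}$-utility exactly $0$, and this will play the role of the ``low-utility'' side in each application of Lemma \ref{lemma-bad-leaf-good-leaf}.

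For $v_1^{e_1,one}$ I would take the high-utility side to be the base profile $(v_1^{e_1,one},v_2^{e_2,one},\dots,v_n^{e_n,one})$, where bidder $1$ wins a bundle containing $e_1$: then $v_1^{both}$ values that bundle at least $2k^2+2$, while individual rationality (with respect to $v_1^{e_1,one}$) caps the payment at $1$, so the $v_1^{both}$-utility is at least $2k^2+1>0$, and Lemma \ref{lemma-bad-leaf-good-leaf} makes $\mathcal S_1(v_1^{both})$ and $\mathcal S_1(v_1^{e_1,one})$ agree at $u$. For $v_1^{e_1,big}$ I would use $(v_1^{e_1,big},v_2^{e_2,one},\dots,v_n^{e_n,one})$, where the value $3k^4$ forces bidder $1$ to win $e_1$; comparing the true valuation $v_1^{e_1,big}$ against the alternative report $v_1^{e_1,one}$ via the incentive-compatibility inequality (and using that this report too wins a bundle containing $e_1$ at payment at most $1$) shows the payment here is also at most $1$, so again the $v_1^{both}$-utility is positive and $\mathcal S_1(v_1^{both})$, $\mathcal S_1(v_1^{e_1,big})$ agree at $u$.

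The delicate case is $v_1^{e_2,big}$, and this is where I expect the real work to lie. The natural high-utility profile is $\theta:=(v_1^{e_2,big},v_2^{e_2,one},v_3^{e_3,one},\dots,v_n^{e_n,one})$, at which the approximation guarantee forces bidder $1$ to win $e_2$, so $v_1^{both}$ values bidder $1$'s bundle at least $2k^2$; what is missing is a usable bound on $P_1(\theta)$, since for this valuation individual rationality only gives $P_1(\theta)\le 3k^4$. The plan is to interpose the additive report $w$ that assigns value $k^2$ to $e_2$ and $0$ to every other item: against the same opponents a bidder reporting $w$ is still forced by the approximation guarantee to win $e_2$ (otherwise the welfare is of order $k$ against an optimum of at least $k^2$, and $k^2/(k-1)>\min\{m,n\}$), and individual rationality at $w$ caps its payment at $k^2$; hence the incentive-compatibility inequality, comparing $v_1^{e_2,big}$ with the deviation to $w$, yields $3k^4-P_1(\theta)\ge 3k^4-k^2$, i.e.\ $P_1(\theta)\le k^2$. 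Then the $v_1^{both}$-utility at the leaf of $\mathcal S(\theta)$ is at least $2k^2-k^2>0$, so Lemma \ref{lemma-bad-leaf-good-leaf} forces $\mathcal S_1(v_1^{both})$ and $\mathcal S_1(v_1^{e_2,big})$ to agree at $u$ as well. Combining the three connections, all of $\mathcal V_1$ induces the same message at $u$, contradicting the choice of $u$. The main obstacle, as indicated, is exactly this payment bound $P_1(\theta)<2k^2$ for the ``big'' valuation; everything else reduces to the repeatedly used fact that a bidder holding a value of order $k^2$ or more for an item must be served by any mechanism whose approximation ratio is below $\min\{m,n\}$.
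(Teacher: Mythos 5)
Your proof is correct, and it follows the paper's route in all essentials: the zero-utility leaf at $(v_1^{both},v_2^{e_1,big},v_3^{e_3,one},\ldots,v_n^{e_n,one})$ (value $0$ by the approximation guarantee and the Case-2 hypothesis, payment $0$ by individual rationality plus no-negative-transfers) serves as the ``low'' side of Lemma \ref{lemma-bad-leaf-good-leaf}, and the profiles $(v_1^{e_1,one},v_2^{e_2,one},\ldots)$ and $(v_1^{e_2,big},v_2^{e_2,one},\ldots)$ serve as the ``high'' sides, exactly as in Claims \ref{claim-sameM-both-small-e1-first} and \ref{claim-sameM-both-large-e2}; your payment bound for the $v_1^{e_2,big}$ profile via the interposed report with value $k^2$ on $e_2$ is precisely the argument behind Lemma \ref{lemma-small-pay-add} part \ref{item-5-add}, and your observation that every profile in $\mathcal V$ passes through $u$ is the same fact the paper uses implicitly. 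The one genuine difference is how $v_1^{e_1,big}$ is brought in: the paper reuses Claim \ref{claim-e1-same-add}, pairing $v_1^{e_1,big}$ with $v_1^{e_1,one}$ via the profile $(v_1^{e_1,big},v_2^{e_1,big},\ldots)$ and the payment lower bound $2k^3+k^2$ of Lemma \ref{lemma-small-pay-add} part \ref{item-2-add}, whereas you keep $v_1^{both}$ as a hub and pair it with $v_1^{e_1,big}$ directly through the profile $(v_1^{e_1,big},v_2^{e_2,one},\ldots)$, deriving the extra fact $P_1(v_1^{e_1,big},v_2^{e_2,one},\ldots)\le 1$ from dominant-strategy incentive compatibility against the report $v_1^{e_1,one}$. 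Both variants are sound; the paper's choice lets one claim be shared between Lemmas \ref{lemma-case-1} and \ref{lemma-case-2}, while yours keeps the whole argument anchored at the single bad leaf at the cost of one additional incentive-compatibility payment bound.
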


It remains to prove Lemma \ref{lemma-case-1} and Lemma \ref{lemma-case-2}. 
The rest of the proof follows this outline. In Subsection \ref{subsubsec-allocpayments}, we state Lemma \ref{lemma-small-pay-add}, which consists of observations about the allocation function and the payment scheme of the mechanism. In Subsection \ref{subsubsec-e1-together}, we prove Claim \ref{claim-e1-same-add}  which shows that the obviously dominant strategy $\mathcal{S}_1$ always dictates the same message for the valuations  $v_1^{e_1,one}$ and $v_1^{e_1,big}$. Both of these facts will be used in the proofs of Lemma \ref{lemma-case-1} and Lemma \ref{lemma-case-2}, which are in Subsections \ref{subsubsec-case1} and \ref{subsubsec-case2} respectively.
% addition, we also use the fact that the strategy $\mathcal{S}_1$ dictates the same message for  $v_1^{e_1,one}$ and for 
% $v_1^{e_1,big}$ (Subsection \ref{subsubsec-e1-together}). 

% and \ref{subsubsec-e1-together}, we 
% % The proof proceeds as follows. 
% In Subsection 
% In the proofs of Lemma \ref{lemma-case-1} and Lemma \ref{lemma-case-2}, we use observations about the allocation function and the payment scheme of the mechanism  (Subsection \ref{subsubsec-allocpayments}). In addition, we also use the fact that the strategy $\mathcal{S}_1$ dictates the same message for  $v_1^{e_1,one}$ and for 
% $v_1^{e_1,big}$ (Subsection \ref{subsubsec-e1-together}). Finally, we prove Lemma \ref{lemma-case-1} in Subsection \ref{subsubsec-case1} and Lemma \ref{lemma-case-2} in Subsection \ref{subsubsec-case2}.
% We then do some case analysis (Subsection \ref{subsub-caseanalysis}): We suggest two complementary cases, and show in Subsections \ref{subsubsec-case1} and \ref{subsubsec-case2}
% that for both of them the strategy $\mathcal S_1$ dictates the same message for all the valuations in $\mathcal V_1$.

% analysis that we build upon subsequently.
% We show in Subsection \ref{subsubsec-e1-together} that $\mathcal{S_1}$ dictates the same message for 

% In Subsections \ref{subsubsec-case1}
% and \ref{subsubsec-case2}, we use the observations of \ref{subsubsec-allocpayments} to show that the strategy $\mathcal{S_1}$ dictates the same message for all the valuations in $\{v_1^{e_1,big},v_1^{e_2,big},v_1^{both}\}$. 

\subsubsection{Observations about the Allocation and Payments of the Mechanism} \label{subsubsec-allocpayments}
We begin by deriving straightforward observations about the bundle that bidder $1$ wins and his payment:
% Note that since $\mathcal M$ is an obviously strategy-proof mecanism, it is also a dominant strategy mechanism. 
% % We now assume towards a contradiction that the mechanism $\mathcal M$ and the strategies $(\mathcal S_1,\ldots,\mathcal S_n)$ realize $(f,P_1,\ldots,P_n)$ in obviously-dominant strategies. 
% % Thus, it is clearly also a dominant-strategy mechanism.
% We use this fact together with the fact that it satisfies individual rationality and no-negative-transfers to derive some basic observations about the allocation and the payment scheme of bidder $1$:
\begin{lemma}\label{lemma-small-pay-add}
    The allocation rule $f$ and the payment scheme $P_1$ of bidder $1$ satisfy that:
    % Let $f$ be an allocation rule and let $P_1$ be the payment scheme of bidder $1$ that 
    % The allocation rule $f$ and the  payment scheme $P_1$ $(f,P_1,\ldots,P_n):V_1\times \cdots \times V_n\to \mathbb{R}^{n}$
    % are realized by a dominant-strategy, individually rational and no-negative-transfers mechanism. Then: 
    \begin{enumerate}
        \item Given $(v_1^{e_1,one},v_2^{e_2,one},\ldots,v_{n}^{e_n,one})$, bidder $1$ wins a bundle that contains $e_1$ and pays at most $1$.  \label{item-1-add}
        \item  Given  \label{item-2-add}$(v_1^{e_1,big},v_2^{e_1,big},v_3^{e_3,one},\ldots,v_n^{e_n,one})$, bidder $1$ either:
        \begin{enumerate*}[label=(\alph*)]
    \item gets a bundle not containing  $e_1$ and pays $0$ \emph{or}
    \item gets a bundle that contains $e_1$ and pays at least $2{k^3}+k^2$.
    \end{enumerate*}
        \item  Given  \label{item-3-add}$(v_1^{e_2,big},v_2^{e_2,big},v_3^{e_3,one},\ldots,v_n^{e_n,one})$, bidder $1$ either:
        \begin{enumerate*}[label=(\alph*)]
    \item gets a bundle not containing  $e_2$ and pays $0$ \emph{or}
    \item gets a bundle that contains $e_2$ and pays at least $2{k^3}+k^2$.
    \end{enumerate*}
    \item Given  \label{item-4-add}$(v_1^{both},v_2^{e_2,one},\ldots,v_n^{e_n,one})$,  bidder $1$ wins a bundle that contains $e_1$ and pays at most $4k^2+2$.
    \item Given $(v_1^{e_2,big},v_2^{e_2,one},\ldots,v_n^{e_n,one})$, bidder $1$ wins $e_2$ and pays at most $k^2$.  \label{item-5-add}
     \item Given $(v_1^{both},v_2^{e_1,big},v_3^{e_3,one},\ldots,v_n^{e_n,one})$, bidder $1$ does not win $e_1$. \label{item-new-add}
    \item If  $f(v_1^{both},v_2^{e_1,big},v_3^{e_3,one},\ldots,v_n^{e_n,one})$
    outputs an allocation where bidder $1$ wins $e_2$, then he pays at most $2k^2$. \label{item-6-add}
    \item If  $f(v_1^{both},v_2^{e_1,big},v_3^{e_3,one},\ldots,v_n^{e_n,one})$ \label{item-7-add}
    outputs an allocation where bidder $1$ wins a bundle not containing $e_2$,  then he pays zero.
    \end{enumerate} 
\end{lemma}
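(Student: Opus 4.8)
The plan is to derive all eight assertions from only four facts about the mechanism: individual rationality, no-negative-transfers, the incentive-compatibility inequality implied by obvious strategy-proofness (namely $v_1(f(v_1,v_{-1}))-P_1(v_1,v_{-1})\ge v_1(f(v_1',v_{-1}))-P_1(v_1',v_{-1})$ for all $v_1,v_1'\in V_1$ and all $v_{-1}$), and the approximation guarantee. The one device used repeatedly is that $V_1$ consists of \emph{every} additive valuation with per-item values in $\{0,1,\dots,poly(m,n)\}$, so besides the valuations in $\mathcal V_1$ we may feed the mechanism auxiliary additive valuations with values as large as $3k^4$.

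First I would pin down the allocations. In parts \ref{item-5-add} and \ref{item-new-add} one item ($e_2$, respectively $e_1$) is valued at $3k^4$ by some bidder, while any allocation that withholds it from that bidder leaves every remaining bidder a bundle worth at most $v_1^{both}(\{e_1,e_2\})=4k^2+2$ plus the $O(n)$ contributed by bidders $3,\dots,n$; since the optimum is at least $3k^4$, such an allocation would force an approximation ratio of at least $\tfrac{3k^4}{4k^2+2+n}\ge k\ge\min\{m,n\}$ for $m,n\ge2$, a contradiction. Part \ref{item-1-add}'s allocation is simply the designated-winner convention fixed just before the lemma. Part \ref{item-4-add} is the single allocation statement where $v_1^{both}$ is not value-dominant for $e_1$; there I would instead add the two incentive inequalities relating $v_1^{e_1,one}$ and $v_1^{both}$ at the common opponent profile $v_{-1}=(v_2^{e_2,one},v_3^{e_3,one},\dots)$ (weak monotonicity), getting $(v_1^{both}-v_1^{e_1,one})(f_1(v_1^{both},v_{-1}))\ge(v_1^{both}-v_1^{e_1,one})(f_1(v_1^{e_1,one},v_{-1}))\ge 2k^2+1$ using part \ref{item-1-add}, which is impossible unless $e_1\in f_1(v_1^{both},v_{-1})$, because $v_1^{both}-v_1^{e_1,one}$ puts weight only $2k^2$ on $e_2$ and $0$ on every other item besides $e_1$. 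Once the winning item is fixed, individual rationality bounds the payment by bidder $1$'s value for his bundle, giving $\le 1$ in \ref{item-1-add} and $\le 4k^2+2$ in \ref{item-4-add}, and together with no-negative-transfers it gives exactly $0$ whenever that bundle has value $0$ for bidder $1$; this settles case (a) of parts \ref{item-2-add} and \ref{item-3-add}, and — via part \ref{item-new-add} — parts \ref{item-6-add} (bundle worth $2k^2$) and \ref{item-7-add} (bundle worth $0$).

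What then remains are the payment bounds that individual rationality cannot supply: the lower bound $2k^3+k^2$ in parts \ref{item-2-add}(b) and \ref{item-3-add}(b), and the upper bound $k^2$ in part \ref{item-5-add}, which lies far below the $3k^4$ that individual rationality alone allows. For these I would invoke an auxiliary report. For part \ref{item-2-add}(b): against $v_{-1}=(v_2^{e_1,big},v_3^{e_3,one},\dots)$, let $w$ be the additive valuation with $w(\{e_1\})=2k^3+k^2$ and $0$ elsewhere. Bidder $2$ still values $e_1$ at $3k^4\gg 2k^3+k^2$, so the same approximation-gap argument forces $e_1\notin f_1(w,v_{-1})$, hence $w(f_1(w,v_{-1}))=0$ and $P_1(w,v_{-1})=0$ by individual rationality and no-negative-transfers. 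The incentive inequality for true type $w$ against the report $v_1^{e_1,big}$ then gives, in case (b), $0\ge w(f_1(v_1^{e_1,big},v_{-1}))-P_1(v_1^{e_1,big},v_{-1})\ge(2k^3+k^2)-P_1(v_1^{e_1,big},v_{-1})$, i.e.\ $P_1\ge 2k^3+k^2$; part \ref{item-3-add}(b) is identical with $e_1$ and $e_2$ swapped. For part \ref{item-5-add} I would run the incentive inequality in the opposite direction: with $w'$ additive, $w'(\{e_2\})=k^2$ and $0$ elsewhere, the gap argument (nobody but bidder $1$ values $e_2$ by more than $1$) forces $e_2\in f_1(w',v_{-1})$, so $P_1(w',v_{-1})\le k^2$; then the incentive inequality for true type $v_1^{e_2,big}$ against the report $w'$ yields $3k^4-P_1(v_1^{e_2,big},v_{-1})\ge 3k^4-P_1(w',v_{-1})\ge 3k^4-k^2$, i.e.\ $P_1(v_1^{e_2,big},v_{-1})\le k^2$; the $\le 4k^2+2$ bound in part \ref{item-4-add} is the same argument, more crudely.

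I expect the main obstacle to be conceptual rather than computational: realizing that individual rationality and no-negative-transfers can only ever deliver payment \emph{upper} bounds (or zeros), so the payment \emph{lower} bounds of parts \ref{item-2-add}(b) and \ref{item-3-add}(b) must come from the incentive inequality fed with a carefully chosen auxiliary valuation, and symmetrically that a payment upper bound far below the individual-rationality bound (part \ref{item-5-add}) requires the incentive inequality in the reverse direction. The remaining care is uniform bookkeeping of the approximation gaps: in every profile invoked — including those in which bidder $2$ holds a $3k^4$-valuation, and including the auxiliary profiles — one must check that any allocation making the ``wrong'' choice has welfare only $O(k^3)$ (in fact $O(k^2)+n$ except in the auxiliary profile of part \ref{item-2-add}(b)), so the ratio it would force, at least $\tfrac{3k^4}{O(k^3)}\ge k$, still exceeds $\min\{m,n\}$ for all $m,n\ge2$. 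This holds because the only valuations that ever exceed $O(k^2)$ are the ``big'' ones, and in each relevant profile there is at most one of them and it concerns exactly the item being argued about. Granting this bookkeeping, each of the eight assertions is a one-line consequence of individual rationality, no-negative-transfers, weak monotonicity, and the single incentive inequality.
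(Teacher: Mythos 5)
Your proposal is correct and follows essentially the same route as the paper's proof: allocations pinned down by the approximation guarantee, payment zeros and upper bounds from individual rationality plus no-negative-transfers, and the $2k^3+k^2$ lower bound and $k^2$ upper bound obtained from the dominant-strategy inequality against exactly the auxiliary valuations the paper uses ($2k^3+k^2$ on $e_1$, resp.\ $k^2$ on $e_2$). The only cosmetic deviation is deriving the allocation in part \ref{item-4-add} via weak monotonicity rather than the paper's direct incentive inequality combined with no-negative-transfers, which is an equivalent one-line variant.
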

The lemma is a direct consequence of the properties of the mechanism. We prove it in Appendix \ref{appsubsec-proof-lemma-add-small-pay}.
\subsubsection{The Valuations $v_{1}^{e_1,one}$ and $v_{1}^{e_1,big}$ Send the Same Message} \label{subsubsec-e1-together}
We now use the observations about the  mechanism (Lemma \ref{lemma-small-pay-add}) to show that $\mathcal S_1$ dictates the same message for $v_{1}^{e_1,one}$ and $v_{1}^{e_1,big}$ at vertex $u$. {We use Claim \ref{claim-e1-same-add} to prove Lemma \ref{lemma-case-1} and Lemma \ref{lemma-case-2}. 
% this observation in Subsections \ref{subsubsec-case1} and \ref{subsubsec-case2}.} 
\begin{claim}\label{claim-e1-same-add}
    The strategy $\mathcal S_1$ dictates the same message at vertex $u$ for both $v_1^{e_1,one}$ and $v_1^{e_1,big}$.
\end{claim}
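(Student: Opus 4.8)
The plan is to apply Lemma~\ref{lemma-bad-leaf-good-leaf} to player~$1$ at the vertex $u$ (recall that, after the reduction, $u\in\mathcal N_1$), taking $v_1^{e_1,big}$ as the ``reference'' valuation and $v_1^{e_1,one}$ as the alternative. First I would record a structural consequence of the way $u$ was chosen: since $u$ is the \emph{first} vertex at which two behavior profiles built from $\mathcal V$ diverge, no two such profiles split at any strict ancestor of $u$, and therefore \emph{every} profile $(\mathcal S_1(w_1),\ldots,\mathcal S_n(w_n))$ with $(w_1,\ldots,w_n)\in\mathcal V$ passes through $u$. In particular, for any $w_1\in\mathcal V_1$ and $w_{-1}\in\mathcal V_2\times\cdots\times\mathcal V_n$ the profile $(w_1,w_{-1})$ reaches $u$, so the first hypothesis of Lemma~\ref{lemma-bad-leaf-good-leaf} holds automatically for any pair of profiles drawn from $\mathcal V$, and only the utility inequality needs verification.

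Next I would exhibit the ``good'' and ``bad'' profiles. For the good profile take $(v_1^{e_1,one},v_2^{e_2,one},\ldots,v_n^{e_n,one})$: by Lemma~\ref{lemma-small-pay-add}.\ref{item-1-add}, bidder~$1$ wins a bundle containing $e_1$ and pays at most $1$, so a bidder whose true valuation is $v_1^{e_1,big}$ derives utility at least $3k^4-1$ from the corresponding leaf. For the bad profile take $(v_1^{e_1,big},v_2^{e_1,big},v_3^{e_3,one},\ldots,v_n^{e_n,one})$: by Lemma~\ref{lemma-small-pay-add}.\ref{item-2-add}, bidder~$1$ either gets a bundle not containing $e_1$ and pays $0$, giving type $v_1^{e_1,big}$ utility $0$, or gets a bundle containing $e_1$ while paying at least $2k^3+k^2$, giving type $v_1^{e_1,big}$ utility at most $3k^4-2k^3-k^2$; hence the utility of type $v_1^{e_1,big}$ at the bad profile is at most $\max\{0,\,3k^4-2k^3-k^2\}$, which is strictly smaller than $3k^4-1$ since $k=\max\{m,n\}\ge 2$ forces $2k^3+k^2>1$. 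Both profiles lie in $\mathcal V$ (note $v_2^{e_1,big}\in\mathcal V_2$, and $v_i^{e_i,one}\in\mathcal V_i$ for $i\ge 3$), hence both reach $u$, so Lemma~\ref{lemma-bad-leaf-good-leaf} applied with $v_1=v_1^{e_1,big}$ and $v_1'=v_1^{e_1,one}$ yields that $\mathcal S_1$ dictates the same message at $u$ for $v_1^{e_1,one}$ and $v_1^{e_1,big}$, which is exactly Claim~\ref{claim-e1-same-add}.

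I do not expect a genuine obstacle here: all the delicate reasoning about the mechanism's allocation and payments is already packaged in Lemma~\ref{lemma-small-pay-add}, and what remains is bookkeeping with the constants. The one point that actually matters is the choice of $v_1^{e_1,big}$ (rather than $v_1^{e_1,one}$) as the reference type in Lemma~\ref{lemma-bad-leaf-good-leaf}: since that lemma evaluates \emph{both} outcomes with a single valuation, we need the reference type's value for $e_1$ (namely $3k^4$) to swamp both the payment bound $2k^3+k^2$ from part~\ref{item-2-add} and the payment bound $1$ from part~\ref{item-1-add}; had we used the small type $v_1^{e_1,one}$, every utility in sight would be bounded by $1$ and no strict gap could be created. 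This step is the exact analogue of Claim~\ref{claim-oneone-same} in the multi-unit proof, with $v_1^{e_1,big}$ playing the role that $v_1^{ONE}$ played there.
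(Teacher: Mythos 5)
Your proposal is correct and follows essentially the same route as the paper: the same two valuation profiles $(v_1^{e_1,one},v_2^{e_2,one},\ldots,v_n^{e_n,one})$ and $(v_1^{e_1,big},v_2^{e_1,big},v_3^{e_3,one},\ldots,v_n^{e_n,one})$, the same parts \ref{item-1-add} and \ref{item-2-add} of Lemma \ref{lemma-small-pay-add}, utilities evaluated with the reference type $v_1^{e_1,big}$, and a final application of Lemma \ref{lemma-bad-leaf-good-leaf}. Your explicit remark that minimality of $u$ forces every profile from $\mathcal V$ through $u$ is just a slightly more detailed justification of the path condition the paper states in passing.
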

\begin{proof}
    Note that by Lemma \ref{lemma-small-pay-add} part \ref{item-1-add}, given $(v_1^{e_1,one},v_2^{e_2,one}\ldots,v_n^{e_n,one})$, bidder $1$ wins a bundle that contains $e_1$ and pays at most $1$. Therefore: 
    \begin{equation}\label{eq-prof-large-e1}
        v_1^{e_1,big}(f(v_1^{e_1,one},v_2^{e_2,one}\ldots,v_n^{e_n,one}))- P_1(v_1^{e_1,one},v_2^{e_2,one}\ldots,v_n^{e_n,one})\ge 3k^4-1
    \end{equation}
    In contrast, by Lemma \ref{lemma-small-pay-add} part \ref{item-2-add}:
    \begin{equation}\label{eq-prof-small-e1}
        v_1^{e_1,big}(f(v_1^{e_1,big},v_2^{e_1,big},v_3^{e_3,one},\ldots,v_n^{e_n,one}))- P_1(v_1^{e_1,big},v_2^{e_1,big},v_3^{e_3,one},\ldots,v_n^{e_n,one})\le 3k^4 - 2{k^3}-k^2
    \end{equation}
    Combining (\ref{eq-prof-large-e1}) and (\ref{eq-prof-small-e1}) gives:
    \begin{multline*}
        v_1^{e_1,big}(f(v_1^{e_1,one},v_2^{e_2,one}\ldots,v_n^{e_n,one}))- P_1(v_1^{e_1,one},v_2^{e_2,one}\ldots,v_n^{e_n,one})> \\ v_1^{e_1,big}(f(v_1^{e_1,big},v_2^{e_1,big},v_3^{e_3,one},\ldots,v_n^{e_n,one}))- P_1(v_1^{e_1,big},v_2^{e_1,big},v_3^{e_3,one},\ldots,v_n^{e_n,one})
    \end{multline*}
   We remind that vertex $u$ belongs in $Path(\mathcal S_1(v_1^{e_1,one}),\mathcal S_2(v_2^{e_2,one}),\ldots,\mathcal S_n(v_n^{e_n,one}))$ and  in
$Path(\mathcal{S}_1(v_1^{e_1,big}),\allowbreak\mathcal{S}_2(v_2^{e_1,big}), \mathcal{S}_3(v_3^{e_3,one}),\ldots,\mathcal{S}_n(v_n^{e_n,one}))$, so by Lemma \ref{lemma-bad-leaf-good-leaf} the strategy  $\mathcal S_1$ dictates the same message for   $v_1^{e_1,one}$ and $v_1^{e_1,big}$ at vertex $u$.
\end{proof}
% \subsubsection{Description of Case Analysis} \label{subsub-caseanalysis}
% To show that all valuations in $\mathcal V_1$ send the same message according to the strategy $\mathcal S_1$, we examine the valuation profile $(v_1^{both},v_2^{e_1,big},v_3^{e_3,one},\ldots,v_n^{e_n,one})$.
% Note that given the approximation guarantee of the mechanism, its allocation rule $f$ necessarily allocates item $e_1$ to bidder $2$. 
% Therefore, there are two possible cases regarding the allocation of bidder $1$ given $(v_1^{both},v_2^{e_1,big},v_3^{e_3,one},\allowbreak\ldots,v_n^{e_n,one})$:
% \begin{itemize}
%     \item \textbf{Case I}: The mechanism allocates a bundle that contains item $e_2$ to bidder $1$.
%     \item \textbf{Case II}: The mechanism allocates a bundle that contains neither the item $e_1$ nor the item $e_2$. 
% \end{itemize}
\subsubsection{Proof of Lemma \ref{lemma-case-1}} \label{subsubsec-case1}
% We now show that 
The proof is as follows. We first show that the strategy $\mathcal S_1$ always dictates the same message for $v_1^{both}$ and   
$v_1^{e_1,big}$ (Claim \ref{claim-sameM-both-large1}). We proceed by demonstrating that in the case under consideration in Lemma \ref{lemma-case-1}, the strategy
$\mathcal S_1$ dictates the same message for the valuations $v_1^{both}$ and $v_1^{e_2,big}$ (Claim \ref{claim-sameM-both-large2}).
We remind that $\mathcal V_1=\{v_1^{e_1,one},v_1^{e_1,big},v_1^{e_2,big},v_1^{both}\}$, so combining
Claims \ref{claim-e1-same-add}, \ref{claim-sameM-both-large1} and \ref{claim-sameM-both-large2} gives that the strategy $\mathcal S_1$ dictates the same message for all the valuations in $\mathcal V_1$, as needed. 
% We then show that $\mathcal S_1$ dictates the same message for both $v_1^{both}$ and $v_1^{e_2,big}$ (Claim \ref{claim-sameM-both-large2}). Combining  Claim \ref{claim-sameM-both-large1} and Claim \ref{claim-sameM-both-large2} with Claim \ref{claim-e1-same-add} gives that $\mathcal{S}_1$ sends the same message for all the valuations in $\mathcal V_1$, as needed. 

\begin{claim}\label{claim-sameM-both-large1}
% If $f(v_1^{both},v_2^{e_1,big},v_3^{e_3,one},\ldots,v_n^{e_n,one})$ outputs an allocation where bidder $1$ wins a bundle that contains $e_2$, then   
The strategy $\mathcal S_1$ dictates the same message at vertex $u$ for both $v_1^{both}$  and $v_1^{e_1,big}$.
\end{claim}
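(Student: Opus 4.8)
The plan is to apply Lemma~\ref{lemma-bad-leaf-good-leaf} to bidder~$1$ at the vertex~$u$, taking as the ``good'' and ``bad'' profiles — for a bidder~$1$ whose true valuation is $v_1^{both}$ —
$$v^{G}=(v_1^{e_1,one},v_2^{e_2,one},v_3^{e_3,one},\ldots,v_n^{e_n,one})\qquad\text{and}\qquad v^{B}=(v_1^{both},v_2^{e_1,big},v_3^{e_3,one},\ldots,v_n^{e_n,one}).$$
Both profiles lie in $\mathcal V=\mathcal V_1\times\cdots\times\mathcal V_n$, and since $u$ is the first vertex at which two behavior profiles induced by valuations of $\mathcal V$ split, the path $Path(\mathcal S(v))$ passes through $u$ for \emph{every} $v\in\mathcal V$; in particular $u\in Path(\mathcal S(v^{G}))\cap Path(\mathcal S(v^{B}))$, so condition~1 of Lemma~\ref{lemma-bad-leaf-good-leaf} holds for this pair.

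The next step is to verify condition~2, i.e.\ that the leaf reached under $v^{B}$ is strictly less profitable for a bidder~$1$ with valuation $v_1^{both}$ than the leaf reached under $v^{G}$. For $v^{G}$, Lemma~\ref{lemma-small-pay-add}.\ref{item-1-add} tells us that bidder~$1$ wins a bundle containing $e_1$ and pays at most $1$, so his $v_1^{both}$-utility at that leaf is at least $v_1^{both}(\{e_1\})-1=(2k^2+2)-1=2k^2+1$. For $v^{B}$, Lemma~\ref{lemma-small-pay-add}.\ref{item-new-add} tells us that bidder~$1$ does not win $e_1$; since $v_1^{both}$ is additive and positive only on $e_1$ and $e_2$, the bundle he wins is worth at most $v_1^{both}(\{e_2\})=2k^2$ to him, and by no-negative-transfers his payment is nonnegative, so his $v_1^{both}$-utility at that leaf is at most $2k^2$. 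Hence the $v^{B}$-utility ($\le 2k^2$) is strictly below the $v^{G}$-utility ($\ge 2k^2+1$), which is exactly condition~2.

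Lemma~\ref{lemma-bad-leaf-good-leaf} then forces $\mathcal S_1$ to dictate the same message at $u$ for $v_1^{both}$ and $v_1^{e_1,one}$, and chaining this with Claim~\ref{claim-e1-same-add} (which already gives the same message at $u$ for $v_1^{e_1,one}$ and $v_1^{e_1,big}$) yields the claim. The one point that genuinely requires care is the comparison $2k^2$ versus $2k^2+1$: it goes through precisely because $v_1^{both}$ values $e_1$ by exactly $2$ more than $e_2$, which beats the (at most $1$) price that individual rationality forces on bidder~$1$ in $v^{G}$ — this is where the particular constants in $v_1^{both}$ enter. The two substantive facts about the mechanism that the argument consumes, namely that bidder~$1$ cannot obtain $e_1$ once $v_2^{e_1,big}$ (value $3k^4$) competes for it and that his payment in $v^{G}$ cannot exceed $1$, are already supplied by Lemma~\ref{lemma-small-pay-add}; note also that, unlike the companion Claim~\ref{claim-sameM-both-large2}, this claim needs neither the case hypothesis of Lemma~\ref{lemma-case-1} nor parts~\ref{item-6-add}--\ref{item-7-add} of Lemma~\ref{lemma-small-pay-add}.
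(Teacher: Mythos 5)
Your proof is correct, but it follows a genuinely different route from the paper's. The paper proves this claim in a single application of Lemma~\ref{lemma-bad-leaf-good-leaf}, measuring utilities with respect to $v_1^{e_1,big}$: the ``good'' profile is $(v_1^{both},v_2^{e_2,one},\ldots,v_n^{e_n,one})$, where Lemma~\ref{lemma-small-pay-add} part~\ref{item-4-add} gives utility at least $3k^4-4k^2-2$, and the ``bad'' profile is $(v_1^{e_1,big},v_2^{e_1,big},v_3^{e_3,one},\ldots,v_n^{e_n,one})$, where part~\ref{item-2-add} caps the utility at $3k^4-2k^3-k^2$; the lemma then directly yields the same message for $v_1^{e_1,big}$ and $v_1^{both}$. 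You instead measure utilities with respect to $v_1^{both}$, compare $(v_1^{both},v_2^{e_1,big},\ldots)$ against $(v_1^{e_1,one},v_2^{e_2,one},\ldots)$ using parts~\ref{item-1-add} and~\ref{item-new-add} together with no-negative-transfers, conclude that $v_1^{both}$ and $v_1^{e_1,one}$ send the same message at $u$, and then chain through Claim~\ref{claim-e1-same-add} by transitivity. Both arguments are sound; yours avoids parts~\ref{item-2-add} and~\ref{item-4-add} of Lemma~\ref{lemma-small-pay-add} for this claim (part~\ref{item-4-add} is the one whose own proof needs a dominant-strategy deviation argument), at the cost of an extra chaining step, and it has the pleasant side effect of establishing unconditionally a statement that the paper only proves under the case hypothesis of Lemma~\ref{lemma-case-2} (Claim~\ref{claim-sameM-both-small-e1-first}): your bound of $2k^2$ on bidder~$1$'s $v_1^{both}$-utility in $(v_1^{both},v_2^{e_1,big},\ldots)$ holds whether or not he wins $e_2$, so your intermediate step could also be reused to streamline the case analysis there.
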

\begin{proof}
% [Proof of Claim \ref{claim-sameM-both-large1}]
Note that by Lemma \ref{lemma-small-pay-add} part \ref{item-4-add}, given $(v_1^{both},v_2^{e_2,one},\ldots,\allowbreak v_n^{e_n,one})$, bidder $1$ wins a bundle that contains $e_1$ and pays at most $4k^2+2$. Therefore:
\begin{equation}\label{eq-e1toboth-first}
    v_1^{e_1,big}(f(v_1^{both},v_2^{e_2,one},v_3^{e_3,one},\ldots,v_n^{e_n,one}))-P_1(v_1^{both},v_2^{e_2,one},\ldots,v_n^{e_n,one})\ge 3k^4-4k^2-2
\end{equation}
Whereas by Lemma \ref{lemma-small-pay-add} part \ref{item-2-add}:
\begin{equation}\label{eq-e1toboth2-first}
    v_1^{e_1,big}(f(v_1^{e_1,big},v_2^{e_1,big},v_3^{e_3,one},\ldots,v_n^{e_n,one}))-P_1(v_1^{e_1,big},v_2^{e_1,big},v_3^{e_3,one},\ldots,v_n^{e_n,one})\le 3k^4-2k^3-k^2
\end{equation}
Combining (\ref{eq-e1toboth-first}) and (\ref{eq-e1toboth2-first}) gives:
\begin{multline*}
    v_1^{e_1,big}(f(v_1^{e_1,big},v_2^{e_1,big},v_3^{e_3,one},\ldots,v_n^{e_n,one}))-P_1(v_1^{e_1,big},v_2^{e_1,big},v_3^{e_3,one},\ldots,v_n^{e_n,one})< \\ v_1^{e_1,big}(f(v_1^{both},v_2^{e_1,big},v_3^{e_3,one},\ldots,v_n^{e_n,one}))-P_1(v_1^{both},v_2^{e_1,big},v_3^{e_3,one},\ldots,v_n^{e_n,one})
\end{multline*}
We remind that by assumption vertex $u$ belongs in $Path(\mathcal S_1(v_1^{e_1,big}),\mathcal S_2(v_2^{e_1,big}),S_3(v_3^{e_3,one}),\ldots,\allowbreak\mathcal S_n(v_n^{e_n,one}))$ and also in
$Path(\mathcal{S}_1(v_1^{both}),\mathcal{S}_2(v_2^{e_2,one}), \mathcal{S}_3(v_3^{e_3,one}),\ldots,\mathcal{S}_n(v_n^{e_n,one}))$, so Lemma \ref{lemma-bad-leaf-good-leaf} gives that the strategy $\mathcal S_1$ dictates the same message for  $v_1^{e_1,big}$ and $v_1^{both}$ at vertex $u$.
\end{proof}

\begin{claim}\label{claim-sameM-both-large2}
% If the mechanism allocates a bundle that contains item $e_2$ to bidder $1$ given the valuation profile $(v_1^{both},v_2^{e_1,big},v_3^{e_3,big},\ldots,v_n^{e_n,big})$,
If $f(v_1^{both},v_2^{e_1,big},v_3^{e_3,one},\ldots,v_n^{e_n,one})$ outputs an allocation where bidder $1$ wins a bundle that contains $e_2$, 
 then   the strategy $\mathcal S_1$ dictates the same message at vertex $u$ for both $v_1^{both}$  and $v_1^{e_2,big}$.
\end{claim}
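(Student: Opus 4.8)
The plan is to apply Lemma~\ref{lemma-bad-leaf-good-leaf} to player~$1$ at the vertex $u$, in the same spirit as the proof of Claim~\ref{claim-sameM-both-large1}, by comparing bidder~$1$'s utility under the (additive) valuation $v_1^{e_2,big}$ at two valuation profiles, both of whose $\mathcal S$-induced paths contain $u$. For the ``good'' profile I take the one appearing in the hypothesis of the claim, $(v_1^{both},v_2^{e_1,big},v_3^{e_3,one},\ldots,v_n^{e_n,one})$: by assumption bidder~$1$ wins there a bundle containing $e_2$, so Lemma~\ref{lemma-small-pay-add}.\ref{item-6-add} bounds his payment by $2k^2$; evaluating $v_1^{e_2,big}$ on that bundle therefore gives him utility at least $3k^4-2k^2$ there. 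For the ``bad'' profile I take $(v_1^{e_2,big},v_2^{e_2,big},v_3^{e_3,one},\ldots,v_n^{e_n,one})$, which lies in $\mathcal V$ because $v_2^{e_2,big}\in\mathcal V_2$; by Lemma~\ref{lemma-small-pay-add}.\ref{item-3-add}, bidder~$1$ there either gets a bundle avoiding $e_2$ and pays $0$ (utility $0$ under $v_1^{e_2,big}$), or gets a bundle containing $e_2$ and pays at least $2k^3+k^2$ (utility at most $3k^4-2k^3-k^2$ under $v_1^{e_2,big}$), so in either sub-case his $v_1^{e_2,big}$-utility is at most $3k^4-2k^3-k^2$.

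Since $3k^4-2k^3-k^2<3k^4-2k^2$ (equivalently $2k^3>k^2$, which holds as $k=\max\{m,n\}\ge 2$), the ``bad'' profile is strictly worse for bidder~$1$ under $v_1^{e_2,big}$ than the ``good'' one. Both profiles belong to $\mathcal V$, and $u$ is the first vertex at which profiles in $\mathcal V$ diverge, so both of their $\mathcal S$-induced paths pass through $u$; moreover $u\in\mathcal N_1$. Lemma~\ref{lemma-bad-leaf-good-leaf}, applied with $v_i=v_1^{e_2,big}$ at the ``bad'' profile and $v_i'=v_1^{both}$ at the ``good'' one, then yields that $\mathcal S_1$ dictates the same message at $u$ for $v_1^{e_2,big}$ and $v_1^{both}$, which is exactly the claim.

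I expect no genuinely new idea beyond what was used for Claim~\ref{claim-sameM-both-large1}; the two things to watch are verifying that the ``bad'' profile $(v_1^{e_2,big},v_2^{e_2,big},\ldots)$ really is a member of $\mathcal V$ (so that its path reaches $u$), and collapsing the two alternatives of Lemma~\ref{lemma-small-pay-add}.\ref{item-3-add} into the single bound $3k^4-2k^3-k^2$, which requires $0\le 3k^4-2k^3-k^2$, i.e.\ $k\ge 2$ --- valid since $m,n\ge 2$. The numeric thresholds appearing in $v_1^{both}$, $v_1^{e_2,big}$ and in Lemma~\ref{lemma-small-pay-add} are calibrated precisely so that these magnitudes separate.
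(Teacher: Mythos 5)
Your proposal is correct and matches the paper's own proof essentially verbatim: the same two profiles $(v_1^{both},v_2^{e_1,big},v_3^{e_3,one},\ldots,v_n^{e_n,one})$ and $(v_1^{e_2,big},v_2^{e_2,big},v_3^{e_3,one},\ldots,v_n^{e_n,one})$, the same bounds from Lemma~\ref{lemma-small-pay-add} parts~\ref{item-6-add} and~\ref{item-3-add}, and the same application of Lemma~\ref{lemma-bad-leaf-good-leaf} under the valuation $v_1^{e_2,big}$ at vertex $u$. The two side conditions you flag ($k\ge 2$ so the magnitudes separate, and membership of both profiles in $\mathcal V$ so their paths reach $u$) are exactly the facts the paper relies on implicitly.
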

\begin{proof}
% [Proof of Claim \ref{claim-sameM-both-large2}]
    By Lemma \ref{lemma-small-pay-add} part \ref{item-6-add} and by assumption, given $(v_1^{both},v_2^{e_1,big},v_3^{e_3,one},\ldots,v_n^{e_n,one})$, bidder $1$  wins a bundle that contains $e_2$ and pays at most $2k^2$, so:
    \begin{equation}\label{eq-case1-claim2-1-first}
        v_1^{e_2,big}(f(v_1^{both},v_2^{e_1,big},v_3^{e_3,one},\ldots,v_n^{e_n,one}))-P_1(v_1^{both},v_2^{e_1,big},v_3^{e_3,one},\ldots,v_n^{e_n,one})\ge 3k^4-2k^2
    \end{equation}
    Whereas by Lemma \ref{lemma-small-pay-add} part \ref{item-3-add}:
     \begin{equation}\label{eq-case1-claim2-2-first}
        v_1^{e_2,big}(f(v_1^{e_2,big},v_2^{e_2,big},v_3^{e_3,one},\ldots,v_n^{e_n,one}))-P_1(v_1^{e_2,big},v_2^{e_2,big},v_3^{e_3,one},\ldots,v_n^{e_n,one})\le 3k^4-2k^3-k^2
    \end{equation}
Combining (\ref{eq-case1-claim2-1-first}) and (\ref{eq-case1-claim2-2-first}) gives:
\begin{multline*}
    v_1^{e_2,big}(f(v_1^{e_2,big},v_2^{e_2,big},v_3^{e_3,one},\ldots,v_n^{e_n,one}))-P_1(v_1^{e_2,big},v_2^{e_2,big},v_3^{e_3,one},\ldots,v_n^{e_n,one})< \\
    v_1^{e_2,big}(f(v_1^{both},v_2^{e_1,big},v_3^{e_3,one},\ldots,v_n^{e_n,one}))-P_1(v_1^{both},v_2^{e_1,big},v_3^{e_3,one},\ldots,v_n^{e_n,one})
\end{multline*}
Similarly to before, vertex $u$ is in $Path(\mathcal S_1(v_1^{e_2,big}),\mathcal S_2(v_2^{e_2,big}),\mathcal{S}_3(v_3^{e_3,one}),\ldots,\mathcal{S}_n(v_n^{e_n,one}))$ and also in $Path(\mathcal S_1(v_1^{both}),\allowbreak \mathcal S_2(v_2^{e_1,big}),\mathcal{S}_3(v_3^{e_3,one}),\ldots,\mathcal{S}_n(v_n^{e_n,one}))$, so applying Lemma \ref{lemma-bad-leaf-good-leaf} completes the proof. 
\end{proof}

\subsubsection{Proof of Lemma \ref{lemma-case-2}} \label{subsubsec-case2}
The proof of Lemma \ref{lemma-case-2} has the same structure as the proof of Lemma \ref{lemma-case-1}, except that we now analyze the case where bidder $1$ does not win $e_2$ in the instance  $(v_1^{both},v_2^{e_1,big},v_3^{e_3,one},\ldots,v_n^{e_n,one})$.
% Similarly to the proof of the previous case, the proof for this case also requires two steps. 
 We first show that for this case, the strategy $\mathcal S_1$ dictates the same message for $v_1^{both}$ and 
$v_1^{e_1,one}$ (Claim \ref{claim-sameM-both-small-e1-first}). We then show that it also implies that $\mathcal S_1$ dictates the same message for both $v_1^{both}$ and $v_1^{e_2,big}$ (Claim \ref{claim-sameM-both-large-e2}). Combining  Claims \ref{claim-sameM-both-small-e1-first} and \ref{claim-sameM-both-large-e2} with Claim \ref{claim-e1-same-add} gives that $\mathcal{S}_1$ sends the same message for all the valuations in $\mathcal V_1$, that completes the proof. 

\begin{claim}\label{claim-sameM-both-small-e1-first}
    If $f(v_1^{both},v_2^{e_1,big},v_3^{e_3,one},\ldots,v_n^{e_n,one})$
    outputs an allocation where bidder $1$ wins a bundle not containing $e_2$,   
then the strategy $\mathcal S_1$ dictates the same message at vertex $u$ for both $v_1^{both}$  and $v_1^{e_1,one}$.
\end{claim}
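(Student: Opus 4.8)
The plan is to mirror the structure already used in Claims \ref{claim-e1-same-add} and \ref{claim-sameM-both-large1}: exhibit two valuation profiles that both pass through vertex $u$, one of which is strictly more profitable than the other for a bidder holding valuation $v_1^{both}$, and then invoke Lemma \ref{lemma-bad-leaf-good-leaf} to conclude that $\mathcal S_1$ cannot split $v_1^{both}$ and $v_1^{e_1,one}$ at $u$. The ``good'' profile will be $(v_1^{e_1,one},v_2^{e_2,one},\ldots,v_n^{e_n,one})$ and the ``bad'' profile will be $(v_1^{both},v_2^{e_1,big},v_3^{e_3,one},\ldots,v_n^{e_n,one})$; both contain $u$ on their strategy paths by the choice of $u$ as the first divergence vertex within $\mathcal V$ and by the assumption that it is associated with player $1$.

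First I would bound bidder $1$'s utility from $v_1^{both}$ at the ``bad'' profile. By the hypothesis of Lemma \ref{lemma-case-2} (which is exactly the hypothesis of this claim), bidder $1$ wins a bundle not containing $e_2$ in $f(v_1^{both},v_2^{e_1,big},v_3^{e_3,one},\ldots,v_n^{e_n,one})$; by Lemma \ref{lemma-small-pay-add}.\ref{item-new-add} he also does not win $e_1$, so $v_1^{both}$ values his bundle at $0$, and by Lemma \ref{lemma-small-pay-add}.\ref{item-7-add} he pays zero. Hence his utility there is exactly $0$. Next I would lower-bound his utility from $v_1^{both}$ at the ``good'' profile: by Lemma \ref{lemma-small-pay-add}.\ref{item-1-add}, given $(v_1^{e_1,one},v_2^{e_2,one},\ldots,v_n^{e_n,one})$ bidder $1$ wins a bundle containing $e_1$ and pays at most $1$, so under $v_1^{both}$ his value is at least $2k^2+2$ and his utility is at least $2k^2+1>0$. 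Comparing these two quantities gives
\begin{multline*}
v_1^{both}(f(v_1^{both},v_2^{e_1,big},v_3^{e_3,one},\ldots,v_n^{e_n,one}))-P_1(v_1^{both},v_2^{e_1,big},v_3^{e_3,one},\ldots,v_n^{e_n,one})< \\ v_1^{both}(f(v_1^{e_1,one},v_2^{e_2,one},\ldots,v_n^{e_n,one}))-P_1(v_1^{e_1,one},v_2^{e_2,one},\ldots,v_n^{e_n,one}).
\end{multline*}

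Finally, since $u\in Path(\mathcal S_1(v_1^{both}),\mathcal S_2(v_2^{e_1,big}),\mathcal S_3(v_3^{e_3,one}),\ldots,\mathcal S_n(v_n^{e_n,one}))\cap Path(\mathcal S_1(v_1^{e_1,one}),\mathcal S_2(v_2^{e_2,one}),\ldots,\mathcal S_n(v_n^{e_n,one}))$ — the first membership by the definition of $u$ as the first divergence point reached by profiles in $\mathcal V$, the second because $u$ lies on all $\mathcal V$-profile paths up to $u$ — Lemma \ref{lemma-bad-leaf-good-leaf} applied to player $1$, vertex $u$, and these two profiles yields that $\mathcal S_1$ dictates the same message at $u$ for $v_1^{both}$ and $v_1^{e_1,one}$, which is the claim. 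I do not expect a genuine obstacle here; the only point requiring care is making sure that the ``good'' profile is indeed a profile whose path passes through $u$ — this is where the earlier choice of $u$ as the \emph{first} vertex at which $\mathcal V$-profiles diverge is used, together with the fact established earlier that all valuations of players $i\ge 3$ in $\mathcal V_i$ are singletons, so that swapping player $2$'s valuation between $v_2^{e_2,one}$ and $v_2^{e_1,big}$ does not move the path off $u$.
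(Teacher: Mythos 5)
Your proposal is correct and matches the paper's own proof essentially step for step: the same two profiles, the same parts \ref{item-1-add}, \ref{item-new-add}, and \ref{item-7-add} of Lemma \ref{lemma-small-pay-add} to bound the two utilities, and the same final appeal to Lemma \ref{lemma-bad-leaf-good-leaf} at vertex $u$. No gaps.
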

\begin{proof}
% [Proof of Claim \ref{claim-sameM-both-small-e1-first}]
Note that by Lemma \ref{lemma-small-pay-add} part \ref{item-1-add}, given $(v_1^{e_1,one},v_2^{e_2,one},\ldots\allowbreak,v_n^{e_n,one})$, bidder $1$ wins a bundle that contains $e_1$ and pays at most $1$. Therefore:
\begin{equation}\label{eq-case2-e1to1}
    v_1^{both}(f(v_1^{e_1,one},v_2^{e_2,one},\ldots\allowbreak,v_n^{e_n,one}))-P_1(v_1^{e_1,one},v_2^{e_2,one},\ldots\allowbreak,v_n^{e_n,one})\ge 2k^2+1
\end{equation}
We now analyze the output of the mechanism for the valuation profile $(v_1^{both},v_2^{e_1,big},v_3^{e_3,one},\ldots,v_n^{e_n,one})$. By assumption and by Lemma \ref{lemma-small-pay-add} part \ref{item-new-add}, the function $f$ outputs for this instance
an allocation where bidder $1$ wins neither item $e_1$ nor item $e_2$. Therefore,  $v_1^{both}(f(v_1^{both},v_2^{e_1,big},v_3^{e_3,one},\ldots,v_n^{e_n,one}))=0$. Combining with 
 Lemma \ref{lemma-small-pay-add} part \ref{item-7-add}, we get that: 
\begin{equation}\label{eq-bothnothing}
    v_1^{both}(f(v_1^{both},v_2^{e_1,big},v_3^{e_3,one},\ldots,v_n^{e_n,one}))-P_1(v_1^{both},v_2^{e_1,big},v_3^{e_3,one},\ldots,v_n^{e_n,one})=0
\end{equation}
Combining (\ref{eq-case2-e1to1}) and (\ref{eq-bothnothing}) gives:
\begin{multline*}
    v_1^{both}(f(v_1^{both},v_2^{e_1,big},v_3^{e_3,one},\ldots,v_n^{e_n,one}))-P_1(v_1^{both},v_2^{e_1,big},v_3^{e_3,one},\ldots,v_n^{e_n,one})< \\ v_1^{both}(f(v_1^{e_1,one},v_2^{e_2,one},\ldots\allowbreak,v_n^{e_n,one})-P_1(v_1^{e_1,one},v_2^{e_2,one},\ldots\allowbreak,v_n^{e_n,one})
\end{multline*}
We remind that vertex $u$ belongs in $Path(\mathcal S_1(v_1^{e_1,one}),\mathcal S_2(v_2^{e_2,one}),\ldots,\mathcal S_n(v_n^{e_n,one}))$ and also belongs in
$Path(\mathcal{S}_1(v_1^{both}),\mathcal{S}_2(v_2^{e_1,big}), \mathcal{S}_3(v_3^{e_3,one}),\ldots,\mathcal{S}_n(v_n^{e_n,one}))$, so the proof is finished by applying Lemma \ref{lemma-bad-leaf-good-leaf}.
% completes the proof.
% implies that $\mathcal S_1(v_1^{e_1,one})$ and $\mathcal S_1(v_1^{both})$ dictate the same message at vertex $u$.
\end{proof}

\begin{claim}\label{claim-sameM-both-large-e2}
    If $f(v_1^{both},v_2^{e_1,big},v_3^{e_3,one},\ldots,v_n^{e_n,one})$
    outputs an allocation where bidder $1$  wins a bundle not containing $e_2$, 
then      the strategy $\mathcal S_1$ dictates the same message at vertex $u$ for both $v_1^{both}$  and $v_1^{e_2,big}$.
\end{claim}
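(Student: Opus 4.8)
The plan is to follow the same template as Claim~\ref{claim-sameM-both-large2}: exhibit two valuation profiles in $\mathcal V$ that both pass through $u$ and feed them into Lemma~\ref{lemma-bad-leaf-good-leaf} to conclude that $\mathcal S_1$ cannot separate $v_1^{both}$ from $v_1^{e_2,big}$ at $u$. The twist is that the ``profitable'' leaf used in Claim~\ref{claim-sameM-both-large2}, namely the outcome of $(v_1^{both},v_2^{e_1,big},v_3^{e_3,one},\ldots,v_n^{e_n,one})$, is no longer advantageous in the present case, so a different profitable profile must be found; the natural candidate is $(v_1^{e_2,big},v_2^{e_2,one},\ldots,v_n^{e_n,one})$, for which Lemma~\ref{lemma-small-pay-add}.\ref{item-5-add} controls both the bundle and the payment of bidder~$1$. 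Correspondingly, the right ``measuring'' valuation in Lemma~\ref{lemma-bad-leaf-good-leaf} will be $v_1^{both}$ rather than $v_1^{e_2,big}$.

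Concretely, I would take as the low-utility profile $(v_1^{both},v_2^{e_1,big},v_3^{e_3,one},\ldots,v_n^{e_n,one})$: under the hypothesis of Lemma~\ref{lemma-case-2} bidder~$1$ wins a bundle avoiding $e_2$, and by Lemma~\ref{lemma-small-pay-add}.\ref{item-new-add} it also avoids $e_1$, so additivity of $v_1^{both}$ gives $v_1^{both}(f(\cdot))=0$; combined with Lemma~\ref{lemma-small-pay-add}.\ref{item-7-add} (payment $0$), the utility of $v_1^{both}$ at this leaf is exactly $0$. As the high-utility profile I would take $(v_1^{e_2,big},v_2^{e_2,one},v_3^{e_3,one},\ldots,v_n^{e_n,one})$: by Lemma~\ref{lemma-small-pay-add}.\ref{item-5-add} bidder~$1$ wins a bundle containing $e_2$ and pays at most $k^2$, and since $v_1^{both}(S)\ge 2k^2$ for every $S\ni e_2$, the utility of $v_1^{both}$ at this leaf is at least $2k^2-k^2=k^2>0$. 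Both profiles lie in $\mathcal V$, so $u$ lies on both of their strategy-paths by the choice of $u$ (it is on the path of every profile in $\mathcal V$).

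Since $0 < k^2$, Lemma~\ref{lemma-bad-leaf-good-leaf} (applied with player~$1$, vertex $u$, valuation $v_1^{both}$ in the first profile and $v_1^{e_2,big}$ in the second) yields that $\mathcal S_1$ dictates the same message at $u$ for $v_1^{both}$ and $v_1^{e_2,big}$. The only delicate point is pinning the low-utility leaf to utility exactly $0$: this needs both that bidder~$1$'s bundle there omits $e_2$ (the case hypothesis) \emph{and} that it omits $e_1$ (Lemma~\ref{lemma-small-pay-add}.\ref{item-new-add}) \emph{and} that the payment is $0$ (Lemma~\ref{lemma-small-pay-add}.\ref{item-7-add}); everything else is an immediate comparison valid for $k\ge 2$. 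Together with Claims~\ref{claim-e1-same-add} and~\ref{claim-sameM-both-small-e1-first}, this shows $\mathcal S_1$ issues a single message at $u$ for all four valuations in $\mathcal V_1$, contradicting the defining property of $u$ and completing the proof of Lemma~\ref{lemma-case-2}.
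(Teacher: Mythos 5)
Your proposal matches the paper's own proof: it uses the same two profiles, deriving utility exactly $0$ for $v_1^{both}$ at $(v_1^{both},v_2^{e_1,big},v_3^{e_3,one},\ldots,v_n^{e_n,one})$ via the case hypothesis together with Lemma \ref{lemma-small-pay-add} parts \ref{item-new-add} and \ref{item-7-add}, and utility at least $k^2$ at $(v_1^{e_2,big},v_2^{e_2,one},\ldots,v_n^{e_n,one})$ via part \ref{item-5-add}, then applies Lemma \ref{lemma-bad-leaf-good-leaf} with measuring valuation $v_1^{both}$. The argument is correct and essentially identical to the paper's.
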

\begin{proof}
% [of Claim \ref{claim-sameM-both-large-e2}]
Due to the same reasons specified in the proof of Claim \ref{claim-sameM-both-small-e1-first}:
    % By Lemma \ref{lemma-small-pay-add} part \ref{item-7-add} and by assumption, 
    % given $(v_1^{both},v_2^{e_1,big},\allowbreak v_3^{e_3,one},\allowbreak\ldots,v_n^{e_n,one})$, bidder $1$  gets no items that are valuable for him and pays nothing, so:
    \begin{equation}\label{eq-case1-claim2-1}
        v_1^{both}(f(v_1^{both},v_2^{e_1,big},v_3^{e_3,one},\ldots,v_n^{e_n,one}))-P_1(v_1^{both},v_2^{e_1,big},v_3^{e_3,one},\ldots,v_n^{e_n,one})=0
    \end{equation}
    Whereas by Lemma \ref{lemma-small-pay-add} part \ref{item-5-add}:
     \begin{equation}\label{eq-case1-claim2-2}
        v_1^{both}(f(v_1^{e_2,big},v_2^{e_2,one},\ldots,v_n^{e_n,one}))-P_1(v_1^{e_2,big},v_2^{e_2,one},\ldots,v_n^{e_n,one})\ge 2k^2-k^2=k^2
    \end{equation}
Combining (\ref{eq-case1-claim2-1}) and (\ref{eq-case1-claim2-2}) gives:
\begin{multline*}
v_1^{both}(f(v_1^{both},v_2^{e_1,big},v_3^{e_3,one},\ldots,v_n^{e_n,one}))-P_1(v_1^{both},v_2^{e_1,big},v_3^{e_3,one},\ldots,v_n^{e_n,one})< \\
    v_1^{both}(f(v_1^{e_2,big},v_2^{e_2,one},\ldots,v_n^{e_n,one}))-P_1(v_1^{e_2,big},v_2^{e_2,one},\ldots,v_n^{e_n,one})
\end{multline*}
Similarly to before, vertex $u$ is in $Path(\mathcal S_1(v_1^{both}),\mathcal S_2(v_2^{e_1,big}),\mathcal{S}_3(v_3^{e_3,one}),\ldots,\mathcal{S}_n(v_n^{e_n,one}))$ and also in $Path(\mathcal S_1(v_1^{e_2,big}),\allowbreak \mathcal S_2(v_2^{e_2,one}),\ldots,\mathcal{S}_n(v_n^{e_n,one}))$, so applying Lemma \ref{lemma-bad-leaf-good-leaf} completes the proof. 
\end{proof}

% \section{Characterization}
% \input{characterization}

\section*{Acknowledgements}
We would like to thank Shahar Dobzinski for  numerous helpful discussions, valuable advice, and feedback on earlier drafts of this paper. 
Additionally, we wish to thank Ariel Shaulker and the anonymous reviewers for their helpful comments regarding the presentation of the paper. 
% for her insightful comments regarding earlier drafts of the paper. 
% \shirinote{Thank Dan for helpful discussions?}
% \shirinote{TODO - thank Ariel!}
% Shahar Dobzinski and Ariel Shaulker for numerous helpful discussions, valuable advice, and feedback on earlier drafts of this paper. 
% Also, we would like to thank Daniel Schoepflin and Carmine Ventre for helpful discussions. and for the anonymous reviewers for their helpful comments regarding the presentation of the paper. 
% anonymous reviewers for insightful comments.  \shirinote{should I write the conference name?}

\bibliographystyle{alpha}
\bibliography{main-arxiv}

\appendix

\section{An Impossibility Result for Unit-Demand Bidders: Proof of Theorem \ref{thm-lb-unit} }\label{appsec-unit}
In this section, we prove an impossibility result for unit-demand bidders. 
{We remind that a valuation $v_i$ is \emph{unit-demand} if $v_i(S)=\max_{j\in S}v_i(\{j\})$.} This proof is extremely similar to the proof of the Theorem \ref{thm-lb-add} in Section \ref{sec-additive-unit}. 
We write it for the sake of completeness. 
% The proofs are very similar because most of the valuations that we focus on in the proof of  Theorem \ref{thm-lb} are in fact unit-demand and additive at the same time, because these valuations satisfy that only one item is valuable for them. 
We note that the proofs are similar because most of the valuations in the analysis of Theorems \ref{thm-lb-unit} and Theorem \ref{thm-lb-add}  are both unit-demand and additive.

 Fix an auction with a set of items  $M$ and $n$ bidders, where the   domain $V_i$ of each bidder has  unit demand valuations with values in  $\{0,1,\ldots,poly(m,n)\}$. We denote the elements  of $M$ with $\{e_1,\ldots,e_m\}$.

Assume towards a contradiction that 
there exists an obviously strategy-proof, individually rational, no-negative-transfers mechanism $\mathcal M$ together with strategies $\mathcal S=(\mathcal S_1,\ldots,\mathcal S_n)$
that implement an allocation rule and payment schemes  $(f,P_1,\ldots,P_n):V_1\times\cdots\times V_n\to\mathcal T\times \mathbb R^n$, where $f$ 
gives an approximation strictly better than $\min\{m,n\}$ to the optimal social welfare.  

{We begin by defining several valuations that will be useful when analyzing the mechanism. We take advantage of the fact that they are unit-demand, so they can be fully described by specifying the value for each item separately.   
% Thus, there exists
% an allocation rule
% $f:V_1\times\cdots\times V_n\to \mathcal T$  and payment schemes  $P_1,\ldots,P_n:V_1\times \cdots \times V_n\to \mathbb{R}^n$ the payment schemes  
% Fix an allocation rule  with approximation ratio $\alpha$ such that $\alpha>\max\{\frac{1}{n},\frac{1}{m}\}$. Let $\mathcal{M}$ be a normalized mechanism 
% and strategies  $(\mathcal S_1,\ldots,\mathcal{S}_n)$  that realize $f$ together with the payment schemes $P_1,\ldots,P_n:V_1\times \cdots \times V_n\to \mathbb{R}^n$. 
For every player $i\le m$ and for every item $x$, we define:
$$
v_i^{e_i,one}(x)=\begin{cases}
1 \quad x=e_i, \\
0 \quad \text{otherwise.}
\end{cases}
$$
Whereas for every player $i>m$:
$$
v_i^{e_i,one}(x)=\begin{cases}
1 \quad x=e_m, \\
0 \quad \text{otherwise.}
\end{cases}
$$
 Observe that given the valuation profile $(v_1^{e_1,one},\ldots,v_n^{e_n,one})$,
 %\shirinote{If time permits, think how to make notations here nicer} 
 the approximation ratio of $f$
 guarantees that there are at least two bidders such that $f$ assigns to them a bundle that is valuable for them. Similarly to the proof of Theorem \ref{thm-lb-add}, we assume for convenience and without loss of generality  
that  bidders  $1$ and $2$ are allocated $e_1$ and $e_2$.
% \footnote{{This is without loss of generality as we can rename the bidders and the items.}}
We now define  additional valuations of players $1$ and $2$:
$$
% v_1^{e_1,small}(S)=\begin{cases}
% 1 \quad e_1\in S, \\
% 0 \quad \text{otherwise.}
% \end{cases} \\ \quad 
v_1^{e_1,big}(x)= \begin{cases}
3k^4 \quad x=e_1, \\
0 \quad \text{otherwise.}
\end{cases}
% \hspace{0.5em}
v_1^{e_2,big}(x)=\begin{cases}3k^4 \quad x=e_2, \\
0, \quad \text{otherwise.}
\end{cases} 
% \quad
v_1^{both}(x)=\begin{cases}
2k^2+2 \quad x=e_1, \\
2k^2 \quad x=e_2, \\
0, \quad \text{otherwise.}
\end{cases} 
$$ 
$$
% v_1^{e_1,small}(S)=\begin{cases}
% 1 \quad e_1\in S, \\
% 0 \quad \text{otherwise.}
% \end{cases} \\ \quad 
v_2^{e_2,big}(x)= \begin{cases}
3k^4 \quad x=e_2, \\
0 \quad \text{otherwise.}
\end{cases}
% \hspace{0.5em}
v_2^{e_1,big}(x)=\begin{cases}3k^4 \quad x=e_1, \\
0, \quad \text{otherwise.}
\end{cases} 
% \quad
v_2^{both}(x)=\begin{cases}
2k^2+2 \quad x=e_2, \\
2k^2 \quad x=e_1, \\
0, \quad \text{otherwise.}
\end{cases} 
$$ 
% we define three valuations that will be of particular interest:  
% $$
% v_i^{all}(s)=\begin{cases}
% K^4, \quad s=m, \\
% 0, \quad \text{otherwise.}
% \end{cases} \\, \quad 
% v_i^{one}(s)= \begin{cases}
% 1, \quad s\ge 1, \\
% 0, \quad \text{otherwise.}
% \end{cases}
% , \quad
% v_i^{ONE}(s)=\begin{cases}
% K^2+1, \quad s\ge 1, \\
% 0, \quad \text{otherwise.}
% \end{cases} 
% $$
where $k=\max\{m,n\}$. We define the following subsets of the domains of the valuations:
$$
\mathcal V_1=\{v_1^{e_1,one},v_1^{e_1,big},v_1^{e_2,big},v_1^{both}\}, \quad
\mathcal V_2=\{v_2^{e_2,one},v_2^{e_2,big},v_2^{e_1,big},v_2^{both}\}
$$
% The reason for it is that the optimal welfare given $(v_1^{one},\ldots,v_n^{one})$  is $\min\{m,n\}$. Since $f$ guarantees strictly more than $\max\{\frac{1}{m},\frac{1}{n}\}$ of the welfare, the welfare has to be strictly larger than $1$. 
and for every player $i\ge 3$, we set  $\mathcal{V}_i=\{v_i^{e_i,one}\}$.    
%$\mathcal V_1=\{v_1^{one},v_1^{ONE},v_1^{all}\}$ and $\mathcal V_2=\{v_2^{one},v_2^{ONE},v_2^{all}\}$. For every player $i\ge 3$, we define $\mathcal V_i=\{v_i^{one}\}$.
% We now analyze the mechanism given that the valuations of the players belong to $\mathcal V=\mathcal V_1\times  \cdots \times \mathcal V_n$.  
We will use the following lemma to show that at least one player $i$ at some point in the mechanism  has to send different messages for different valuations in $\mathcal V_i$. Formally: 
\begin{lemma}\label{lemma-not-same-leaf-unit}
There exist $(v_1,\ldots,v_n),(v_1',\ldots,v_n')\in \mathcal V_1\times\cdots\times \mathcal V_n$ such that the behaviors 
$(\mathcal S_1(v_1),\linebreak \ldots,\mathcal S_n(v_n))$ and $(\mathcal S_1(v_1'),\ldots,\mathcal S_n(v_n'))$ 
do not reach the same leaf in the mechanism. 
\end{lemma}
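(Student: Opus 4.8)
The plan is to replay, essentially verbatim, the argument used in the proof of Lemma \ref{lemma-not-same-leaf-add}. The key observation making this legitimate is that every valuation appearing in the argument — $v_1^{e_1,big}$, $v_2^{e_1,big}$, and the $v_i^{e_i,one}$'s — has at most one valuable item, hence is simultaneously additive and unit-demand, so the welfare computations carry over without change. Concretely, I would fix the two valuation profiles $v=(v_1^{e_1,big},v_2^{e_2,one},v_3^{e_3,one},\ldots,v_n^{e_n,one})$ and $v'=(v_1^{e_1,one},v_2^{e_1,big},v_3^{e_3,one},\ldots,v_n^{e_n,one})$, and first check that both lie in $\mathcal V_1\times\cdots\times\mathcal V_n$: indeed $v_1^{e_1,big},v_1^{e_1,one}\in\mathcal V_1$, $v_2^{e_2,one},v_2^{e_1,big}\in\mathcal V_2$, and $v_i^{e_i,one}\in\mathcal V_i$ for every $i\ge 3$.

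Next I would invoke the approximation guarantee. Under the profile $v$, the allocation that gives $e_1$ to bidder $1$ has social welfare at least $3k^4$. On the other hand, in any allocation where bidder $1$ does \emph{not} receive $e_1$, her unit-demand value is $0$, and every other bidder contributes at most $1$, so the welfare is at most $n-1<k$. Since $3k^4/k=3k^3>k\ge\min\{m,n\}$, a mechanism whose approximation ratio is strictly better than $\min\{m,n\}$ is forced to allocate $e_1$ to bidder $1$ under $v$. The identical computation applied to $v'$ — where bidder $2$ now has value $3k^4$ for $e_1$ and everyone else has value at most $1$ — shows that $f$ must allocate $e_1$ to bidder $2$ under $v'$.

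Finally, since each leaf of the protocol is labelled with a single allocation, any two behavior profiles that reach the same leaf necessarily induce the same allocation; as $f(v)$ assigns $e_1$ to bidder $1$ while $f(v')$ assigns $e_1$ to bidder $2$, the allocations differ, and therefore the behavior profiles $(\mathcal S_1(v_1),\ldots,\mathcal S_n(v_n))$ and $(\mathcal S_1(v_1'),\ldots,\mathcal S_n(v_n'))$ reach different leaves, which is exactly the statement of the lemma. I do not expect any real obstacle here; the only point requiring a word of care is noting that, because these particular valuations are genuinely unit-demand with a single valuable item, "bidder $1$ does not win $e_1$" really does force her value to be $0$ — a fact that is immediate from the definition of unit-demand valuations.
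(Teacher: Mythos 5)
Your proposal is correct and matches the paper's proof: the paper proves Lemma \ref{lemma-not-same-leaf-unit} by declaring it identical to the proof of Lemma \ref{lemma-not-same-leaf-add}, which uses exactly your two profiles $v=(v_1^{e_1,big},v_2^{e_2,one},\ldots,v_n^{e_n,one})$ and $v'=(v_1^{e_1,one},v_2^{e_1,big},v_3^{e_3,one},\ldots,v_n^{e_n,one})$ and the approximation guarantee to force $e_1$ to bidder $1$ under $v$ and to bidder $2$ under $v'$. You merely spell out the welfare computation that the paper leaves implicit, which is fine.
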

The proof of Lemma \ref{lemma-not-same-leaf-unit} is identical to the proof of Lemma \ref{lemma-not-same-leaf-add}. We now use it. Let $u$ be the first vertex in the protocol such that there exist $(v_1,\ldots,v_n),(v_1',\ldots,v_n')\in \mathcal V$ where the behavior profiles $(\mathcal{S}_1(v_1),\ldots,\mathcal{S}_n(v_n))$ and $(\mathcal{S}_1(v_1'),\ldots,\mathcal{S}_n(v_n'))$ diverge. Note that $u\in Path(\mathcal{S}_1(v_1),\ldots,\mathcal{S}_n(v_n))\cap Path(\mathcal{S}_1(v_1'),\ldots,\mathcal{S}_n(v_n'))$.
We remind that each vertex is associated with only one player that sends messages in it. The player that is associated with vertex $u$ is necessarily either player $1$ or player $2$ because every  player  $i\ge 3$    has  only one valuation in $\mathcal V_i$, so the set  $\{\mathcal S_i(v_i)\}_{v_i\in \mathcal V_i}$ is a singleton.  
% contains only one behavior for those players.
We assume without loss of generality that vertex $u$ is associated with player $1$, meaning that    there exist $v_1,v_1'\in \mathcal{V}_1$ such that $\mathcal S_1(v_1)$ and $\mathcal S_1(v_1')$ dictate different messages at vertex $u$.  

Lemmas \ref{lemma-case-1-unit} and \ref{lemma-case-2-unit} examine two complementary cases, demonstrating that in each scenario, the strategy $\mathcal S_1$
% Lemma \ref{lemma-case-1} and Lemma \ref{lemma-case-2}  analyze two complementary cases.  and show that in each of these cases, the strategy $\mathcal S_1$ 
dictates the same message at vertex $u$ for all the valuations in $\mathcal V_1$. This leads to a contradiction, completing the proof.
\begin{lemma}\label{lemma-case-1-unit}
If $f(v_1^{both},v_2^{e_1,big},v_3^{e_3,one},\ldots,v_n^{e_n,one})$ outputs an allocation where bidder $1$ wins a bundle that contains $e_2$, 
% the mechanism allocates a bundle that contains item $e_2$ to bidder $1$ given the valuation profile $f(v_1^{both},v_2^{e_1,big},v_3^{e_3,big},\ldots,v_n^{e_n,big})$, 
then the strategy $\mathcal S_1$ dictates the same message at vertex $u$ for all the valuations in $\mathcal V_1$.   
\end{lemma}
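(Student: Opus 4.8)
The plan is to reproduce the proof of Lemma~\ref{lemma-case-1} almost verbatim, the only genuinely new work being to re-establish the unit-demand counterpart of Lemma~\ref{lemma-small-pay-add}. The key observation is that among the valuations used in the analysis, only $v_1^{both}$ and $v_2^{both}$ fail to be single-item valuations; every other valuation assigns a positive value to exactly one item and is therefore both additive and unit-demand. Hence the parts of Lemma~\ref{lemma-small-pay-add} that refer only to such valuations --- parts~\ref{item-1-add}, \ref{item-2-add}, \ref{item-3-add} and \ref{item-5-add} --- carry over word for word. For the parts that involve $v_1^{both}$ I would argue directly from the mechanism's properties. For the analog of part~\ref{item-4-add}, combining the truthfulness inequality implied by obvious strategy-proofness for bidder~$1$ between the profiles $(v_1^{e_1,one},v_2^{e_2,one},\ldots,v_n^{e_n,one})$ and $(v_1^{both},v_2^{e_2,one},\ldots,v_n^{e_n,one})$ with the analog of part~\ref{item-1-add} gives $v_1^{both}(f(v_1^{both},v_2^{e_2,one},\ldots))-P_1(v_1^{both},v_2^{e_2,one},\ldots)\ge 2k^2+1$; since payments are nonnegative and, for the unit-demand valuation $v_1^{both}$, any bundle worth more than $2k^2$ must contain $e_1$, bidder~$1$ wins a bundle containing $e_1$, and individual rationality then caps his payment at $v_1^{both}(\{e_1\})=2k^2+2$. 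For the analog of part~\ref{item-new-add} (bidder~$1$ does not win $e_1$ under $(v_1^{both},v_2^{e_1,big},v_3^{e_3,one},\ldots,v_n^{e_n,one})$) I would use the approximation guarantee alone: assigning $e_1$ to bidder~$1$ caps the welfare at $2k^2+n$, whereas assigning $e_1$ to bidder~$2$ attains at least $3k^4$, which, since $k=\max\{m,n\}\ge 2$, exceeds $\min\{m,n\}$ times $2k^2+n$. The analog of part~\ref{item-6-add} then follows, because if bidder~$1$ wins $e_2$ in that profile he does not win $e_1$, so his unit-demand value there equals $2k^2$ and individual rationality bounds his payment by $2k^2$.

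With this unit-demand version of Lemma~\ref{lemma-small-pay-add} available, I would re-run the three claims behind Lemma~\ref{lemma-case-1}. The analog of Claim~\ref{claim-e1-same-add} is unchanged: the $v_1^{e_1,big}$-utility at the leaf reached under $(v_1^{e_1,one},v_2^{e_2,one},\ldots)$ is at least $3k^4-1$, at the leaf reached under $(v_1^{e_1,big},v_2^{e_1,big},v_3^{e_3,one},\ldots)$ it is at most $3k^4-2k^3-k^2$, and $2k^3+k^2>1$, so Lemma~\ref{lemma-bad-leaf-good-leaf} forces $\mathcal S_1$ to prescribe the same message at $u$ for $v_1^{e_1,one}$ and $v_1^{e_1,big}$. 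For the analog of Claim~\ref{claim-sameM-both-large1}, the unit-demand version of part~\ref{item-4-add} gives that under $(v_1^{both},v_2^{e_2,one},\ldots)$ bidder~$1$ wins a bundle containing $e_1$ and pays at most $2k^2+2$, so the $v_1^{e_1,big}$-utility there is at least $3k^4-2k^2-2$, which exceeds $3k^4-2k^3-k^2$ since $2k^3+k^2>2k^2+2$ for $k\ge 2$; Lemma~\ref{lemma-bad-leaf-good-leaf} then yields that $\mathcal S_1$ prescribes the same message at $u$ for $v_1^{both}$ and $v_1^{e_1,big}$. Finally, for the analog of Claim~\ref{claim-sameM-both-large2}, the hypothesis of Lemma~\ref{lemma-case-1-unit} says bidder~$1$ wins a bundle containing $e_2$ under $(v_1^{both},v_2^{e_1,big},v_3^{e_3,one},\ldots)$, and by the analog of part~\ref{item-new-add} that bundle omits $e_1$; hence its $v_1^{e_2,big}$-value is $3k^4$ and the payment is at most $2k^2$, so the $v_1^{e_2,big}$-utility is at least $3k^4-2k^2$, again exceeding the bound $3k^4-2k^3-k^2$ from part~\ref{item-3-add}, and Lemma~\ref{lemma-bad-leaf-good-leaf} gives that $\mathcal S_1$ prescribes the same message at $u$ for $v_1^{both}$ and $v_1^{e_2,big}$.

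Combining the three claims, $\mathcal S_1$ prescribes one and the same message at $u$ for all of $\mathcal V_1=\{v_1^{e_1,one},v_1^{e_1,big},v_1^{e_2,big},v_1^{both}\}$; since every message sent strictly before $u$ already coincides across all profiles in $\mathcal V$ (by the choice of $u$ and the fact that each player $i\ge 3$ has a singleton $\mathcal V_i$), this contradicts the choice of $u$ as a vertex at which two profiles in $\mathcal V$ diverge, which is exactly what Lemma~\ref{lemma-case-1-unit} asserts. I expect the only real obstacle to be the bookkeeping in the first paragraph: one must check that replacing the additive $v_1^{both}$ (whose value on $\{e_1,e_2\}$ is $4k^2+2$) by the unit-demand $v_1^{both}$ (value $2k^2+2$ on that bundle) does not weaken any bound relied upon later. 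As sketched, the only affected statements are the $v_1^{both}$-items of Lemma~\ref{lemma-small-pay-add}, and there the change only tightens the individual-rationality bounds on payments or is absorbed by re-deriving the item from truthfulness and the approximation guarantee; every numerical comparison used in the claims survives for $k=\max\{m,n\}\ge 2$.
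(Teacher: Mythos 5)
Your proposal is correct and follows essentially the same route as the paper: it mirrors the additive-case argument by establishing the unit-demand analogue of Lemma \ref{lemma-small-pay-add} (the paper's Lemma \ref{lemma-small-pay-unit}) and then re-running the three claims via Lemma \ref{lemma-bad-leaf-good-leaf}. The only deviation is your weaker payment bound of $2k^2+2$ (instead of the paper's $1$) for the profile $(v_1^{both},v_2^{e_2,one},\ldots,v_n^{e_n,one})$, which, as you verify, still preserves every numerical comparison for $k\ge 2$.
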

\begin{lemma}\label{lemma-case-2-unit}
    If $f(v_1^{both},v_2^{e_1,big},v_3^{e_3,one},\ldots,v_n^{e_n,one})$
    outputs an allocation where bidder $1$ wins a bundle not containing 
$e_2$, 
    % to bidder $1$ given the valuation profile $(v_1^{both},v_2^{e_1,big},v_3^{e_3,big},\ldots,v_n^{e_n,big})$, 
    then the strategy $\mathcal S_1$ dictates the same message at vertex $u$ for all the valuations in $\mathcal V_1$.   
\end{lemma}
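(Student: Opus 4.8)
The plan is to mirror the structure of the proof of Lemma \ref{lemma-case-2} (the additive case) essentially verbatim, since all the valuations used in that argument that matter for the relevant instances are simultaneously additive and unit-demand. First I would establish the unit-demand analogue of Lemma \ref{lemma-small-pay-add}: namely, that for the instances $(v_1^{e_1,one},v_2^{e_2,one},\ldots)$, $(v_1^{both},v_2^{e_2,one},\ldots)$, $(v_1^{e_2,big},v_2^{e_2,one},\ldots)$, and $(v_1^{both},v_2^{e_1,big},v_3^{e_3,one},\ldots)$, bidder $1$'s allocation and payment satisfy the same bounds as in parts \ref{item-1-add}, \ref{item-5-add}, \ref{item-new-add}, and \ref{item-7-add} of that lemma. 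These follow from individual rationality, no-negative-transfers, the approximation guarantee, and (Proposition \ref{prop-obs}) the fact that an OSP mechanism is dominant-strategy, exactly as in the additive proof; the only point to check is that the welfare comparisons that forced the allocation bounds still go through when the relevant bidders have unit-demand rather than additive valuations, which they do because in all these instances each of bidders $1,2$ has at most one distinguished valuable item in play (the ``both'' valuation assigns positive value to two items but only one unit is ever relevant since $v_1^{both}$ is unit-demand).

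Next, assuming $f(v_1^{both},v_2^{e_1,big},v_3^{e_3,one},\ldots,v_n^{e_n,one})$ gives bidder $1$ a bundle not containing $e_2$, I would prove the unit-demand analogue of Claim \ref{claim-sameM-both-small-e1-first}: from part \ref{item-1-add} of the unit-demand allocation lemma, bidder $1$ with valuation $v_1^{both}$ gets utility at least $2k^2+1$ at the leaf reached under $(v_1^{e_1,one},v_2^{e_2,one},\ldots)$; whereas under $(v_1^{both},v_2^{e_1,big},v_3^{e_3,one},\ldots)$ bidder $1$ wins neither $e_1$ (by the analogue of part \ref{item-new-add}) nor $e_2$ (by hypothesis), so $v_1^{both}$ values that bundle at $0$ and, by the analogue of part \ref{item-7-add}, pays $0$, giving utility $0$. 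Since vertex $u$ lies on both paths, Lemma \ref{lemma-bad-leaf-good-leaf} forces $\mathcal S_1$ to send the same message at $u$ for $v_1^{both}$ and $v_1^{e_1,one}$. Then I would prove the analogue of Claim \ref{claim-sameM-both-large-e2}: the utility of $v_1^{both}$ at the $(v_1^{both},v_2^{e_1,big},\ldots)$ leaf is still $0$, while at the $(v_1^{e_2,big},v_2^{e_2,one},\ldots)$ leaf bidder $1$ wins $e_2$ and pays at most $k^2$ (analogue of part \ref{item-5-add}), so the utility there is at least $2k^2-k^2=k^2>0$; since $u$ lies on both paths, Lemma \ref{lemma-bad-leaf-good-leaf} forces the same message for $v_1^{both}$ and $v_1^{e_2,big}$.

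Finally I would combine these two claims with Claim \ref{claim-e1-same-add} (whose proof is identical in the unit-demand setting, relying only on the analogues of parts \ref{item-1-add} and \ref{item-2-add}), which together cover all of $\mathcal V_1=\{v_1^{e_1,one},v_1^{e_1,big},v_1^{e_2,big},v_1^{both}\}$: $\mathcal S_1$ sends the same message at $u$ for $v_1^{e_1,one}$ and $v_1^{e_1,big}$, the same message for $v_1^{both}$ and $v_1^{e_1,one}$, and the same message for $v_1^{both}$ and $v_1^{e_2,big}$, hence the same message for all four, which is the conclusion of the lemma and, combined with the choice of $u$, yields the contradiction completing the proof of Theorem \ref{thm-lb-unit}. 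The main obstacle, such as it is, is verifying the unit-demand allocation lemma: one must confirm that replacing additive valuations by unit-demand ones does not change any of the welfare-optimality deductions about which bidder gets which item, and does not change the payment bounds extracted from dominant-strategy incentive constraints — but since the only valuation with two positive-valued items is the unit-demand $v_1^{both}$ (for which additivity was never actually used in the additive proof, only the per-item values $2k^2+2$ and $2k^2$), this is routine.
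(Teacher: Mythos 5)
Your proposal is correct and follows essentially the same route as the paper: the paper's proof of this lemma likewise establishes the unit-demand analogues of the allocation/payment observations (Lemma \ref{lemma-small-pay-unit}), then proves the two claims pairing $v_1^{both}$ with $v_1^{e_1,one}$ and with $v_1^{e_2,big}$ via Lemma \ref{lemma-bad-leaf-good-leaf} using exactly the utility bounds $2k^2+1$ versus $0$ and $k^2$ versus $0$, and combines them with the claim that $v_1^{e_1,one}$ and $v_1^{e_1,big}$ send the same message. Your observation that only $v_1^{both}$ differs between the additive and unit-demand settings, and only in instances irrelevant to this case, matches the paper's own remark that the two proofs coincide up to this point.
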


It remains to prove Lemma \ref{lemma-case-1-unit} and Lemma \ref{lemma-case-2-unit}. Before we do so, we need two components. The first one is the following claim:
\begin{claim}\label{claim-e1-same-unit}
    The strategy $\mathcal S_1$ dictates the same message at vertex $u$ for both $v_1^{e_1,one}$ and $v_1^{e_1,big}$.
\end{claim}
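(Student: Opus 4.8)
The plan is to mimic, almost verbatim, the proof of Claim~\ref{claim-e1-same-add} from the additive case, since the valuations $v_1^{e_1,one}$ and $v_1^{e_1,big}$ are single-item valuations that behave identically whether interpreted as additive or unit-demand. First I would establish a unit-demand analogue of Lemma~\ref{lemma-small-pay-add} (at least the parts needed here, namely items \ref{item-1-add} and \ref{item-2-add}): given $(v_1^{e_1,one},v_2^{e_2,one},\ldots,v_n^{e_n,one})$, individual rationality, no-negative-transfers, and the assumption that bidder $1$ wins $e_1$ together with the approximation guarantee force bidder $1$ to win a bundle containing $e_1$ at payment at most $1$; and given $(v_1^{e_1,big},v_2^{e_1,big},v_3^{e_3,one},\ldots,v_n^{e_n,one})$, the approximation guarantee forces $e_1$ to go to one of bidders $1,2$, and if it goes to bidder $1$ then (by the dominant-strategy inequality applied to bidder $2$ deviating to $v_2^{e_1,big}$, using that bidder $2$ would otherwise be able to report $v_2^{e_2,one}$ and the payment bounds) the payment of bidder $1$ must be large, at least $2k^3+k^2$, while if $e_1$ goes to bidder $2$ then bidder $1$ wins no bundle containing $e_1$ and by IR/NNT and a deviation argument pays $0$.

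Next I would run the obvious-dominance argument. From part \ref{item-1-add}-analogue,
\[
v_1^{e_1,big}\bigl(f(v_1^{e_1,one},v_2^{e_2,one},\ldots,v_n^{e_n,one})\bigr)-P_1(v_1^{e_1,one},v_2^{e_2,one},\ldots,v_n^{e_n,one})\ge 3k^4-1,
\]
and from part \ref{item-2-add}-analogue,
\[
v_1^{e_1,big}\bigl(f(v_1^{e_1,big},v_2^{e_1,big},v_3^{e_3,one},\ldots,v_n^{e_n,one})\bigr)-P_1(v_1^{e_1,big},v_2^{e_1,big},v_3^{e_3,one},\ldots,v_n^{e_n,one})\le 3k^4-2k^3-k^2.
\]
Since $3k^4-1 > 3k^4-2k^3-k^2$ for $k\ge 2$, the first utility strictly exceeds the second. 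Both valuation profiles have paths through $u$ by the choice of $u$ (it is the first divergence point of behaviors in $\mathcal V$, hence lies on both relevant paths), so Lemma~\ref{lemma-bad-leaf-good-leaf} applied to bidder $1$, vertex $u$, and these two profiles yields that $\mathcal S_1$ dictates the same message at $u$ for $v_1^{e_1,one}$ and $v_1^{e_1,big}$.

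The only genuinely new content over the additive proof is verifying that the unit-demand interpretation does not break the auxiliary lemma; but since every valuation appearing in this particular argument ($v_1^{e_1,one}$, $v_1^{e_1,big}$, $v_2^{e_2,one}$, $v_2^{e_1,big}$, and the $v_i^{e_i,one}$ for $i\ge 3$) is supported on a single item, additive and unit-demand coincide on them, and the optimal-welfare computations underlying the approximation bounds are unchanged. So I expect the main (minor) obstacle to be just stating the unit-demand version of Lemma~\ref{lemma-small-pay-add} cleanly and noting that its proof is identical to the additive one, which the paper can defer to an appendix exactly as it does for Lemma~\ref{lemma-small-pay-add}. Everything else is a direct transcription.
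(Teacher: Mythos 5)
Your proposal is correct and takes exactly the paper's route: the paper proves Claim~\ref{claim-e1-same-unit} by noting it is identical to the proof of Claim~\ref{claim-e1-same-add}, using parts~\ref{item-1-unit} and~\ref{item-2-unit} of the unit-demand analogue (Lemma~\ref{lemma-small-pay-unit}, whose proof for these parts is literally the additive one since the relevant valuations are single-item) together with Lemma~\ref{lemma-bad-leaf-good-leaf} at vertex $u$, precisely as you describe. One small caveat that does not affect the claim itself: your parenthetical sketch of the payment lower bound in the part-\ref{item-2-unit} analogue is off --- the paper obtains $P_1\ge 2k^3+k^2$ by having \emph{bidder $1$} deviate to an auxiliary valuation $\hat v_1$ with value $2k^3+k^2$ on $e_1$ (for which the approximation guarantee forces $e_1$ to bidder $2$ and hence zero utility and zero payment), not by a dominant-strategy inequality for bidder $2$.
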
 
The proof of Claim \ref{claim-e1-same-unit} is identical to the proof of Claim \ref{claim-e1-same-add}. 
The second component comprises the following 
straightforward observations regarding the bundle that bidder $1$ wins and his payment: 
\begin{lemma}\label{lemma-small-pay-unit}
    The allocation rule $f$ and the payment scheme $P_1$ of bidder $1$ satisfy that:
    % Let $f$ be an allocation rule and let $P_1$ be the payment scheme of bidder $1$ that 
    % The allocation rule $f$ and the  payment scheme $P_1$ $(f,P_1,\ldots,P_n):V_1\times \cdots \times V_n\to \mathbb{R}^{n}$
    % are realized by a dominant-strategy, individually rational and no-negative-transfers mechanism. Then: 
    \begin{enumerate}
        \item Given $(v_1^{e_1,one},v_2^{e_2,one},\ldots,v_{n}^{e_n,one})$, bidder $1$ wins a bundle that contains $e_1$ and pays at most $1$.  \label{item-1-unit}
        \item  Given  \label{item-2-unit}$(v_1^{e_1,big},v_2^{e_1,big},v_3^{e_3,one},\ldots,v_n^{e_n,one})$, bidder $1$ either:
        \begin{enumerate*}[label=(\alph*)]
    \item gets a bundle not containing  $e_1$ and pays $0$ \emph{or}
    \item gets a bundle that contains $e_1$ and pays at least $2{k^3}+k^2$.
    \end{enumerate*}
        \item  Given  \label{item-3-unit}$(v_1^{e_2,big},v_2^{e_2,big},v_3^{e_3,one},\ldots,v_n^{e_n,one})$, bidder $1$ either:
        \begin{enumerate*}[label=(\alph*)]
    \item gets a bundle not containing $e_2$ and pays $0$ \emph{or}
    \item gets a bundle that contains $e_2$ and pays at least $2{k^3}+k^2$.
    \end{enumerate*}
    \item Given  \label{item-4-unit}$(v_1^{both},v_2^{e_2,one},\ldots,v_n^{e_n,one})$,  bidder $1$ wins a bundle that contains $e_1$ and pays at most $1$.
    \item Given $(v_1^{e_2,big},v_2^{e_2,one},\ldots,v_n^{e_n,one})$, bidder $1$ wins $e_2$ and pays at most $k^2$.  \label{item-5-unit}
    \item Given $(v_1^{both},v_2^{e_1,big},v_3^{e_3,one},\ldots,v_n^{e_n,one})$, bidder $1$ does not win $e_1$. \label{item-new-unit}
    \item If  $f(v_1^{both},v_2^{e_1,big},v_3^{e_3,one},\ldots,v_n^{e_n,one})$
    outputs an allocation where bidder $1$ wins item $e_2$, then he pays at most $2k^2$. \label{item-6-unit}
    \item If  $f(v_1^{both},v_2^{e_1,big},v_3^{e_3,one},\ldots,v_n^{e_n,one})$ \label{item-7-unit}
    outputs an allocation where bidder $1$ wins a bundle  not containing $e_2$, then he pays zero.
    \end{enumerate} 
\end{lemma}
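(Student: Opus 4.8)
The plan is to derive all eight parts from the standing properties of $\mathcal{M}$ — individual rationality (IR), no-negative-transfers (NNT), the strictly-better-than-$\min\{m,n\}$ approximation, and the dominant-strategy (truthfulness) inequality that any obviously strategy-proof mechanism satisfies (Proposition~\ref{prop-obs}) — together with the standing convention that bidder~$1$ wins $e_1$ in $(v_1^{e_1,one},\ldots,v_n^{e_n,one})$. Write $k=\max\{m,n\}\ge\min\{m,n\}$. Two crude bounds are used throughout. First, in every profile below, every bidder other than bidder~$1$ is either worth at most $1$ on each single item or holds a valuation concentrated on one item that she does not win in the allocations under discussion, so in those allocations the welfare contributed by bidders $2,\ldots,n$ is at most $m\le k$. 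Second, if some bidder values a single item at $V$ then the optimum is at least $V$, so the realized welfare must strictly exceed $V/\min\{m,n\}\ge V/k$; hence awarding that item to a bidder who values it by at most $c$ caps the welfare at $c+k$, a contradiction whenever $c+k<V/k$.

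The \emph{allocation} assertions follow from the second bound with $V=3k^4$: in part~\ref{item-5-unit} bidder~$1$ values $e_2$ at $3k^4$, so she must win it; in part~\ref{item-new-unit} bidder~$2$ values $e_1$ at $3k^4$ while bidder~$1$ values it at only $2k^2+2$, so bidder~$1$ cannot win $e_1$. Part~\ref{item-1-unit}'s allocation claim is the standing convention. For part~\ref{item-4-unit} I would instead apply truthfulness to bidder~$1$ between the reports $v_1^{both}$ and $v_1^{e_1,one}$ against the fixed opponents $(v_2^{e_2,one},v_3^{e_3,one},\ldots,v_n^{e_n,one})$: by part~\ref{item-1-unit} the report $v_1^{e_1,one}$ wins $e_1$ at price at most $1$, so under $v_1^{both}$ bidder~$1$'s utility is at least $(2k^2+2)-1>2k^2$, and because $v_1^{both}$ is unit-demand this is possible only if her winning bundle contains $e_1$.

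Given the allocations, every \emph{payment upper bound} is a one-line consequence of IR and NNT via $0\le P_1\le v_1(\text{winning bundle})$: this yields $P_1\le 1$ in parts~\ref{item-1-unit} and~\ref{item-4-unit} (in the latter, utility $\ge 2k^2+1$ together with bundle value $2k^2+2$ forces $P_1\le 1$), $P_1\le 2k^2$ in part~\ref{item-6-unit} (by part~\ref{item-new-unit} the winning bundle contains $e_2$ but not $e_1$, so $v_1^{both}$ values it $2k^2$), and $P_1=0$ in the losing branches of parts~\ref{item-2-unit}, \ref{item-3-unit} and~\ref{item-7-unit} (for part~\ref{item-7-unit} one first invokes part~\ref{item-new-unit} to exclude $e_1$ as well, so the winning bundle is worthless to bidder~$1$; then IR gives $P_1\le 0$ and NNT gives $P_1=0$). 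For the bound $P_1\le k^2$ of part~\ref{item-5-unit} I would add a threshold step: the second bound with $V=k^2$ shows that even the unit-demand valuation worth $k^2$ on $e_2$ wins $e_2$ against those opponents, so by truthfulness bidder~$1$ pays the same for $e_2$ whether she reports $v_1^{e_2,big}$ or that valuation, and the latter price is at most $k^2$ by IR.

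The one nontrivial ingredient — and the step I expect to be the main obstacle — is the \emph{payment lower bound} $P_1\ge 2k^3+k^2$ in the winning branches of parts~\ref{item-2-unit} and~\ref{item-3-unit}, which I would obtain by an exposure argument. For part~\ref{item-2-unit}: suppose bidder~$1$ wins a bundle containing $e_1$ in $(v_1^{e_1,big},v_2^{e_1,big},v_3^{e_3,one},\ldots,v_n^{e_n,one})$, and let $\widetilde v\in V_1$ be the unit-demand valuation worth $2k^3+k^2$ on $e_1$ and $0$ elsewhere. Against the same opponents $(v_2^{e_1,big},v_3^{e_3,one},\ldots,v_n^{e_n,one})$ the report $\widetilde v$ cannot win $e_1$: bidder~$2$'s value $3k^4$ for $e_1$ forces $\mathrm{OPT}\ge 3k^4$, whereas giving $e_1$ to bidder~$1$ would cap the welfare at $(2k^3+k^2)+k<3k^3\le\mathrm{OPT}/\min\{m,n\}$. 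Hence that report leaves bidder~$1$ with a worthless bundle, so by IR and NNT its price is $0$; applying the truthfulness inequality to a bidder of true type $\widetilde v$ who compares the report $\widetilde v$ with the report $v_1^{e_1,big}$ then gives $0\ge(2k^3+k^2)-P_1(v_1^{e_1,big},v_2^{e_1,big},v_3^{e_3,one},\ldots,v_n^{e_n,one})$, which is the claim. Part~\ref{item-3-unit} is verbatim the same with $e_2$ replacing $e_1$. Finally I would verify that every strict inequality used above already holds at $k=2$ (they all do, with room to spare), so the argument is valid for all $m,n\ge 2$.
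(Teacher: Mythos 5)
Your proposal is correct and follows essentially the same route as the paper: the allocation claims via the approximation guarantee, the payment upper bounds via individual rationality and no-negative-transfers, part~\ref{item-4-unit} via the dominant-strategy inequality against the report $v_1^{e_1,one}$, and the payment lower bound in parts~\ref{item-2-unit}--\ref{item-3-unit} via exactly the paper's auxiliary valuation worth $2k^3+k^2$ on the contested item (and the $k^2$-valuation threshold step for part~\ref{item-5-unit}), as carried out in the proof of Lemma~\ref{lemma-small-pay-add} to which the paper delegates those parts.
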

Lemma \ref{lemma-small-pay-unit} is a direct consequence of the properties of the mechanism. 
We defer the proof to Subsection \ref{subsec-lemma-proof}. We can now prove Lemma \ref{lemma-case-1-unit} and Lemma \ref{lemma-case-2-unit}.

\subsection{Proof of Lemma \ref{lemma-case-1-unit}}  \label{subsubsec-case1-unit}
The proof is as follows. We first show that the strategy $\mathcal S_1$ always dictates the same message for $v_1^{both}$ and   
$v_1^{e_1,big}$ (Claim \ref{claim-sameM-both-large1-unit}). We proceed by demonstrating that in the case under consideration in Lemma \ref{lemma-case-1-unit}, the strategy
$\mathcal S_1$ dictates the same message for the valuations $v_1^{both}$ and $v_1^{e_2,big}$ (Claim \ref{claim-sameM-both-large2-unit}).
We remind that $\mathcal V_1=\{v_1^{e_1,one},v_1^{e_1,big},v_1^{e_2,big},v_1^{both}\}$, so combining
Claims \ref{claim-e1-same-unit}, \ref{claim-sameM-both-large1-unit} and \ref{claim-sameM-both-large2-unit} gives that the strategy $\mathcal S_1$ dictates the same message for all the valuations in $\mathcal V_1$, as needed. 
% The proof is as follows. We first show that the strategy $\mathcal S_1$ dictates the same message for $v_1^{both}$ and   
% $v_1^{e_1,big}$ (Claim \ref{claim-sameM-both-large1-unit}). We then show that $\mathcal S_1$ dictates the same message for $v_1^{both}$ and $v_1^{e_2,big}$ (Claim \ref{claim-sameM-both-large2-unit}).
% We remind that $\mathcal V_1=\{v_1^{e_1,one},v_1^{e_1,big},v_1^{e_2,big},v_1^{both}\}$, so combining
% Claims \ref{claim-e1-same-unit}, \ref{claim-sameM-both-large1-unit} and \ref{claim-sameM-both-large2-unit} gives that the strategy $\mathcal S_1$ dictates the same message for all the valuations in $\mathcal V_1$, as needed. 
% The proof requires two steps. We first show that the strategy $\mathcal S_1$ dictates the same message for both $v_1^{both}$ and $v_1^{e_1,big}$ (Claim \ref{claim-sameM-both-large1-unit}). We then show that $\mathcal S_1$ dictates the same message for $v_1^{both}$ and $v_1^{e_2,big}$ (Claim \ref{claim-sameM-both-large2-unit}). Combining  Claims \ref{claim-sameM-both-large1-unit} and \ref{claim-sameM-both-large2-unit} with Claim \ref{claim-e1-same-unit} gives that $\mathcal{S}_1$ sends the same message for all the valuations in $\mathcal V_1$, so we get a contradiction, as needed. 

\begin{claim}\label{claim-sameM-both-large1-unit}
% If $f(v_1^{both},v_2^{e_1,big},v_3^{e_3,one},\ldots,v_n^{e_n,one})$ outputs an allocation where bidder $1$ wins a bundle that contains $e_2$, then
The strategy $\mathcal S_1$ dictates the same message at vertex $u$ for both $v_1^{both}$  and $v_1^{e_1,big}$.
\end{claim}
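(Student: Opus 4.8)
The plan is to follow the proof of Claim~\ref{claim-sameM-both-large1} from the additive setting essentially verbatim, replacing every appeal to Lemma~\ref{lemma-small-pay-add} by the corresponding part of its unit-demand counterpart Lemma~\ref{lemma-small-pay-unit}. Concretely, I will compare two leaves, both reachable through $u$, and show that from the point of view of a bidder whose valuation is $v_1^{e_1,big}$ one of them is strictly more profitable than the other; Lemma~\ref{lemma-bad-leaf-good-leaf} will then force $\mathcal S_1$ to send the same message at $u$ for $v_1^{e_1,big}$ and $v_1^{both}$.

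First I would use Lemma~\ref{lemma-small-pay-unit} part~\ref{item-4-unit}: at the profile $(v_1^{both},v_2^{e_2,one},\ldots,v_n^{e_n,one})$ bidder~$1$ wins a bundle containing $e_1$ and pays at most $1$, so, since $v_1^{e_1,big}$ is unit-demand with value $3k^4$ for $e_1$, her utility at the leaf reached there, measured by $v_1^{e_1,big}$, is at least $3k^4-1$. Next I would use Lemma~\ref{lemma-small-pay-unit} part~\ref{item-2-unit}: at $(v_1^{e_1,big},v_2^{e_1,big},v_3^{e_3,one},\ldots,v_n^{e_n,one})$ bidder~$1$ either receives a bundle avoiding $e_1$ and pays $0$ (utility $0$ under $v_1^{e_1,big}$) or receives a bundle containing $e_1$ and pays at least $2k^3+k^2$ (utility at most $3k^4-2k^3-k^2$ under $v_1^{e_1,big}$); in either case the $v_1^{e_1,big}$-utility there is at most $3k^4-2k^3-k^2$. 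Since $k=\max\{m,n\}\ge 2$ we have $3k^4-1>3k^4-2k^3-k^2$, so the $v_1^{e_1,big}$-utility at the first leaf strictly exceeds that at the second.

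It remains to apply Lemma~\ref{lemma-bad-leaf-good-leaf}. Because $u$ was chosen as the first vertex at which two behavior profiles indexed by tuples in $\mathcal V$ diverge, every such behavior profile passes through $u$; in particular $u$ lies on $Path(\mathcal S_1(v_1^{e_1,big}),\mathcal S_2(v_2^{e_1,big}),\mathcal S_3(v_3^{e_3,one}),\ldots,\mathcal S_n(v_n^{e_n,one}))$ and on $Path(\mathcal S_1(v_1^{both}),\mathcal S_2(v_2^{e_2,one}),\mathcal S_3(v_3^{e_3,one}),\ldots,\mathcal S_n(v_n^{e_n,one}))$. Invoking Lemma~\ref{lemma-bad-leaf-good-leaf} with player~$1$, vertex~$u$, the ``low-utility'' profile $(v_1^{e_1,big},v_2^{e_1,big},v_3^{e_3,one},\ldots,v_n^{e_n,one})$, and the ``high-utility'' profile $(v_1^{both},v_2^{e_2,one},v_3^{e_3,one},\ldots,v_n^{e_n,one})$ then gives that $\mathcal S_1$ dictates the same message for $v_1^{e_1,big}$ and $v_1^{both}$ at $u$, as claimed. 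I do not anticipate a real obstacle: all of the substance is already packed into Lemma~\ref{lemma-small-pay-unit}, and this claim is a mechanical instance of Lemma~\ref{lemma-bad-leaf-good-leaf} parallel to Claim~\ref{claim-sameM-both-large1}; the only point deserving a line of justification is that $u$ sits on both paths above, which is immediate from the way $u$ was selected.
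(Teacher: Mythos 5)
Your proposal is correct and follows essentially the same route as the paper's proof: it compares the same two profiles, bounds the $v_1^{e_1,big}$-utilities via Lemma~\ref{lemma-small-pay-unit} parts~\ref{item-4-unit} and~\ref{item-2-unit} (yielding $3k^4-1$ versus at most $3k^4-2k^3-k^2$), and concludes via Lemma~\ref{lemma-bad-leaf-good-leaf} using that $u$ lies on both paths by its choice as the first divergence vertex. No gaps.
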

\begin{proof}
% [Proof of Claim \ref{claim-sameM-both-large1-unit}]
Note that by Lemma \ref{lemma-small-pay-unit} part \ref{item-4-unit}, given $(v_1^{both},v_2^{e_2,one},v_3^{e_3,one},\ldots\allowbreak,v_n^{e_n,one})$, bidder $1$ wins a bundle that contains $e_1$ and pays at most $1$. Therefore:
\begin{equation}\label{eq-e1toboth}
    v_1^{e_1,big}(f(v_1^{both},v_2^{e_2,one},v_3^{e_3,one},\ldots,v_n^{e_n,one}))-P_1(v_1^{both},v_2^{e_2,one},v_3^{e_3,one},\ldots,v_n^{e_n,one})\ge 3k^4-1
\end{equation}
Whereas by Lemma \ref{lemma-small-pay-unit} part \ref{item-2-unit}:
\begin{equation}\label{eq-e1toboth2}
    v_1^{e_1,big}(f(v_1^{e_1,big},v_2^{e_1,big},v_3^{e_3,one},\ldots,v_n^{e_n,one}))-P_1(v_1^{e_1,big},v_2^{e_1,big},v_3^{e_3,one},\ldots,v_n^{e_n,one})\le 3k^4-2k^3-k^2
\end{equation}
Combining (\ref{eq-e1toboth}) and (\ref{eq-e1toboth2}) gives:
\begin{multline*}
    v_1^{e_1,big}(f(v_1^{e_1,big},v_2^{e_1,big},v_3^{e_3,one},\ldots,v_n^{e_n,one}))-P_1(v_1^{e_1,big},v_2^{e_1,big},v_3^{e_3,one},\ldots,v_n^{e_n,one})< \\ v_1^{e_1,big}(f(v_1^{both},v_2^{e_1,big},v_3^{e_3,one},\ldots,v_n^{e_n,one}))-P_1(v_1^{both},v_2^{e_1,big},v_3^{e_3,one},\ldots,v_n^{e_n,one})
\end{multline*}
We remind that vertex $u$ belongs in $Path(\mathcal S_1(v_1^{e_1,big}),\mathcal S_2(v_2^{e_1,big}),S_3(v_3^{e_3,one}),\ldots,\mathcal S_n(v_n^{e_n,one}))$ and also in
$Path(\mathcal{S}_1(v_1^{both}),\mathcal{S}_2(v_2^{e_2,one}), \mathcal{S}_3(v_3^{e_3,one}),\ldots,\mathcal{S}_n(v_n^{e_n,one}))$, so Lemma \ref{lemma-bad-leaf-good-leaf} gives that the strategy $\mathcal S_1$ dictates the same message for  $v_1^{e_1,big}$ and $v_1^{both}$ at vertex $u$.
\end{proof}
\begin{claim}\label{claim-sameM-both-large2-unit}
If $f(v_1^{both},v_2^{e_1,big},v_3^{e_3,one},\ldots,v_n^{e_n,one})$ outputs an allocation where bidder $1$ wins a bundle that contains $e_2$, 
 then   the strategy $\mathcal S_1$ dictates the same message at vertex $u$ for both $v_1^{both}$  and $v_1^{e_2,big}$.
      % The strategy $\mathcal S_1$ dictates the same message at vertex $u$ for both $v_1^{both}$  and $v_1^{e_2,big}$.
\end{claim}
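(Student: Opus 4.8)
The plan is to transcribe, almost verbatim, the argument of Claim \ref{claim-sameM-both-large2} from the additive setting into the unit-demand setting, using Lemma \ref{lemma-small-pay-unit} in place of Lemma \ref{lemma-small-pay-add}. Concretely, I will exhibit for bidder $1$, when her true valuation is $v_1^{e_2,big}$, a ``good'' leaf and a ``bad'' leaf, both reached through vertex $u$, with the good one strictly more profitable, and then apply Lemma \ref{lemma-bad-leaf-good-leaf}.

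First I would lower-bound bidder $1$'s utility (under valuation $v_1^{e_2,big}$) at the leaf reached by the profile $(v_1^{both},v_2^{e_1,big},v_3^{e_3,one},\ldots,v_n^{e_n,one})$. By the hypothesis of the claim bidder $1$ wins a bundle containing $e_2$ there, and by Lemma \ref{lemma-small-pay-unit} part \ref{item-6-unit} she pays at most $2k^2$. Since $v_1^{e_2,big}$ is unit-demand with the single valuable item $e_2$ of value $3k^4$, \emph{any} bundle containing $e_2$ is worth exactly $3k^4$ under $v_1^{e_2,big}$, so bidder $1$'s utility at this leaf is at least $3k^4-2k^2$. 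Next I would upper-bound bidder $1$'s utility, still under valuation $v_1^{e_2,big}$, at the leaf reached by $(v_1^{e_2,big},v_2^{e_2,big},v_3^{e_3,one},\ldots,v_n^{e_n,one})$: Lemma \ref{lemma-small-pay-unit} part \ref{item-3-unit} splits into two sub-cases --- either bidder $1$ wins a bundle without $e_2$ and pays $0$, giving utility $0$, or she wins a bundle with $e_2$ and pays at least $2k^3+k^2$, giving utility at most $3k^4-2k^3-k^2$. In both sub-cases the utility is at most $3k^4-2k^3-k^2$, which is strictly less than $3k^4-2k^2$ for $k\ge 2$ (equivalently $k^2<2k^3$).

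Finally I would note that both profiles above lie in $\mathcal V$, so by the choice of $u$ as the \emph{first} vertex at which profiles in $\mathcal V$ diverge, both profiles reach $u$; that is, $u\in Path(\mathcal S_1(v_1^{e_2,big}),\mathcal S_2(v_2^{e_2,big}),\mathcal S_3(v_3^{e_3,one}),\ldots,\mathcal S_n(v_n^{e_n,one}))\cap Path(\mathcal S_1(v_1^{both}),\mathcal S_2(v_2^{e_1,big}),\mathcal S_3(v_3^{e_3,one}),\ldots,\mathcal S_n(v_n^{e_n,one}))$. Applying Lemma \ref{lemma-bad-leaf-good-leaf} with $v_i=v_1^{e_2,big}$, the first profile in the role of $(v_i,v_{-i})$ and the (more profitable) second profile in the role of $(v_i',v_{-i}')$ yields that $\mathcal S_1$ dictates the same message at $u$ for $v_1^{e_2,big}$ and $v_1^{both}$, which is the claim.

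I do not expect any real obstacle here: the argument is a direct specialization of the additive case. The only points requiring a little care are (i) the unit-demand bookkeeping --- a bundle that merely \emph{contains} the unique desired item is worth the full value $3k^4$ under $v_1^{e_2,big}$, exactly as in the additive case where the desired item contributes $3k^4$ --- and (ii) checking the strict separation $3k^4-2k^3-k^2 < 3k^4-2k^2$, which holds since $k=\max\{m,n\}\ge 2$. Everything else (membership of the two profiles in $\mathcal V$, hence both passing through $u\in\mathcal N_1$) is inherited from the setup preceding the claim.
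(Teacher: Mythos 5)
Your proposal is correct and follows essentially the same argument as the paper: the same good leaf (from $(v_1^{both},v_2^{e_1,big},v_3^{e_3,one},\ldots,v_n^{e_n,one})$, bounded via Lemma \ref{lemma-small-pay-unit} part \ref{item-6-unit}), the same bad leaf (from $(v_1^{e_2,big},v_2^{e_2,big},v_3^{e_3,one},\ldots,v_n^{e_n,one})$, bounded via part \ref{item-3-unit}), and the same application of Lemma \ref{lemma-bad-leaf-good-leaf} with utilities measured under $v_1^{e_2,big}$. The only blemish is a labeling slip in the last step (the profile containing $v_1^{e_2,big}$ must play the role of $(v_i,v_{-i})$, i.e.\ the less profitable leaf, while the $(v_1^{both},\ldots)$ profile is $(v_i',v_{-i}')$), which does not affect the substance since your inequality $3k^4-2k^3-k^2<3k^4-2k^2$ points the right way.
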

\begin{proof}
% [Proof of Claim \ref{claim-sameM-both-large2-unit}]
    By Lemma \ref{lemma-small-pay-unit} part \ref{item-6-unit} and by assumption, given $(v_1^{both},v_2^{e_1,big},v_3^{e_3,one},\ldots,v_n^{e_n,one})$, bidder $1$  wins $e_2$ and pays at most $2k^2$, so:
    \begin{equation}\label{eq-case1-claim2-1-unit-first}
        v_1^{e_2,big}(f(v_1^{both},v_2^{e_1,big},v_3^{e_3,one},\ldots,v_n^{e_n,one}))-P_1(v_1^{both},v_2^{e_1,big},v_3^{e_3,one},\ldots,v_n^{e_n,one})\ge 3k^4-2k^2
    \end{equation}
    Whereas by Lemma \ref{lemma-small-pay-unit} part \ref{item-3-unit}:
     \begin{equation}\label{eq-case1-claim2-2-unit-first}
        v_1^{e_2,big}(f(v_1^{e_2,big},v_2^{e_2,big},v_3^{e_3,one},\ldots,v_n^{e_n,one}))-P_1(v_1^{e_2,big},v_2^{e_2,big},v_3^{e_3,one},\ldots,v_n^{e_n,one})\le 3k^4-2k^3-k^2
    \end{equation}
Combining (\ref{eq-case1-claim2-1-unit-first}) and (\ref{eq-case1-claim2-2-unit-first}) gives:
\begin{multline*}
    v_1^{e_2,big}(f(v_1^{e_2,big},v_2^{e_2,big},v_3^{e_3,one},\ldots,v_n^{e_n,one}))-P_1(v_1^{e_2,big},v_2^{e_2,big},v_3^{e_3,one},\ldots,v_n^{e_n,one})< \\
    v_1^{e_2,big}(f(v_1^{both},v_2^{e_1,big},v_3^{e_3,one},\ldots,v_n^{e_n,one}))-P_1(v_1^{both},v_2^{e_1,big},v_3^{e_3,one},\ldots,v_n^{e_n,one})
\end{multline*}
Similarly to before, vertex $u$ is in $Path(\mathcal S_1(v_1^{e_2,big}),\mathcal S_2(v_2^{e_2,big}),\mathcal{S}_3(v_3^{e_3,one}),\ldots,\mathcal{S}_n(v_n^{e_n,one}))$ and also in $Path(\mathcal S_1(v_1^{both}),\allowbreak \mathcal S_2(v_2^{e_1,big}),\mathcal{S}_3(v_3^{e_3,one}),\ldots,\mathcal{S}_n(v_n^{e_n,one}))$, so applying Lemma \ref{lemma-bad-leaf-good-leaf} finishes the proof. 
\end{proof}

\subsection{Proof of Lemma \ref{lemma-case-2-unit}} \label{subsubsec-case2-unit}
Similarly to the previous case, the proof for this case also requires two steps. Both steps rely on the assumption that bidder $1$ does not win $e_2$ given the valuation profile $(v_1^{both},v_2^{e_1,big},v_3^{e_3,one},\ldots,v_n^{e_n,one})$.
 We first show that the strategy $\mathcal S_1$ dictates the same message for $v_1^{both}$ and 
$v_1^{e_1,one}$ (Claim \ref{claim-sameM-both-small-e1-unit}). We proceed by demonstrating that $\mathcal S_1$ dictates the same message for both $v_1^{both}$ and $v_1^{e_2,big}$ (Claim \ref{claim-sameM-both-large-e2-unit}). Combining  Claims \ref{claim-sameM-both-small-e1-unit} and \ref{claim-sameM-both-large-e2-unit} with Claim \ref{claim-e1-same-unit} gives that $\mathcal{S}_1$ sends the same message for all the valuations in $\mathcal V_1$, that completes the proof.

\begin{claim}\label{claim-sameM-both-small-e1-unit}
          If $f(v_1^{both},v_2^{e_1,big},v_3^{e_3,one},\ldots,v_n^{e_n,one})$
    outputs an allocation where bidder $1$ wins a bundle not containing $e_2$,   
then the strategy $\mathcal S_1$ dictates the same message at vertex $u$ for both $v_1^{both}$  and $v_1^{e_1,one}$.
\end{claim}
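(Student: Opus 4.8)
The plan is to mirror the structure of Claim \ref{claim-sameM-both-small-e1-first} from the additive case, since the valuations $v_1^{e_1,one}$ and $v_1^{both}$ behave identically on the relevant bundles. First I would lower-bound bidder $1$'s utility at the leaf reached by $(\mathcal{S}_1(v_1^{e_1,one}),\mathcal{S}_2(v_2^{e_2,one}),\ldots,\mathcal{S}_n(v_n^{e_n,one}))$, evaluated with valuation $v_1^{both}$: by Lemma \ref{lemma-small-pay-unit} part \ref{item-1-unit} bidder $1$ wins a bundle containing $e_1$ and pays at most $1$, and since $v_1^{both}$ assigns value $2k^2+2$ to any bundle containing $e_1$, the utility is at least $2k^2+1$.

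Next I would show that the utility with $v_1^{both}$ at the leaf reached by $(\mathcal{S}_1(v_1^{both}),\mathcal{S}_2(v_2^{e_1,big}),\mathcal{S}_3(v_3^{e_3,one}),\ldots,\mathcal{S}_n(v_n^{e_n,one}))$ is exactly $0$. This is where the case hypothesis enters: by assumption bidder $1$ wins a bundle not containing $e_2$, and by Lemma \ref{lemma-small-pay-unit} part \ref{item-new-unit} bidder $1$ also does not win $e_1$ in this instance, so $v_1^{both}$ of the won bundle is $0$; combined with Lemma \ref{lemma-small-pay-unit} part \ref{item-7-unit}, which says the payment is $0$, the utility vanishes.

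Since $0 < 2k^2+1$, the utility with valuation $v_1^{both}$ is strictly larger at the first leaf than at the second, and vertex $u$ lies on both $Path(\mathcal{S}_1(v_1^{e_1,one}),\mathcal{S}_2(v_2^{e_2,one}),\ldots,\mathcal{S}_n(v_n^{e_n,one}))$ and $Path(\mathcal{S}_1(v_1^{both}),\mathcal{S}_2(v_2^{e_1,big}),\mathcal{S}_3(v_3^{e_3,one}),\ldots,\mathcal{S}_n(v_n^{e_n,one}))$. Applying Lemma \ref{lemma-bad-leaf-good-leaf} with player $1$, vertex $u$, and the two valuation profiles $(v_1^{both},v_2^{e_2,one},v_3^{e_3,one},\ldots,v_n^{e_n,one})$ and $(v_1^{both},v_2^{e_1,big},v_3^{e_3,one},\ldots,v_n^{e_n,one})$ — noting that $\mathcal{S}_1(v_1^{e_1,one})$ and $\mathcal{S}_1(v_1^{both})$ send the same message at $u$ would follow, but more directly we compare the message of $\mathcal{S}_1(v_1^{both})$ against $\mathcal{S}_1(v_1^{e_1,one})$ using the inequality just derived — yields that $\mathcal{S}_1$ dictates the same message at $u$ for $v_1^{e_1,one}$ and $v_1^{both}$. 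There is no real obstacle here beyond bookkeeping; the only point requiring care is invoking part \ref{item-new-unit} to rule out $e_1$, which is what makes the $v_1^{both}$-value of the bundle genuinely zero rather than merely bounding it.
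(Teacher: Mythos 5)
Your proof is correct and follows essentially the same route as the paper: lower-bound the $v_1^{both}$-utility at the leaf of $(\mathcal S_1(v_1^{e_1,one}),\mathcal S_2(v_2^{e_2,one}),\ldots,\mathcal S_n(v_n^{e_n,one}))$ via Lemma \ref{lemma-small-pay-unit} part \ref{item-1-unit}, show the utility is exactly $0$ at the leaf of $(\mathcal S_1(v_1^{both}),\mathcal S_2(v_2^{e_1,big}),\mathcal S_3(v_3^{e_3,one}),\ldots,\mathcal S_n(v_n^{e_n,one}))$ via the case hypothesis together with parts \ref{item-new-unit} and \ref{item-7-unit}, and then apply Lemma \ref{lemma-bad-leaf-good-leaf} at $u$. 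The only blemish is the final invocation, where the first profile should be $(v_1^{e_1,one},v_2^{e_2,one},\ldots,v_n^{e_n,one})$ (not one with $v_1^{both}$ in the first coordinate) so that the lemma yields the same message for $v_1^{both}$ and $v_1^{e_1,one}$; your earlier utility computations already use the correct leaf, so this is a labeling slip rather than a gap.
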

\begin{proof}
% [Proof of Claim \ref{claim-sameM-both-small-e1-unit}]
Note that by Lemma \ref{lemma-small-pay-unit} part \ref{item-1-unit}, given $(v_1^{e_1,one},v_2^{e_2,one},\ldots\allowbreak,v_n^{e_n,one})$, bidder $1$ wins a bundle that contains $e_1$ and pays at most $1$. Therefore:
\begin{equation}\label{eq-case2-e1to1-unit}
    v_1^{both}(f(v_1^{e_1,one},v_2^{e_2,one},\ldots\allowbreak,v_n^{e_n,one}))-P_1(v_1^{e_1,one},v_2^{e_2,one},\ldots\allowbreak,v_n^{e_n,one})\ge 2k^2+1
\end{equation}
Whereas by assumption and by Lemma \ref{lemma-small-pay-unit} parts \ref{item-new-unit} and \ref{item-7-unit}: 
\begin{equation}\label{eq-bothnothing-unit}
    v_1^{both}(f(v_1^{both},v_2^{e_1,big},v_3^{e_3,one},\ldots,v_n^{e_n,one}))-P_1(v_1^{both},v_2^{e_1,big},v_3^{e_3,one},\ldots,v_n^{e_n,one})=0
\end{equation}
Combining (\ref{eq-case2-e1to1-unit}) and (\ref{eq-bothnothing-unit}) gives:
\begin{multline*}
    v_1^{both}(f(v_1^{both},v_2^{e_1,big},v_3^{e_3,one},\ldots,v_n^{e_n,one}))-P_1(v_1^{both},v_2^{e_1,big},v_3^{e_3,one},\ldots,v_n^{e_n,one})< \\ v_1^{both}(f(v_1^{e_1,one},v_2^{e_2,one},\ldots\allowbreak,v_n^{e_n,one}))-P_1(v_1^{e_1,one},v_2^{e_2,one},\ldots\allowbreak,v_n^{e_n,one})
\end{multline*}
We remind that by assumption vertex $u$ belongs in $Path(\mathcal S_1(v_1^{e_1,one}),\mathcal S_2(v_2^{e_2,one}),,\ldots,\mathcal S_n(v_n^{e_n,one}))$ and also in
$Path(\mathcal{S}_1(v_1^{both}),\mathcal{S}_2(v_2^{e_1,big}), \mathcal{S}_3(v_3^{e_3,one}),\ldots,\mathcal{S}_n(v_n^{e_n,one}))$, so applying Lemma \ref{lemma-bad-leaf-good-leaf} completes the proof.
\end{proof}

\begin{claim}\label{claim-sameM-both-large-e2-unit}
      % The strategy $\mathcal S_1$ dictates the same message at vertex $u$ for both $v_1^{both}$  and $v_1^{e_2,big}$.
          If $f(v_1^{both},v_2^{e_1,big},v_3^{e_3,one},\ldots,v_n^{e_n,one})$
    outputs an allocation where bidder $1$  wins a bundle not containing $e_2$, 
then      the strategy $\mathcal S_1$ dictates the same message at vertex $u$ for both $v_1^{both}$  and $v_1^{e_2,big}$.
\end{claim}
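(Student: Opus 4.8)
The plan is to mirror the argument of Claim~\ref{claim-sameM-both-large2-unit} (equivalently, Claim~\ref{claim-sameM-both-large-e2} from the additive setting): show that for a bidder-$1$ agent holding the valuation $v_1^{both}$, the profile $(v_1^{e_2,big},v_2^{e_2,one},\ldots,v_n^{e_n,one})$ is strictly more profitable than the profile $(v_1^{both},v_2^{e_1,big},v_3^{e_3,one},\ldots,v_n^{e_n,one})$, and then apply Lemma~\ref{lemma-bad-leaf-good-leaf} at vertex $u$. First I would re-derive the utility in the ``bad'' profile exactly as in the proof of Claim~\ref{claim-sameM-both-small-e1-unit}: by the hypothesis of the present claim bidder $1$ wins a bundle not containing $e_2$, and by Lemma~\ref{lemma-small-pay-unit} part~\ref{item-new-unit} she also does not win $e_1$, so $v_1^{both}$ — being unit-demand with positive value only on $e_1$ and $e_2$ — evaluates her bundle to $0$; combined with Lemma~\ref{lemma-small-pay-unit} part~\ref{item-7-unit} (payment $0$ in this instance) this gives that her utility under $v_1^{both}$ in $(v_1^{both},v_2^{e_1,big},v_3^{e_3,one},\ldots,v_n^{e_n,one})$ is exactly $0$.

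Next I would lower-bound the utility in the ``good'' profile $(v_1^{e_2,big},v_2^{e_2,one},\ldots,v_n^{e_n,one})$. By Lemma~\ref{lemma-small-pay-unit} part~\ref{item-5-unit}, here bidder $1$ wins $e_2$ and pays at most $k^2$; since $v_1^{both}$ is unit-demand and assigns value $2k^2$ to $e_2$, her value for any bundle containing $e_2$ is at least $2k^2$, so her utility under $v_1^{both}$ is at least $2k^2-k^2=k^2>0$. Hence the good profile strictly dominates the bad one for an agent with valuation $v_1^{both}$. Finally, because $u$ was chosen as the first vertex at which two profiles in $\mathcal V$ diverge, every profile in $\mathcal V$ — in particular $(v_1^{both},v_2^{e_1,big},v_3^{e_3,one},\ldots,v_n^{e_n,one})$ and $(v_1^{e_2,big},v_2^{e_2,one},v_3^{e_3,one},\ldots,v_n^{e_n,one})$, both of which lie in $\mathcal V_1\times\cdots\times\mathcal V_n$ — has $u$ on its strategy path; applying Lemma~\ref{lemma-bad-leaf-good-leaf} with player $1$, vertex $u$, and valuation $v_1^{both}$ then yields that $\mathcal S_1$ dictates the same message at $u$ for $v_1^{both}$ and $v_1^{e_2,big}$, as required.

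I do not expect a genuine obstacle here, since this is a near-verbatim adaptation of the additive argument; the only point requiring a moment's care is that unit-demand valuations take a maximum over the items in a bundle, so one must confirm that in each of the two instances the bundle awarded to bidder $1$ pins down the value of $v_1^{both}$ precisely ($0$ in the bad instance because neither $e_1$ nor $e_2$ is won, and at least $2k^2$ in the good instance because $e_2$ is won) — which is exactly what parts~\ref{item-new-unit}, \ref{item-5-unit}, and~\ref{item-7-unit} of Lemma~\ref{lemma-small-pay-unit} supply.
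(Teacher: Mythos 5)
Your proposal is correct and follows essentially the same route as the paper: the same two profiles, the same parts of Lemma \ref{lemma-small-pay-unit} (parts \ref{item-new-unit} and \ref{item-7-unit} giving utility exactly $0$ in the bad instance, part \ref{item-5-unit} giving utility at least $2k^2-k^2=k^2$ in the good one), followed by Lemma \ref{lemma-bad-leaf-good-leaf} at vertex $u$. Your extra remark that the unit-demand (max) structure pins down $v_1^{both}$'s value in each instance is exactly the point the paper's argument relies on, so there is nothing to add.
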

\begin{proof}
% [Proof of Claim \ref{claim-sameM-both-large-e2-unit}]
    By Lemma \ref{lemma-small-pay-unit} parts \ref{item-new-unit} and  \ref{item-7-unit} and by assumption, given $(v_1^{both},v_2^{e_1,big},v_3^{e_3,one},\ldots,v_n^{e_n,one})$, bidder $1$ gets  nothing and pays nothing, so:
    \begin{equation}\label{eq-case1-claim2-1-unit}
        v_1^{both}(f(v_1^{both},v_2^{e_1,big},v_3^{e_3,one},\ldots,v_n^{e_n,one}))-P_1(v_1^{both},v_2^{e_1,big},v_3^{e_3,one},\ldots,v_n^{e_n,one})=0
    \end{equation}
    Whereas by Lemma \ref{lemma-small-pay-unit} part \ref{item-5-unit}:
     \begin{equation}\label{eq-case1-claim2-2-unit}
        v_1^{both}(f(v_1^{e_2,big},v_2^{e_2,one},\ldots,v_n^{e_n,one}))-P_1(v_1^{e_2,big},v_2^{e_2,one},\ldots,v_n^{e_n,one})\ge 2k^2-k^2=k^2
    \end{equation}
Combining (\ref{eq-case1-claim2-1-unit}) and (\ref{eq-case1-claim2-2-unit}) gives:
\begin{multline*}
v_1^{both}(f(v_1^{both},v_2^{e_1,big},v_3^{e_3,one},\ldots,v_n^{e_n,one}))-P_1(v_1^{both},v_2^{e_1,big},v_3^{e_3,one},\ldots,v_n^{e_n,one})< \\
    v_1^{both}(f(v_1^{e_2,big},v_2^{e_2,one},\ldots,v_n^{e_n,one}))-P_1(v_1^{e_2,big},v_2^{e_2,one},\ldots,v_n^{e_n,one})
\end{multline*}
Similarly to before, vertex $u$ is in $Path(\mathcal S_1(v_1^{both}),\mathcal S_2(v_2^{e_1,big}),\mathcal{S}_3(v_3^{e_3,one}),\ldots,\mathcal{S}_n(v_n^{e_n,one}))$ and also in $Path(\mathcal S_1(v_1^{e_2,big}),\allowbreak \mathcal S_2(v_2^{e_2,one}),\mathcal{S}_3(v_3^{e_3,one}),\ldots,\mathcal{S}_n(v_n^{e_n,one}))$, so applying Lemma \ref{lemma-bad-leaf-good-leaf} completes the proof. 
\end{proof}

\section{Missing Proofs} \label{appsec-missing-proofs}
\subsection{Proofs of Basic Observations about Mechanisms 
(Lemma \ref{lemma-small-pay}, \ref{lemma-small-pay-add} and \ref{lemma-small-pay-unit})} \label{subsec-small-pay-proof-alltogether}
The deduction of all components in the statements of all three lemmas is a direct consequence of the fact that the mechanisms in all of them are  obviously strategy-proof (and thus also dominant-strategy incentive compatible), satisfy individual rationality and no-negative-transfers and provide welfare approximation better than
$\min\{m,n\}$.
% The deduction of all parts in the statements of Lemma \ref{lemma-small-pay} and Lemma \ref{lemma-small-pay-add} is a direct corollary of the 
% facts that the function and payments are realized by a dominant-strategy mechanism (because it is obviously strategy-proof)  
% % an obviously strategy-proof mechanism, 
% that satisfies individual rationality and no-negative-transfers and gives approximation better than $\min\{m,n\}$ to the welfare. 
We write them  for the sake of completeness.
% For the sake of completeness, we  provide a comprehensive proof for each individual item in the statement of both of them. 

In all the following lemmas, we say for abbreviation that certain inequalities hold  because of individual rationality instead of saying that they hold because the allocation rule together with the payment scheme are realized by a mechanism and strategies that satisfy individual rationality. We do the same for the no-negative-transfers property.

% Last but not least, we remind that a mechanism that is individually rational and satisfies  no-negative-transfers is also normalized (Claim \ref{claim-ir-npt-norm}).  

\subsubsection{Proof of Lemma \ref{lemma-small-pay}} \label{subsec-small-pay-proof}
We slightly abuse notation throughout the proof by writing 
$f(v_1,v_2)$ for $f(v_1,v_2,v_3^{one},\ldots,v_n^{one})$ and $P_1(v_1,v_2)$ for $P_1(v_1,v_2,v_3^{one},\ldots,v_n^{one})$ for every pair of valuations $v_1\in V_1$ and $v_2\in V_2$.  
% The following  abbreviations will be  utilized in the proofs of the following lemmas. For Lemma \ref{}
% all of them, we have
% previously defined the set $\mathcal V$ in a way that gurantees that 
%  that every valuation profile $(v_1,\ldots,\allowbreak v_n)\allowbreak\in \mathcal V_1\times \cdots \times \mathcal V_n$ satisfies that for every $i\ge 3$, $\mathcal V_i$ is a singleton, so   the valuation $v_i$ is in fact $v_i^{one}$. Therefore, we write $f(v_1,v_2)$ for $f(v_1,v_2,v_3^{one},\ldots,v_n^{one})$ and $P_1(v_1,v_2)$ for $P_1(v_1,v_2,v_3^{one},\ldots,v_n^{one})$. 
% In addition, we say for abbreviation 
% By Claim \ref{claim-ir-npt-norm}, the mechanism $\mathcal{M}$ together with the strategies $(\mathcal S_1,\ldots,\mathcal S_n)$ are normalized (because by assumption they satisfy individual rationality and no-negative-transfers). We will use this fact extensively in the analysis. 
% Now we can start proving the items in Lemma \ref{lemma-small-pay}. 

For part \ref{item-1}, note that because of individual rationality:
\begin{equation}\label{eq-1}
 v_1^{one}(f(v_1^{one},v_2^{one}))-P_1(v_1^{one},v_2^{one})\ge 0   
\end{equation}
We remind that by assumption $f$ allocates to bidder $1$ at least one item given $(v_1^{one},v_2^{one})$, so:
\begin{equation}\label{eq-2}
 v_1^{one}(f(v_1^{one},v_2^{one}))=1   
\end{equation}
Combining (\ref{eq-1}) and (\ref{eq-2}) gives part \ref{item-1}.
% We remind that 
% $v_1^{one}(f(v_1^{one},v_2^{one}))-P_1(v_1^{one},v_2^{one})&\ge 0$ because of individual rationality. 
% and that  by assumption $v_1^{one}(f(v_1^{one},v_2^{one}))=1$ (becuase we) we get  
% \begin{equation}\label{eq-part1-sm}
% \begin{aligned}
% v_1^{one}(f(v_1^{one},v_2^{one}))-P_1(v_1^{one},v_2^{one})&\ge v_1^{one}(f(v_1^{zero},v_2^{one}))-P_1(v_1^{zero},v_2^{one})  \\ 
% &=v_1^{one}(f(v_1^{zero},v_2^{one}))        
% \end{aligned}
% \end{equation}
% \begin{equation}\label{eq-part1-sm}
% \begin{aligned}
% v_1^{one}(f(v_1^{one},v_2^{one}))-P_1(v_1^{one},v_2^{one})&\ge v_1^{one}(f(v_1^{zero},v_2^{one}))-P_1(v_1^{zero},v_2^{one})  \\ 
% &=v_1^{one}(f(v_1^{zero},v_2^{one}))        
% \end{aligned}
% \end{equation}
% where the first inequality holds because the social choice function $f$ and the payment $P_1$ are realized by a dominant-strategy mechanism, and the second one holds because the mechanism is normalized. We remind that $f$ allocates to bidder $1$ at least one item given $(v_1^{one},v_2^{one})$, so $v_1^{one}(f(v_1^{one},v_2^{one}))=1$. Combining this equality with (\ref{eq-part1-sm}) gives: 
% \begin{align*}
% 1 \ge 1-v_1^{one}(f(v_1^{zero},v_2^{one}))&=  
% v_1^{one}(f(v_1^{one},v_2^{one}))-v_1^{one}(f(v_1^{zero},v_2^{one})) \\  &\ge P_1(v_1^{one},v_2^{one})
% \end{align*}
% as needed. 

For part \ref{item-2}, note that given $(v_1^{ONE},v_2^{all})$, player $1$ gets the empty bundle because $f$ gives approximation better than $\min\{m,n\}$ only if player $2$ wins all the items. Because of individual rationality: 
$$v_1^{ONE}(f(v_1^{ONE},v_2^{all}))-P_1(v_1^{ONE},v_2^{all})\ge 0 $$
Since $v_1^{ONE}(f(v_1^{ONE},v_2^{all}))=0$, we get that $0\ge P_1(v_1^{ONE},v_2^{all})$, and because of the no-negative-transfers property, $P_1(v_1^{ONE},v_2^{all})=0$, as needed.  

To prove part \ref{item-3}, we define another valuation:
$$
\hat{v_1}(s)=\begin{cases}
    k^2 \quad s=m, \\
    0 \quad \text{otherwise.}
\end{cases}
$$
Given $(\hat{v_1},v_2^{one})$, player $1$ wins $m$ items because of the guaranteed approximation ratio of $f$. The inequality $\hat{v_1}(f(\hat{v_1},v_2^{one}))-P_1(\hat{v_1},v_2^{one})\ge 0$ holds because of individual rationality, and therefore $P_1(\hat{v_1},v_2^{one})\le k^2$. 

Note that $f$ clearly also allocates all items to player $1$ given $({v_1}^{all},v_2^{one})$ because of its approximation guarantee. The fact that the mechanisms is dominant-strategy incentive compatible and $f$ outputs the same allocation for both $(\hat{v_1},v_2^{one})$ and  $({v_1}^{all},v_2^{one})$ implies that $P_1({v_1}^{all},v_2^{one})=P_1(\hat{v_1},v_2^{one})$, so $P_1({v_1}^{all},v_2^{one})$ is also smaller than $k^2$, which completes the proof.

\subsubsection{Proof of Lemma \ref{lemma-small-pay-add}} \label{appsubsec-proof-lemma-add-small-pay}
We slightly abuse notation throughout the proof by writing 
$f(v_1,v_2)$ for $f(v_1,v_2,v_3^{e_3,one},\ldots,v_n^{e_n,one})$ and $P_1(v_1,v_2)$ for $P_1(v_1,v_2,v_3^{e_3,one},\ldots,v_n^{e_n,one})$ for every pair of valuations $v_1\in V_1$ and $v_2\in V_2$.  
% By Claim \ref{claim-ir-npt-norm}, the mechanism $\mathcal{M}$ together with the strategies $(\mathcal S_1,\ldots,\mathcal S_n)$ are normalized (because by assumption they satisfy individual rationality and no-negative-transfers). We will use this fact extensively in the analysis.  
% Additionally, we will make a slight abuse of notation. Note that every valuation profile $(v_1,\ldots,\allowbreak v_n)\in \mathcal V_1\times \cdots \times \mathcal V_n$ satisfies that for every $i\ge 3$ the valuation $v_i=v_i^{e_i,one}$ because $\mathcal V_i$ is a singleton. Therefore, we write $f(v_1,v_2)$ for $f(v_1,v_2,\allowbreak v_3^{e_3,one},\allowbreak\ldots,v_n^{e_n,one})$ and $P_1(v_1,v_2)$ for $P_1(v_1,v_2,\allowbreak v_3^{e_3,one},\allowbreak\ldots,v_n^{e_n,one})$.  
 % Now we can start proving the items in Lemma \ref{lemma-small-pay}. 
 
 For part \ref{item-1-add}, note that because of individual rationality:
\begin{equation}\label{eq-1-add}
v_1^{e_1,one}(f(v_1^{e_1,one},v_2^{e_2,one}))-P_1(v_1^{e_1,one},v_2^{e_2,one})\ge 0
\end{equation}
We remind that $f$ allocates a bundle that contains item $e_1$ to bidder $1$  given $(v_1^{e_1,one},v_2^{e_2,one})$, so:
\begin{equation}\label{eq-2-add}
 v_1^{e_1,one}(f(v_1^{e_1,one},\allowbreak v_2^{e_2,one}))=1   
\end{equation}
Combining (\ref{eq-1-add}) and (\ref{eq-2-add}) gives part \ref{item-1-add}. 

For part \ref{item-2-add}, note that the approximation guarantee of $f$ implies that given $(v_1^{e_1,big},v_2^{e_1,big})$, either bidder $1$ or bidder $2$ wins $e_1$. If bidder $2$ wins $e_1$, then $v_1^{e_1,big}(f(v_1^{e_1,big},v_2^{e_1,big})=0$, so this inequality combined with the fact that the mechanism is individually rational gives that: 
\begin{align*}
    v_1^{e_1,big}(f(v_1^{e_1,big},v_2^{e_1,big}))-P_1(v_1^{e_1,big},v_2^{e_1,big})&\ge 0 \\ \implies 
    P_1(v_1^{e_1,big},v_2^{e_1,big}) &\le 0
\end{align*}
% so $P_1(v_1^{e_1,k^4},v_2^{e_1,k^4},v_3^{e_3,one},\ldots,v_n^{e_n,one}) \le 0$. 
Because of no-negative-transfers, $P_1(v_1^{e_1,big},v_2^{e_1,big})=0$.

We now explain the other case in which  $f(v_1^{e_1,big},v_2^{e_1,big})$ outputs an allocation where bidder $1$ wins $e_1$. To this end, we define the following valuation of bidder $1$:
$$
\hat{v}_1(x)=\begin{cases}
        {2k^3}+k^2 \quad x=e_1, \\
    0 \quad \text{otherwise.}
\end{cases}
$$
Observe that because of the approximation guarantee, $f(\hat{v}_1,v_2^{e_1,big})$  outputs an allocation such that bidder $2$ gets $e_1$, so $\hat{v}_1(f(\hat{v}_1,v_2^{e_1,big}))=0$.  
The fact that the mechanism is individually rational and satisfies no-negative-transfers implies that $P_1(\hat{v}_1,v_2^{e_1,big})=0$. We remind that the allocation rule $f$ is realized by a dominant-strategy mechanism, so we also have that:
\begin{align*}
    \hat{v}_1(f(\hat{v}_1,v_2^{e_1,big}))-P_1(\hat{v}_1,v_2^{e_1,big}))&\ge 
\hat{v}_1(f(v_1^{e_1,big},v_2^{e_1,big})-P_1(v_1^{e_1,big},v_2^{e_1,big}) \\
\implies 0 &\ge 2k^3+k^2- P_1(v_1^{e_1,big},v_2^{e_1,big}) &\text{(by assumption)}
\end{align*}
so $P_1(v_1^{e_1,big},v_2^{e_1,big})\ge {2k^3}+k^2$, which completes the proof of part \ref{item-2-add}. The proof of part \ref{item-3-add} is analogous. 

We now prove part \ref{item-4-add}. We will first show that $f(v_1^{both},v_2^{e_2one})$ allocates bidder $1$ a bundle that contains item $e_1$. 
% bidder $1$ wins a bundle that contains the item $e_1$.
To this end, observe that:
\begin{equation}\label{eq-part4-add}
\begin{aligned}
    v_1^{both}(f(v_1^{both},v_2^{e_2,one}))-P_1(v_1^{both},v_2^{e_2,one})&\ge v_1^{both}(f(v_1^{e_1,one},v_2^{e_2,one}))-P_1(v_1^{e_1,one},v_2^{e_2,one})  \\
    &\ge 2k^2+1 
    % &\text{(by part \ref{item-1-add})}
\end{aligned}    
\end{equation}
The first inequality holds because the mechanism $\mathcal M$ is a dominant-strategy mechanism, and the second one holds because of part \ref{item-1-add} that we previously proved.  
% incentive compatibility, and the Note since the allocation rule $f$ and the payment scheme $P_1$ are realized by a dominant-strategy mechanism, we have that:
% \begin{equation}\label{eq-part4-add}
% \begin{aligned}
%     v_1^{both}(f(v_1^{both},v_2^{e_2,one}))-P_1(v_1^{both},v_2^{e_2,one})&\ge v_1^{both}(f(v_1^{e_1,one},v_2^{e_2,one}))-P_1(v_1^{e_1,one},v_2^{e_2,one}) \\
%     &\ge 2k^2+1 
%     % &\text{(by part \ref{item-1-add})}
% \end{aligned}    
% \end{equation}
% where the latter inequality holds because of part \ref{item-1-add}. 
Combining (\ref{eq-part4-add}) with the property of no-negative-transfers implies that $v_1^{both}(f(v_1^{both},v_2^{e_2,one}))\ge 2k^2+1$.
Note that $v_1^{both}$ has a value of at most $2k^2$ to bundles that do not contain $e_1$, 
so it has to be the case that 
% which implies that 
$f(v_1^{both},v_2^{e_2,one})$ 
allocates item $e_1$ to bidder $1$.
% outputs an allocation where bidder $1$ wins a bundle that contains $e_1$. 
For the upper bound on the payment, observe that because of individual rationality:
$$
4k^2+2-P_1(v_1^{both},v_2^{e_2,one})\ge v_1^{both}(f(v_1^{both},v_2^{e_2,one}))-P_1(v_1^{both},v_2^{e_2,one})\ge 0  
$$
so indeed $P_1(v_1^{both},v_2^{e_2,one})\le 4k^2+2$.

We turn our attention to part \ref{item-5-add}. To prove it,
we define another valuation:
$$
\tilde{v_1}(x)=\begin{cases}
    k^2 \quad x=e_2, \\
    0 \quad \text{otherwise.}
\end{cases}
$$
Given $(\tilde{v_1},v_2^{e_2,one})$, bidder $1$ wins a bundle that contains $e_2$
 because of the guaranteed approximation ratio of $f$. Due to the individual rationality property: 
 \begin{equation}\label{eq-part5-small-p}
  \tilde{v_1}(f(\tilde{v_1},v_2^{e_2,one}))-P_1(\tilde{v_1},v_2^{e_2,one})\ge 0 \implies k^2 \ge P_1(\tilde{v_1},v_2^{e_2,one})   
 \end{equation}
Note that $f$ clearly also allocates a bundle that contains item $e_2$ to bidder $1$ given $({v_1}^{e_2,big},v_2^{e_2,one})$ because of its approximation guarantee. By dominant-strategy incentive compatibility, we have that:
% % $$$$
% % which implies that $P_1(\tilde{v_1},v_2^{e_2,one})\le k^2$. 
% Observe that because of dominant-strategy incentive compatibility:
\begin{align*}
    {v_1}^{e_2,big}(f({v_1}^{e_2,big},v_2^{e_2,one})) - P_1(v_1^{e_2,big},v_2^{e_2,one}) \ge
    {v_1}^{e_2,big}(f(\tilde v_1,v_2^{e_2,one})) - P_1(\tilde v_1,v_2^{e_2,one})
\end{align*}
The allocation rule $f$
outputs an allocation where bidder $1$ wins item $e_2$ for both  $(\tilde{v_1},v_2^{e_2,one})$ and  $({v_1}^{e_2,big},v_2^{e_2,one})$, so  ${v_1}^{e_2,big}(f({v_1}^{e_2,big},v_2^{e_2,one}))={v_1}^{e_2,big}(f(\tilde v_1,v_2^{e_2,one}))$. By that,  $P_1(\tilde v_1,v_2^{e_2,one})\ge P_1(v_1^{e_2,big},v_2^{e_2,one})$. Combining this inequality with \ref{eq-part5-small-p} gives that $P_1(v_1^{e_2,big},v_2^{e_2,one})\le k^2$, as needed.

To prove parts \ref{item-new-add},\ref{item-6-add} and \ref{item-7-add}, we analyze the valuation profile $(v_1^{both},v_2^{e_1,big})$. Note that because of the approximation guarantee of $f$, it allocates item $e_1$ to bidder $2$, so we have part \ref{item-new-add}. For part \ref{item-6-add}, assume that  $f(v_1^{both},v_2^{e_1,big})$ allocates $e_2$ to bidder $1$. Since only items $e_1$ and $e_2$ are valuable for the valuation $v_1^{both}$, we have that $v_1^{both}(f(v_1^{both},v_2^{e_1,big}))=2k^2$, so by individual rationality, $P_1(v_1^{both},v_2^{e_1,big})\le 2k^2$, as needed. 

Analogously, for part \ref{item-7-add}: if bidder $1$ does not win $e_2$, then he wins no items that are valuable for him, so in this case $v_1^{both}(f(v_1^{both},v_2^{e_1,big}))=0$. By individual rationality and no-negative-transfers, we get that  $P_1(v_1^{both},v_2^{e_1,big})=0$, as needed.

% For part \ref{item-new-add},  We will use this fact to prove  parts \ref{item-6-add} and \ref{item-7-add}. 
% % note that $f(v_1^{both},v_2^{e_1,big})$ necessarily allocates $e_1$ to bidder $2$ because of its approximation guarantee. 

% Regarding part 
% Thus, $v_1^{both}(f(v_1^{both},v_2^{e_1,big}))\le 2k^2$, so by individual rationality, $P_1(v_1^{both},v_2^{e_1,big})\le 2k^2$, as needed.

% Part \ref{item-7-add} follows from individual rationality and no-negative-transfers:
% Since by assumption $f(v_1^{both},v_2^{e_1,big})$ does not allocate to bidder $1$ any item that is valuable for him, $v_1^{both}(f(v_1^{both},v_2^{e_1,big}))\allowbreak=0$, so by combining the properties of individual rationality with no-negative-transfers, we get that $P_1(v_1^{both},v_2^{e_1,big})=0$, which completes the proof. 

\subsubsection{Proof of Lemma \ref{lemma-small-pay-unit}} \label{subsec-lemma-proof}
Similarly to the proof of Lemma \ref{lemma-small-pay-unit}, we  abuse notation  by writing 
$f(v_1,v_2)$ for $f(v_1,v_2,v_3^{e_3,one},\ldots,\allowbreak v_n^{e_n,one})$ and $P_1(v_1,v_2)$ for $P_1(v_1,v_2,v_3^{e_3,one},\ldots,v_n^{e_n,one})$ for every pair of valuations $v_1\in V_1$ and $v_2\in V_2$. 

The proof of parts \ref{item-1-unit}, \ref{item-2-unit}, \ref{item-3-unit} and \ref{item-5-unit} are identical to the proofs of these parts in Lemma \ref{lemma-small-pay-add}, since all the valuations in all of these cases are both unit demand and additive. 
% Similarly to the proof of Lemma \ref{lemma-small-pay-add}, we abuse notation: we write $f(v_1,v_2)$ for $f(v_1,v_2,v_3^{e_3,one},\ldots,\allowbreak v_n^{e_n,one})$ and $P_1(v_1,v_2)$ for $P_1(v_1,v_2,v_3^{e_3,one},\ldots,v_n^{e_n,one})$.
% The proof of the other parts is can be derived by combining the inequalities that we obtain from the fact that the allocation rule $f$ and the payment scheme $P_1$ are realized by an obviously strategy-proof, individually rational and satisfy no-negative-transfers. 
We now prove the remaining parts. 
% By Claim \ref{claim-ir-npt-norm}, the mechanism $\mathcal{M}$ together with the strategies $(\mathcal S_1,\ldots,\mathcal S_n)$ are normalized (because by assumption they satisfy individual rationality and no-negative-transfers). We will use this fact extensively in the analysis.  

For part \ref{item-4-unit}, we first show that bidder $1$ wins a bundle that contains the item $e_1$. Note that since the allocation rule $f$ and the payment scheme $P_1$ are realized by a dominant-strategy mechanism, we have that:
\begin{equation}\label{eq-part4-unit}
\begin{aligned}
    v_1^{both}(f(v_1^{both},v_2^{e_2,one}))-P_1(v_1^{both},v_2^{e_2,one})&\ge v_1^{both}(f(v_1^{e_1,one},v_2^{e_2,one}))-P_1(v_1^{e_1,one},v_2^{e_2,one}) \\
    &\ge 2k^2+1 &\text{(by part \ref{item-1-unit})}
\end{aligned}    
\end{equation}
Combining (\ref{eq-part4-unit}) with the property of no-negative-transfers implies that $v_1^{both}(f(v_1^{both},v_2^{e_2,one}))\ge 2k^2+1$.
Note that $v_1^{both}$ has a value of at most $2k^2$ to bundles that do not contain $e_1$, so $f(v_1^{both},v_2^{e_2,one})$ necessarily outputs an allocation where bidder $1$ wins a bundle that contains $e_1$. 

For the upper bound on the payment, note that $v_1^{both}(f(v_1^{both},v_2^{e_2,one}))=v_1^{both}(f(v_1^{e_1,one},v_2^{e_2,one}))$ and that by part \ref{item-1-unit}, $P_1(v_1^{e_1,one},v_2^{e_2,one})\le 1$. Combining these inequalities with (\ref{eq-part4-unit}) gives that $P_1(v_1^{both},v_2^{e_2,one})\le 1$, that completes the proof of part \ref{item-4-unit}. 

To prove parts \ref{item-new-unit},\ref{item-6-unit} and \ref{item-7-unit}, we analyze the valuation profile $(v_1^{both},v_2^{e_1,big})$. Note that because of the approximation guarantee of $f$, it allocates item $e_1$ to bidder $2$, so we have part \ref{item-new-add}. For part \ref{item-6-add}, assume that  $f(v_1^{both},v_2^{e_1,big})$ allocates $e_2$ to bidder $1$. Since only items $e_1$ and $e_2$ are valuable for the valuation $v_1^{both}$, we have that $v_1^{both}(f(v_1^{both},v_2^{e_1,big}))=2k^2$, so by individual rationality, $P_1(v_1^{both},v_2^{e_1,big})\le 2k^2$, as needed. 

Analogously, for part \ref{item-7-add}: if bidder $1$ does not win $e_2$, then he wins no items that are valuable for him, so in this case $v_1^{both}(f(v_1^{both},v_2^{e_1,big}))=0$. By individual rationality and no-negative-transfers, we get that  $P_1(v_1^{both},v_2^{e_1,big})=0$, as needed.
\subsection{Proof  of Proposition \ref{prop-obs}}\label{subsec-prop-proof}
Showing that an obviously strategy-proof implementation is a dominant strategy implementation is trivial, so we only show the other direction: that a dominant-strategy implementation is, in fact, an obviously strategy-proof implementation. 
% The first direction is obvious. We will show the second direction.

Fix a mechanism together with the dominant strategies $\mathcal S_1,\ldots,\mathcal S_n$ that realize an allocation rule  $f$ together with payment schemes $P_1,\ldots,P_n$.   

Assume towards a contradiction that there exists a player $i$ such that the
strategy $\mathcal S_i$ is not obviously dominant. Thus, by definition, there exists a valuation $v_i$ such that the behavior $\mathcal S_i(v_i)$ is not obviously dominant for it. It implies that there is a vertex $u\in \mathcal N_i$ that is attainable given the behavior $\mathcal S_i(v_i)$ and behavior profiles 
% behavior $\mathcal S_i(v_i)$ is not obviously dominant. Thus, there exist behaviors 
$(B_1',\ldots,B_n') \in \mathcal B_1\times\cdots\times \mathcal B_n$ and
$ B_{-i}\in \mathcal B_{-i}$ 
such that:
\begin{equation}\label{eq-obs-dominant}
v_i(f_i(\mathcal S_i(v_i),B_{-i}))-p_i(\mathcal S_i(v_i),B_{-i})<v_i(f_i(B_i',B_{-i}'))-p_i(B_i',B_{-i}')    
\end{equation}
where $u\in Path(\mathcal S_i(v_i),B_{-i})\cap Path(B_{i}',B_{-i}')$ and $\mathcal S_i(v_i)$ and $B_i'$ dictate different messages at vertex $u$. 
% Denote $Leaf(\mathcal S_i(v_i),B_{-i})$ $\ell^G$ and  $Leaf(B_{i}',B_{-i}')$ with $\ell^B$. 

To reach a contradiction, we construct a strategy profile for the bidders in $N\setminus \{i\}$, which we denote with $\mathcal S_{-i}'$. This strategy profile is \textquote{constant}, in the sense that for every valuation profile $v_{-i}\in V_{-i}$, it outputs the same behavior profile, which we denote with $B_{-i}''$.  

The behavior profile  $B_{-i}''$ is as follows. First, for every vertex that is not a descendant of $u$, $B_{-i}''$ outputs the same messages as the messages specified by $B_{-i}$. We now describe the behavior of $B_{-i}''$ in the subtree  rooted at vertex $u$. For that,
let us denote with $n,n'$ be the subsequent nodes of $u$ given the message that the behaviors $\mathcal S_i(v_i)$ and $B_i'$ dictate, respectively. For all the vertices at the subtree rooted at vertex $n$,  the behavior profile
$B_{-i}''$ specifies the same message as $B_{-i}$. 
Similarly, at the subtree rooted at vertex $n'$, the behavior profile $B_{-i}''$ outputs at every vertex the same message as $B_{-i}'$ dictates. For every other vertex, $B_{-i}''$ outputs some arbitrary message.

Fix an arbitrary valuation profile $v_{-i}$ of the players in $N\setminus\{i\}$.  
% Note that $Leaf(\mathcal S_i(v_i),S_{-i}'(v_{-i}))=\ell^B$
The construction of the strategy profile $\mathcal S_{-i}'$ guarantees that the behavior profile $(\mathcal S_i(v_i),\mathcal S_{-i}'(v_{-i}))$ reaches the same leaf as $(\mathcal S_i(v_i),B_{-i})$, so $f_i(\mathcal S_i(v_i),B_{-i}))=f_i(\mathcal S_i(v_i),\mathcal S_{-i}'(v_{-i}))$ and $p_i(\mathcal S_i(v_i),B_{-i})=p_i(\mathcal S_i(v_i), \mathcal S_{-i}'(v_{-i}))$. 
Denote with $\mathcal S_i'$ be the strategy of player $i$ that outputs for every $v_i\in V_i$ the behavior $B_i'$. Similarly to before, $f_i(B_i',B_{-i}')=f_i(\mathcal S_i'(v_i),\mathcal S_{-i}'(v_{-i}))$ and $p_i(B_i',B_{-i}')=p_i(\mathcal S_i'(v_i),\mathcal S_{-i}'(v_{-i})$. Plugging these equalities in (\ref{eq-obs-dominant}) gives:
$$
v_i(f_i(\mathcal{S}_i(v_i),\mathcal{S}_{-i}'(v_{-i}))) - p_i(\mathcal{S}_i(v_i),\mathcal{S}_{-i}'(v_{-i})) <v_i(f_i(\mathcal{S}_i'(v_i),\mathcal{S}_{-i}'(v_{-i}))) - p_i(\mathcal{S}_i'(v_i),\mathcal{S}_{-i}'(v_{-i}))
$$
So by definition, the strategy $\mathcal S_i$ is not  dominant, so we get a contradiction, completing the proof. 
\end{document}